\def\dis{\displaystyle}
\newenvironment{proof}[1][Proof]{\noindent\textbf{#1.} }{\ \rule{0.5em}{0.5em}}
\begin{document}

\newtheorem{Thm}{Theorem}[section]
\newtheorem{Cor}[Thm]{Corollary}
\newtheorem{Lem}[Thm]{Lemma}
\newtheorem{Prop}[Thm]{Proposition}
\newtheorem{Def}[Thm]{Definition}
\newtheorem{rem}[Thm]{Remark}

\newtheorem{Assump}[Thm]{Assumption}
 \baselineskip 0.20in

\title{\textbf{Heavy-Ball-Based Hard Thresholding Algorithms for  Sparse Signal Recovery}\thanks{The work was founded by the National Natural Science Foundation of China
(NSFC\#12071307 and 11771255), Young Innovation Teams of Shandong
Province (\#2019KJI013), and  Domestic and Oversea  Visiting Program for  the Middle-aged and Young Key Teachers of  Shandong University of Technology.}}

\author{Zhong-Feng Sun\thanks{School of Mathematics and Statistics, Shandong
University of Technology, Zibo, Shandong,  China. (e-mail: {\tt
zfsun@sdut.edu.cn}).}~, Jin-Chuan Zhou\thanks{School of Mathematics and Statistics, Shandong
University of Technology, Zibo, Shandong,  China.  (e-mail: {\tt
jinchuanzhou@163.com}).}~,  Yun-Bin Zhao\thanks{Corresponding author. Shenzhen Research Institute of Big Data, Chinese University of Hong Kong, Shenzhen,  Guangdong, China.  (e-mail: {\tt
yunbinzhao@cuhk.edu.cn}).}~, and   Nan Meng\thanks{School of Mathematics, University of Birmingham, Edgbaston, Birmingham B15 2TT, United
Kingdom.
  (e-mail: {\tt nxm563@bham.ac.uk}).}}

\date{ }

\maketitle

\noindent
 \textbf{Abstract.} The hard thresholding technique plays a vital role in the  development of algorithms for sparse signal recovery. By merging this technique and heavy-ball acceleration method which is a multi-step  extension
of the traditional gradient descent method, we propose the so-called heavy-ball-based hard thresholding (HBHT) and heavy-ball-based hard thresholding pursuit (HBHTP) algorithms for signal recovery.
It turns out that the HBHT and HBHTP can successfully recover a $k$-sparse signal if the restricted isometry constant of the measurement matrix  satisfies $\delta_{3k}<0.618 $ and $\delta_{3k}<0.577,$ respectively.  The  guaranteed success of HBHT and HBHTP  is also shown under the conditions $\delta_{2k}<0.356$ and $\delta_{2k}<0.377,$ respectively. Moreover, the finite convergence and stability of the two algorithms are also established in this paper.  Simulations on random problem instances are performed to compare the performance of the proposed algorithms and several existing ones. Empirical results indicate that the HBHTP performs very comparably to a few existing algorithms and it takes less average time to achieve the signal recovery than these existing methods. \\

\noindent
\textbf{Key words: }  Compressed sensing, Heavy-ball method, Sparse signal recovery,  Hard thresholding algorithm, Restricted isometry property, Phase transition.


\section{Introduction}

In compressed sensing scenarios, one needs to recover a sparse signal $x\in \mathbb{R}^n$ from linear measurements $y:= Ax+\nu$, where $\nu \in \mathbb{R}^m  $ are measurement errors and $ A$ is a known $ m\times n$ measurement matrix with $m \ll n.$  When $ \nu=0$,  the measurements $y$ are accurate. To recover the signal $x$ in such an environment, one may consider the optimization model
 \begin{equation}\label{main-optimal-eq}
\underset{z}{\min}\{ {\left\lVert y-Az\right\rVert}_2^2: \left\lVert z\right\rVert_0\leq k\},
\end{equation}
where $ k$ (a given integer number) is an estimate of the sparsity level of $x,$  and $ \left\lVert z\right\rVert_0$ denotes the number of nonzero entries
of $z\in \mathbb{R}^{n}.$  In this paper, a vector $z$ is said to be $k$-sparse if $\left\lVert z\right\rVert_0\leq k$. It is well known that when $x$ is $k$-sparse and $A$ satisfies certain assumptions,  $x$ will be the unique $k$-sparse solution to the  problem  (\ref{main-optimal-eq}) (see, e.g., \cite{elad2010sparse,Foucart-2013,zhao2018sparse}). Thus the recovery of $x$ often amounts to solving the problem (\ref{main-optimal-eq}), and the algorithms for such a prpoblem are usually called compressed sensing algorithms or, in more general, sparse optimization algorithms.

Let us first briefly review the thresholding algorithms for sparse signal recovery. The thresholding technique was introduced by Donoho and  Johnstone \cite{donoho1994ideal}. At present,  there are three main classes of thresholding algorithms: hard thresholding \cite{blanchard2015cgiht, blumensath2012accelerated, blumensath2008iterative, blumensath2009iterative, blumensath2010normalized, cevher2011accelerated, Foucart-2011, khanna2018iht, kyrillidis2014matrix}, soft thresholding \cite{daubechies2004iterative,donoho1995noising,elad2006simple}, and optimal thresholding \cite{meng2021partial,Zhao-2020,Zhao-2020-Analysis}. A huge amount of  work has been carried out for the class of hard thresholding algorithms. For instance, the iterative hard thresholding (IHT)  was early studied in  \cite{blumensath2008iterative}, which is a combination of the  gradient descent and hard thresholding technique. The IHT admits a few modifications including  the gradient descent with sparsification (GDS) \cite{garg2009gradient} and the normalized iterative hard thresholding  with a fixed or adaptive steplength \cite{blumensath2012accelerated,blumensath2010normalized}. In addition, combining IHT \cite{blumensath2008iterative,blumensath2010normalized,garg2009gradient} and  orthogonal projection immediately leads to the hard thresholding pursuit (HTP)  in \cite{Foucart-2011}. The thresholding methods combined with Nesterov's acceleration technique were also studied (e.g.,  \cite{cevher2011accelerated,khanna2018iht,kyrillidis2014matrix}).
Recently, it was pointed out in \cite{Zhao-2020} that performing hard thresholding is usually independent of the reduction of the residual $\|y-Az\|_2^2$ and thus the so-called optimal $k$-thresholding operator was proposed in \cite{Zhao-2020,Zhao-2020-Analysis}. Nevertheless, the optimal $k$-thresholding  algorithm need to solve a quadratic convex optimization problem at every iteration which requires more computational time than the traditional hard thresholding  methods.

The aim of this paper is to use the heavy ball method to accelerate the hard thresholding algorithms without increasing its computational complexity.  Recall that the search direction at the iterate $x^p$ in IHT and HTP is given by $A^ T(y-Ax^p)$, which is the negative gradient of the residual at $ x^p.$ With the aid of  the momentum term $x^p-x^{p-1}$, the search direction can be  modified to   \begin{equation} \label{DIRECT}
d^{p}=\alpha A^ T(y-Ax^p)+\beta (x^p-x^{p-1})
\end{equation} with two parameters $\alpha>0$ and $  \beta \geq 0. $ This is the two-step heavy-ball method proposed for optimization problems by Polyak \cite{polyak1964some}. It
has been shown that a fast local convergence of this method for optimization problems can be achieved provided that the parameters are properly chosen, and that the method can  work even when the Hessian matrix of the objective function is ill-posed  (see, e.g., Chapter 3 in \cite{polyak1987introduction}). The global convergence of the heavy ball method has also been discussed in the literature \cite{ghadimi2015global,gurbuzbalaban2017on,lessard2016analysis,xin2020distributed}. This method is widely used in such fields  as distributed optimization \cite{gurbuzbalaban2017on,xin2020distributed}, variational inequality \cite{Huang2021A},  wireless network \cite{liu2016heavy}, nonconvex optimization \cite{ochs2014ipiano,Sun-2019-Heavy},  deep neural network \cite{tao2021the},  and image restoration \cite{wang2014scaled}. For instance, it was found in \cite{Sun-2019-Heavy} that the
heavy ball momentum plays an important role in driving the iterates away from the saddle points of nonconvex optimization problems;
It was also used in \cite{wang2014scaled} to accelerate the Richardson-Lucy algorithm in image deconvolution without causing a remarkable increase of iteration complexity;
Xin and Khan \cite{xin2020distributed} observed that the distributed heavy ball method achieves a global $R-$linear rate for distributed optimization, and the momentum term can dramatically improves the convergence of the algorithm for ill-conditioned objective functions. These and other applications indicate that the heavy-ball method does admit certain advantage in enhancing the efficiency of an iterative method for optimization problems and  may outperform the extra-point method and Nesterov acceleration method.

Motivated by the numerical advantage of heavy ball acceleration technique,  we propose the heavy-ball-based hard thresholding (HBHT) and heavy-ball-based hard thresholding pursuit (HBHTP) algorithms for the recovery problem (\ref{main-optimal-eq}).
The  guaranteed performance of the two algorithms are shown under the assumption of restricted isometry property (RIP), which was
originally introduced by Cand\`{e}s and Tao \cite{Candes-2005-Decoding} and has now become a standard tool for the analysis of various compressed sensing algorithms.
It is well known that the success of IHT for $k$-sparse signal recovery can be guaranteed under the RIP condition $\delta_{3k}<  (\sqrt{5}-1)/2 \approx 0.618$ (see \cite{Zhao-2020-Improved}) and  that of HTP can be guaranteed  under  $\delta_{3k}<1/\sqrt{3} \approx 0.577$ (see  \cite{Foucart-2011}).
Under the same condition and a proper choice of algorithmic parameters, we establish the guaranteed-performance results for the two algorithms HBHT and HBHTP. Roughly speaking, we show that the HBHT is convergent under the  condition $\delta_{3k}<(\sqrt{5}-1)/2$  and that  HBHTP is convergent under the condition $\delta_{3k}<1/\sqrt{3}.$  By using an analysis method in \cite{Zhao-2020-Analysis}, we further prove that the condition for theoretical performance of the two algorithms can be established in term of  $\delta_{2k}$ as well. Specifically,   the guaranteed success of HBHT and HBHTP can be ensured if $\delta_{2k}<(\sqrt{5}-1)/(2\sqrt{3}) \approx 0.356$  and $\delta_{2k}<1/\sqrt{7} \approx0.377$, respectively.  Moreover, the finite convergence and recovery stability of the two methods are also shown in this paper.

  A large amount of experiments on random problem instances of sparse signal recovery are performed to investigate the success rate and phase transition features of the proposed algorithms. We also compare the performances of the proposed algorithms and several existing ones such as orthogonal matching pursuit (OMP) \cite{davis1994adaptive,tropp2007signal}, compressive sampling matching pursuit (CoSaMP) \cite{needell2009cosamp},  subspace pursuit (SP) \cite{dai2009subspace}, IHT and HTP. The empirical results show that incorporating heavy ball technique into IHT and HTP  does remarkably improve the performance of IHT and HTP, respectively. The HBHTP  not only admits robust signal recovery ability in both noisy and noiseless scenarios, but also takes relatively less average computational time to achieve the recovery success compared to a few existing methods.

The paper is structured as follows. In Section \ref{sec-prelim}, we described the HBHT and HBHTP algorithms and list some notations and useful inequalities. The theoretical analysis of the proposed algorithms is conducted in Sections \ref{convergence} and \ref{stability}. Numerical results are given in Section \ref{simulation}, and conclusions are drawn in the last section.

\section{Preliminary and algorithms}\label{sec-prelim}
\subsection{Notation}
Denote by $N:=\{1,2,\ldots,n\}.$  For a subset $\Omega\subseteq N,$  let $\overline{\Omega}:=N\setminus\Omega$ and $|\Omega|$ denote the complement set  and the cardinality of $\Omega$, respectively. Given a vector $z\in\mathbb{R}^{n}$, the index set $supp(z):=\{i\in N:z_i\neq 0\}$ denotes the support of $z$, and $z_\Omega\in \mathbb{R}^n$ is the vector with entries
$$(z_\Omega)_i=\left\{\begin{array}{cc}
z_i, &i\in \Omega,\\
0, &i\notin\Omega.\\
\end{array}\right.$$
Let $\mathcal{L}_k(z)$ be the index set of the  $k$ largest absolute entries of $z,$ and let $\mathcal{H}_k(\cdot)$ be the hard thresholding operator which retains the  $k$ largest magnitudes and zeroing out other entries of a vector.  The $k$-sparse vector $\mathcal{H}_k(z)$ is the best $k$-term approximation of $ z\in \mathbb{R}^n. $   Denote by $\sigma_k(z)_q,$ where $ q>0$ is an integer number, the residual of the best $k$-term approximation of $z$, i.e.,
$$ \sigma_k(z)_q= \min_{u}\{ \|z-u\|_q: \| u\|_0 \leq k\}. $$

\subsection{Basic inequalities}

We first recall the restricted isometry constant (property) of a given measurement matrix.

\begin{Def} \emph{\cite{Candes-2005-Decoding}} \label{def-RIC}
Let $A\in \mathbb{R}^{m\times n}$ with $m< n$ be a matrix. The restricted isometry constant (RIC) of order $k$, denoted
$\delta_k,$ is the smallest number  $\delta\geq 0$ such that
\begin{equation}\label{def-RIC-1}
(1-\delta){\left\lVert u \right\rVert}^2_2\leq  {\left\lVert Au \right\rVert}^2_2\leq (1+\delta){\left\lVert u \right\rVert}^2_2
\end{equation}
for all $k$-sparse vectors $u\in \mathbb{R}^n$(i.e., $\left\lVert u\right\rVert_0 \leq  k$). If $\delta_k<1$, then $A$ is
said to satisfy the restricted isometry property (RIP) of order
$k$.
\end{Def}

From the definition above, one can see that $ \delta_t\leq \delta_s $ for any integer number $ t \leq s. $ The following properties of RIC have been frequently used in the analysis of compressed sensing algorithms
.
\begin{Lem} \emph{\cite{Foucart-2011,Zhao-2020}} \label{lem-1}
Let $v \in \mathbb{R}^n$ be a vector, $t \in N$ be a positive integer number and $W\subseteq N$ be an index set. 
\begin{itemize}
\item [(i)]  If  $ |W\cup supp(v)|\leq t$, then
$
 \left\lVert \left ( (I-A^ T A)v\right)_W\right\rVert_2\leq \delta_t {\left\lVert v \right\rVert}_2.
$

\item[(ii)] If  $ |W|\leq t$, then
$
\left\lVert \left ( A^ Tv\right)_W\right\rVert_2\leq \sqrt{1+\delta_t} {\left\lVert v \right\rVert}_2.
$
\end{itemize}
\end{Lem}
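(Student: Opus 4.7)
The plan is to reduce both parts to the defining inequality of the RIP by rewriting the squared norm on the left as an inner product. The universal trick is that for any vector $z$ and index set $W$, one has $\|z_W\|_2^2 = \langle z_W, z_W \rangle = \langle z_W, z \rangle$, because $z_W$ has zero entries outside $W$. This identity converts each left-hand norm into an inner product that interacts cleanly with $A$.

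For part (i), I would set $u := ((I - A^T A)v)_W$ and use the identity above to write
\[
\|u\|_2^2 = \langle u, (I - A^T A)v \rangle = \langle u, v \rangle - \langle Au, Av \rangle.
\]
Since $supp(u) \subseteq W$ and $supp(v) \subseteq supp(v)$, both $u$ and $v$ are supported on the common set $W \cup supp(v)$, whose cardinality is at most $t$. Hence $u \pm v$ are $t$-sparse vectors to which Definition \ref{def-RIC} applies. Using the standard sparse approximate-orthogonality estimate obtained from the parallelogram identity, namely $|\langle u, v \rangle - \langle Au, Av \rangle| \leq \delta_t \|u\|_2 \|v\|_2$, and dividing by $\|u\|_2$ (the case $u=0$ being trivial), I obtain $\|u\|_2 \leq \delta_t \|v\|_2$.

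For part (ii), I would set $w := (A^T v)_W$ and apply the same reformulation: $\|w\|_2^2 = \langle w, A^T v \rangle = \langle Aw, v \rangle$. Cauchy-Schwarz gives $\|w\|_2^2 \leq \|Aw\|_2 \|v\|_2$, and since $|W| \leq t$ makes $w$ a $t$-sparse vector, the RIP bound $\|Aw\|_2 \leq \sqrt{1 + \delta_t}\, \|w\|_2$ applies. Dividing by $\|w\|_2$ finishes part (ii).

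The only nontrivial ingredient is the sharp cross-term inequality $|\langle u, v \rangle - \langle Au, Av \rangle| \leq \delta_t \|u\|_2 \|v\|_2$ used in part (i): the parallelogram identity together with RIP applied to $u \pm v$ yields only the weaker bound $\frac{\delta_t}{2}(\|u\|_2^2 + \|v\|_2^2)$. To sharpen it to the geometric-mean form, I would apply this weaker inequality to the rescaled pair $(s u, v/s)$ for a scalar $s > 0$ and minimize the resulting bound $\frac{\delta_t}{2}(s^2 \|u\|_2^2 + s^{-2}\|v\|_2^2)$ over $s$; the optimal choice $s^2 = \|v\|_2/\|u\|_2$ produces exactly $\delta_t \|u\|_2 \|v\|_2$. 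Part (ii) carries no analogous subtlety and is essentially a one-line consequence of RIP and Cauchy-Schwarz.
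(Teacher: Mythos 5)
Your proof is correct. The paper itself states this lemma without proof, citing \cite{Foucart-2011,Zhao-2020}, and your argument --- reducing $\lVert z_W\rVert_2^2$ to the inner product $\langle z_W, z\rangle$, invoking the polarization/rescaling bound $|\langle u,v\rangle - \langle Au, Av\rangle| \leq \delta_t \lVert u\rVert_2 \lVert v\rVert_2$ for vectors jointly supported on at most $t$ indices, and using Cauchy--Schwarz plus RIP for part (ii) --- is exactly the standard derivation given in those references, so nothing further is needed.
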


The next lemma is taken directly from  \cite{Zhao-2020-Improved}, and can also be implied  from the result in \cite{shen2018a}.

\begin{Lem} \emph{\cite{Zhao-2020-Improved}} \label{lem-optimal}
For any vector $z \in \mathbb{R}^n$ and for any $k$-sparse
vector $x\in \mathbb{R}^n$, one has
$$
\left\lVert x-\mathcal{H}_k(z)\right\rVert_2 \leq
\eta \left\lVert \left (x - z\right)_{W\cup W^*}\right\rVert_2,
$$
where $\eta=(\sqrt{5}+1)/2,$ $W= supp(x)$ and $W^*= supp(\mathcal{H}_k(z))$.
\end{Lem}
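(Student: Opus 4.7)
The plan is to reduce everything to the disjoint decomposition $W \cup W^* = S_1 \sqcup S_2 \sqcup (W \cap W^*)$, where $S_1 := W \setminus W^*$ and $S_2 := W^* \setminus W$, and then combine a support-swap estimate with a sharp Young's inequality. First I would expand coordinate-wise. On $S_1$, $\mathcal{H}_k(z)$ vanishes, so the contribution is $\|x_{S_1}\|_2^2$; on $S_2$, $x$ vanishes and $\mathcal{H}_k(z) = z$, so the contribution is $\|z_{S_2}\|_2^2$; on $W \cap W^*$ and outside $W \cup W^*$ the hard thresholding operator has no effect on the comparison. This gives
\[
\|x - \mathcal{H}_k(z)\|_2^2 = \|x_{S_1}\|_2^2 + \|z_{S_2}\|_2^2 + \|(x-z)_{W \cap W^*}\|_2^2,
\]
while
\[
\|(x-z)_{W \cup W^*}\|_2^2 = \|(x-z)_{S_1}\|_2^2 + \|z_{S_2}\|_2^2 + \|(x-z)_{W \cap W^*}\|_2^2.
\]
Since $\eta > 1$, the $\|z_{S_2}\|_2^2$ and $\|(x-z)_{W \cap W^*}\|_2^2$ terms match up for free, so everything reduces to bounding $\|x_{S_1}\|_2^2$ by $\eta^2\bigl(\|(x-z)_{S_1}\|_2^2 + \|z_{S_2}\|_2^2\bigr)$.

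The second step is the support-swap estimate. By the defining property of $\mathcal{H}_k$, every $|z_i|$ with $i \in S_1 \subseteq \overline{W^*}$ is dominated by every $|z_j|$ with $j \in S_2 \subseteq W^*$. In the nontrivial case $|S_1| \leq |S_2|$ (which holds because $|W| \leq k = |W^*|$ whenever $\mathcal{H}_k(z)$ is genuinely $k$-sparse; the other case collapses to $\mathcal{H}_k(z) = z$ and makes the bound immediate), a standard $\ell_\infty/\ell_2$ comparison gives
\[
\|z_{S_1}\|_2 \leq \sqrt{|S_1|/|S_2|}\,\|z_{S_2}\|_2 \leq \|z_{S_2}\|_2.
\]
Combined with the triangle inequality, this yields $\|x_{S_1}\|_2 \leq \|(x-z)_{S_1}\|_2 + \|z_{S_2}\|_2$.

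Third, I would apply Young's inequality $(a+b)^2 \leq (1+t)a^2 + (1+1/t)b^2$ with $a = \|(x-z)_{S_1}\|_2$ and $b = \|z_{S_2}\|_2$ for a free parameter $t>0$. Substituting back produces
\[
\|x - \mathcal{H}_k(z)\|_2^2 \leq (1+t)\,a^2 + (2 + 1/t)\,b^2 + \|(x-z)_{W \cap W^*}\|_2^2,
\]
and I want the smallest $\eta$ satisfying $\eta^2 \geq \max\{1+t,\,2+1/t\}$. Optimizing by setting $1+t = 2 + 1/t$, i.e., $t^2 - t - 1 = 0$, yields the golden ratio $t = (1+\sqrt{5})/2$ and $\eta^2 = 1+t = (3+\sqrt{5})/2 = \bigl((1+\sqrt{5})/2\bigr)^2$, exactly the constant in the statement.

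The main obstacle is recognizing that the sharp constant arises from this two-parameter trade-off and not from a naive triangle inequality (which would give the suboptimal factor $2$). The only other care needed is the bookkeeping of signs and supports in the coordinate-wise decomposition, and a short verification that the edge case $|S_1| > |S_2|$ forces $z_{S_1}=0$ and hence the bound holds trivially there.
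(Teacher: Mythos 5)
Your proof is correct. Note that the paper itself does not prove this lemma --- it is imported verbatim from the cited reference \cite{Zhao-2020-Improved} --- so there is no in-paper argument to compare against; your write-up is a valid self-contained reconstruction of the known proof. The three ingredients are all sound: the orthogonal decomposition over $S_1\sqcup S_2\sqcup(W\cap W^*)$ is exact because $\mathcal{H}_k(z)$ agrees with $z$ on $W^*$ and vanishes off it; the support-swap bound $\lVert z_{S_1}\rVert_2\le\sqrt{|S_1|/|S_2|}\,\lVert z_{S_2}\rVert_2$ is justified since $|S_1|=|W|-|W\cap W^*|\le k-|W\cap W^*|=|S_2|$ whenever $|W^*|=k$ (and the case $|W^*|<k$ forces $\mathcal{H}_k(z)=z$ with $\mathrm{supp}(x-z)\subseteq W\cup W^*$, so the bound is trivial, as you note); and the Young-inequality optimization $1+t=2+1/t$ correctly produces $t^2-t-1=0$ and hence $\eta^2=1+t=\eta+1=(3+\sqrt5)/2$, i.e., $\eta=(\sqrt5+1)/2$. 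Your closing observation is also the right one: the naive triangle inequality only yields the constant $2$ (or $\sqrt{2}$ on the squared quantities), and the golden-ratio constant comes precisely from balancing the two coefficients $1+t$ and $2+1/t$; this is exactly the mechanism behind the sharp bound in \cite{shen2018a,Zhao-2020-Improved}.
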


\subsection{Algorithm}

We now describe the algorithms in this paper. The basic idea is to use the search direction $d^p$ given by (\ref{DIRECT}), resulted from the heavy-ball acceleration method, to generate a point $ u^p= x^p+d^p$ where $ x^p$ denotes the current iterate of the algorithm. Then performing a hard thresholding on $ u^p$ to produce the next iterate.   This idea leads to Algorithm 1 for the recovery problem (\ref{main-optimal-eq}).

\begin{algorithm}\caption{Heavy-Ball-Based Hard Thresholding (HBHT)}
Input $(A,y,k)$ and two parameters $\alpha> 0$ and $ \beta\geq 0 $ and two initial points $x^0 $ and $ x^1$.
\begin{itemize}
\item[ S1.] At $x^p$, set
\begin{equation}\label{algorithm-1-1}
u^p=x^p+\alpha A^ T(y-Ax^p)+\beta (x^p-x^{p-1}).
\end{equation}
[Note: The term $\beta (x^p-x^{p-1})$ is called the momentum term in heavy ball method.]

\item[S2.] Let $$x^{p+1}=\mathcal{H}_k(u^p).$$
 Repeat the above steps until a certain stopping criterion is satisfied.

\end{itemize}
\end{algorithm}

We may treat the point $ \mathcal{H}_k(u^p)$ in HBHT as an intermediate point and perform a pursuit step (i.e., orthogonal projection) to generate the next iterate $ x^{p+1}.$  This leads to Algorithm 2 called HBHTP.

\begin{algorithm}
\caption{Heavy-Ball-Based Hard Thresholding Pursuit (HBHTP).}
Input $(A,y,k)$ and two parameters $\alpha> 0$ and $ \beta\geq 0 $ and two initial points $x^0 $ and $ x^1.$
\begin{itemize}
\item[ S1.] At $x^p$, set
$$
u^p=x^p+\alpha A^ T(y-Ax^p)+\beta (x^p-x^{p-1}).
$$

\item[S2.] Let $S^{p+1}=\mathcal{L}_k(u^p)$, and
 \begin{equation}\label{algorithm-2-3}
x^{p+1}=\arg \min_{z\in \mathbb{R}^n} \{ {\left\lVert y-Az\right\rVert}_2^2: supp(z)\subseteq S^{p+1}\}.
\end{equation}
Repeat the above steps until a certain stopping criterion is satisfied.
\end{itemize}
\end{algorithm}

Clearly, the HBHT and HBHTP reduce to IHT and HTP, respectively, when $\alpha=1$ and $ \beta=0.$
The initial point $ x^0$ and $ x^1$ can be any vectors. The simplest choice is $ x^0= x^1=0.$  To stop the algorithms, one can set the maximum number of iterations or use other stopping criteria such as $ \|y-Ax^p\|_2 \leq \varepsilon ,$ where $ \varepsilon >0$ is a small tolerance. For instance, if the measurements $y$  are accurate enough, then measurement error $ \|\nu \|_2= \|y-Ax\|_2$ would be very small. In such a case, it makes sense to use the stopping criterion $ \|y-Ax^p\|_2 \leq \varepsilon .$  It is also worth mentioning that we treat the parameters $\alpha$ and $\beta $ as fixed input data of the proposed algorithms for simplicity and convenience of analysis  throughout the paper. However, it should be pointed out that these parameters can be updated from step to step in order to get a better performance of the algorithms from both theoretical and practical viewpoints. This might be an interesting future work.

\section{Analysis of HBHT and HBHTP} \label{convergence}

In this section, we analyze the performance of HBHT and HBHTP  under the RIP of order $3k$ and $2k$, respectively. We also discussed the finite convergence of HBHTP under some conditions. Since  the heavy-ball method is a two-step method in the sense that the next iterate $ x^{p+1}$ is generated based on the previous two iterates $ x^{p}$ and $ x^{p-1}$, the analysis of HBHT and HBHTP  is remarkably different from the traditional IHT and HTP. To this need, we first establish the following useful lemma.

\begin{Lem}\label{lem-two-level-geometric}
Suppose that the nonnegative sequence $\{a^p\}\subseteq\mathbb{R} ~ (p= 0, 1, \dots)$ satisfies
\begin{equation}\label{lem-tlgeom-1}
a^{p+1}\leq b_1a^{p}+b_2a^{p-1}+b_3,\ \ p\geq 1,
\end{equation}
where  $b_1,b_2,b_3\geq 0 $ and $b_1+b_2<1$. Then
\begin{equation}\label{lem-tlgeom-2}
a^{p}\leq\theta^{p-1}\left[a^{1}+(\theta-b_1)a^{0})\right]+\frac{b_3}{1-\theta},
\end{equation}
with
\begin{equation*}
0\leq \theta:=\frac{b_1+\sqrt{b_1^2+4b_2}}{2} <1.
\end{equation*}

\end{Lem}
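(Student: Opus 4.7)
The plan is to convert the two-step recursion into a one-step geometric recursion by constructing an auxiliary sequence of the form $A^p := a^p + \mu \, a^{p-1}$ for a carefully chosen $\mu \geq 0$. I expect the whole proof to hinge on picking $\mu$ so that the inequality collapses cleanly, after which the bound will follow by straightforward iteration.

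First I would verify two algebraic facts about $\theta$. By construction $\theta$ is the positive root of $\lambda^{2} - b_{1}\lambda - b_{2} = 0$, so $\theta^{2} = b_{1}\theta + b_{2}$, equivalently $\theta - b_{1} = b_{2}/\theta \geq 0$. To see $\theta < 1$, observe that $b_{1}+b_{2}<1$ gives $\sqrt{b_{1}^{2}+4b_{2}} < 2 - b_{1}$ after squaring (valid since $b_{1} < 1$), so $\theta = (b_{1} + \sqrt{b_{1}^{2}+4b_{2}})/2 < 1$.

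Next I would choose $\mu := \theta - b_{1} = b_{2}/\theta$ and set $A^{p} := a^{p} + \mu\,a^{p-1}$. Applying the hypothesis (\ref{lem-tlgeom-1}) and substituting $b_{2} = \theta\mu$, I compute
\begin{equation*}
A^{p+1} = a^{p+1} + \mu a^{p} \leq b_{1}a^{p} + b_{2}a^{p-1} + b_{3} + \mu a^{p} = (b_{1}+\mu) a^{p} + \theta \mu \, a^{p-1} + b_{3} = \theta A^{p} + b_{3},
\end{equation*}
since $b_{1}+\mu = \theta$. This is the desired one-step contraction for $A^{p}$.

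Finally I would iterate this bound from $p=1$: a routine induction yields $A^{p} \leq \theta^{p-1} A^{1} + b_{3}\sum_{j=0}^{p-2}\theta^{j} \leq \theta^{p-1} A^{1} + b_{3}/(1-\theta)$, where the geometric series is summable because $\theta<1$. Since $a^{p-1} \geq 0$ and $\mu \geq 0$, we have $a^{p} \leq A^{p}$, and plugging in $A^{1} = a^{1} + (\theta - b_{1}) a^{0}$ delivers (\ref{lem-tlgeom-2}). The only genuinely non-mechanical step is spotting that the identity $b_{2}/\theta = \theta - b_{1}$ is exactly what makes the coefficients of $a^{p}$ and $a^{p-1}$ line up on the two sides of the contraction, which is the obstacle that the characteristic-root definition of $\theta$ is tailored to overcome.
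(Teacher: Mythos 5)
Your proof is correct and is essentially the paper's own argument: your $\mu=\theta-b_1$ is exactly the paper's $q_1=\frac{-b_1+\sqrt{b_1^2+4b_2}}{2}$, and the auxiliary sequence $A^p=a^p+\mu a^{p-1}$ with the contraction $A^{p+1}\leq\theta A^p+b_3$ is precisely the telescoping the paper performs. The only cosmetic caveat is that $b_2/\theta$ is undefined when $b_1=b_2=0$, but since you also define $\mu=\theta-b_1$ directly and only use the identity $b_2=\theta\mu$ (which holds trivially there), nothing breaks.
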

\begin{proof}
Denote by $q_1:=\frac{-b_1+\sqrt{b_1^2+4b_2}}{2}$. Note that   $b_1, b_2\geq 0$ and $ b_1+b_2<1$. It is straightforward to verify that $q_1\geq 0, ~ (b_1+q_1)q_1=b_2$ and
\begin{equation*}
0\leq \theta=\frac{b_1+\sqrt{b_1^2+4b_2}}{2}=b_1+q_1 <1,
\end{equation*}
where $ \theta<1$ follows from the condition $ b_1+b_2<1. $
Thus it follows from \eqref{lem-tlgeom-1} that
$$
a^{p+1}+q_1a^{p}\leq (b_1+q_1)a^{p}+b_2a^{p-1}+b_3
=\theta(a^{p}+q_1a^{p-1})+b_3,
$$
which implies
\begin{align*}
a^{p+1}\leq    a^{p+1}+q_1a^{p}
\leq   & \theta^p(a^{1}+q_1a^{0})+b_3(1+\theta+\ldots+\theta^{p-1})\\
\leq &   \theta^p\left[a^{1}+(\theta-b_1)a^{0})\right]+\frac{b_3}{1-\theta}.
\end{align*}
Thus the relation \eqref{lem-tlgeom-2} holds.
\end{proof}

\subsection{Guaranteed performance under RIP of order $3k$}

Denote by $ \eta = (\sqrt{5}+1)/2 $ throughout the remaining of this paper. We now prove the guaranteed performance of the proposed algorithms for signal recovery under some assumptions. We first consider the HBHT, to which the main result is stated as follows.

\begin{Thm}\label{theorem-1}
Suppose that the RIC, $\delta_{3k},$ of the measurement matrix $A$ and the parameters $\alpha$ and $ \beta$ obey the bounds
\begin{equation}\label{theorem-1-0}
\delta_{3k}< \frac{\sqrt{5}-1}{2}\approx 0.618, \ \  0\leq\beta<\frac{\eta}{1+\delta_{3k}}-1,\ \
\frac{2(1+\beta)-\eta}{1-\delta_{3k}}<\alpha< \frac{\eta}{1+\delta_{3k}}.
\end{equation} Let $y:=Ax+\nu$  be the measurements of $ x$ with measurement errors $ \nu. $
Then the iterates $\{ x^p\}$ generated by HBHT satisfies
\begin{equation}\label{theorem-1-1}
 {\lVert x_S-x^p\rVert}_2\leq C_1\tau^{p-1} +
C_2\lVert  \nu' \rVert_2,
\end{equation}
where $S=\mathcal{L}_k(x)$, $\nu'=\nu+Ax_{\overline{S}},$ and $C_1,C_2$  are the quantities given as
\begin{equation}\label{theorem-1-1-1}
C_1={\lVert  x_S-x^1 \rVert}_2+({\tau}-{b}){\lVert  x_S-x^0\rVert}_2,\ \
C_2=\frac{\eta\alpha}{1-\tau}\sqrt{1+\delta_{2k}}
\end{equation}
with $\tau:=\frac{b+\sqrt{b^2+4\eta\beta}}{2}.$ The fact $\tau<1$ is ensured under (\ref{theorem-1-0}) and $b$ is given by
\begin{equation}\label{theorem-1-2}
b=\eta(|1-\alpha+\beta|+\alpha\delta_{3k}).
\end{equation}
\end{Thm}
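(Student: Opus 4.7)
The plan is to reduce the HBHT analysis to a two-step linear recurrence and then invoke Lemma \ref{lem-two-level-geometric}. First I would absorb the tail $Ax_{\overline{S}}$ into the measurement error by writing $y = Ax_S + \nu'$ with $\nu' = \nu + Ax_{\overline{S}}$, so that the $k$-sparse vector $x_S$ plays the role of the ``target'' for the iterates. Since $x^{p+1} = \mathcal{H}_k(u^p)$ and $x_S$ is $k$-sparse, Lemma \ref{lem-optimal} immediately yields
\[
\|x_S - x^{p+1}\|_2 \leq \eta \, \|(x_S - u^p)_{W \cup W^*}\|_2,
\]
where $W = supp(x_S)$ and $W^* = supp(x^{p+1})$, and the remaining task is to control the right-hand side by a linear combination of $\|x_S - x^p\|_2$, $\|x_S - x^{p-1}\|_2$ and $\|\nu'\|_2$.

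Second, I would unfold $u^p$ by substituting $y - Ax^p = A(x_S - x^p) + \nu'$ and regrouping the momentum term via $-\beta(x^p - x^{p-1}) = \beta(x_S - x^p) - \beta(x_S - x^{p-1})$. This produces the identity
\[
x_S - u^p = (1 + \beta - \alpha)(x_S - x^p) + \alpha(I - A^T A)(x_S - x^p) - \beta(x_S - x^{p-1}) - \alpha A^T \nu'.
\]
Restricting to $W \cup W^*$ and using the triangle inequality, I would bound the four summands separately. The support $W \cup W^* \cup supp(x_S - x^p)$ has cardinality at most $3k$, so Lemma \ref{lem-1}(i) gives $\|((I - A^T A)(x_S - x^p))_{W \cup W^*}\|_2 \leq \delta_{3k}\|x_S - x^p\|_2$; since $|W \cup W^*| \leq 2k$, Lemma \ref{lem-1}(ii) yields $\|(A^T \nu')_{W \cup W^*}\|_2 \leq \sqrt{1+\delta_{2k}}\|\nu'\|_2$; and the remaining two terms are bounded trivially by $|1+\beta-\alpha|\|x_S - x^p\|_2$ and $\beta\|x_S - x^{p-1}\|_2$. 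Combining these estimates and multiplying through by $\eta$ produces the recurrence
\[
\|x_S - x^{p+1}\|_2 \leq b\,\|x_S - x^p\|_2 + \eta\beta\,\|x_S - x^{p-1}\|_2 + \eta\alpha\sqrt{1+\delta_{2k}}\,\|\nu'\|_2,
\]
with $b = \eta(|1-\alpha+\beta| + \alpha\delta_{3k})$ as in \eqref{theorem-1-2}. Applying Lemma \ref{lem-two-level-geometric} to $a^p := \|x_S - x^p\|_2$ with $b_1 = b$, $b_2 = \eta\beta$, $b_3 = \eta\alpha\sqrt{1+\delta_{2k}}\|\nu'\|_2$ then delivers \eqref{theorem-1-1}--\eqref{theorem-1-1-1} with $\tau = (b+\sqrt{b^2+4\eta\beta})/2$, provided the contraction condition $b + \eta\beta < 1$ is satisfied.

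The main obstacle is translating the parameter bounds in \eqref{theorem-1-0} into this contraction inequality, which governs both $\tau < 1$ and the applicability of Lemma \ref{lem-two-level-geometric}. The key algebraic tool is the golden-ratio identity $\eta^2 = \eta + 1$, equivalently $1/\eta = \eta - 1$ and $1 + 1/\eta = \eta$. Splitting on the sign of $1 + \beta - \alpha$, in the regime $\alpha \leq 1+\beta$ the inequality $b + \eta\beta < 1$ rearranges to $\alpha(1-\delta_{3k}) > 1 + 2\beta - 1/\eta = 2(1+\beta) - \eta$, which is exactly the lower bound on $\alpha$ in \eqref{theorem-1-0}; in the regime $\alpha > 1+\beta$ it rearranges to $\alpha(1+\delta_{3k}) < 1 + 1/\eta = \eta$, which is exactly the upper bound. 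Non-emptiness of the $\alpha$-interval then reproduces $\beta < \eta/(1+\delta_{3k}) - 1$, and allowing $\beta = 0$ forces $\delta_{3k} < \eta - 1 = (\sqrt{5}-1)/2$, so the hypotheses are tight and mutually consistent.
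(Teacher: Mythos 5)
Your proposal is correct and follows essentially the same route as the paper: the same identity for $x_S-u^p$, the same application of Lemma \ref{lem-optimal} and Lemma \ref{lem-1} to obtain the two-step recurrence with coefficients $b$ and $\eta\beta$, the same case split on the sign of $1+\beta-\alpha$ combined with $\eta^2=\eta+1$ to verify $b+\eta\beta<1$, and the same invocation of Lemma \ref{lem-two-level-geometric}. The only differences are cosmetic (sign convention on $u^p-x_S$ and phrasing the parameter bounds as equivalent to, rather than merely sufficient for, the contraction condition).
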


\begin{proof} By \eqref{algorithm-1-1}, we have
 \begin{equation}\label{th-1-pf-1}
u^p-x_S=(1-\alpha+\beta)(x^p-x_S)+\alpha (I-A^ TA)(x^p-x_S)-\beta (x^{p-1}-x_S)
+\alpha A^T \nu',
\end{equation}
where $\nu'=\nu+Ax_{\overline{S}}$. Denote  $V^p:=supp(\mathcal{H}_k(u^p))$.
By using Lemma \ref{lem-optimal} and \eqref{th-1-pf-1}, we obtain
\begin{align}\label{th-1-pf-2}
\lVert x^{p+1}-x_S \rVert_2= & \lVert\mathcal{H}_k(u^p)-x_S \rVert_2  \nonumber \\
\leq &  \eta \lVert  (u^p-x_S )_{S\cup V^p}\rVert_2  \nonumber \\
\leq &  \eta |1-\alpha+\beta|\cdot \lVert x^p-x_S\rVert_2+\eta\alpha \lVert  (I-A^ TA)(x^p-x_S)]_{
S\cup V^p}\rVert_2   \nonumber \\
& \dis +\eta\beta\lVert  x^{p-1}-x_S
\rVert_2+\eta\alpha\lVert( A^T \nu')_{
S\cup V^p} \rVert_2.
\end{align}
Since $| S\cup V^p |\leq 2k$ and $|supp(x^p-x_S)\cup S\cup V^p |\leq 3k$,  by using Lemma \ref{lem-1}, we obtain
\begin{equation}\label{th-1-pf-3}
\lVert  [(I-A^ TA)(x^p-x_S) ]_{
S\cup V^p} \rVert_2\leq\delta_{3k} \lVert x^p-x_S \rVert_2
\end{equation}
and
\begin{equation}\label{th-1-pf-3-1}
 \lVert ( A^T \nu' )_{
S\cup V^p} \rVert_2\leq \sqrt{1+\delta_{2k}} \lVert \nu' \rVert_2.
\end{equation}
 Substituting \eqref{th-1-pf-3}  and \eqref{th-1-pf-3-1} into \eqref{th-1-pf-2} yields
\begin{equation}\label{th-1-pf-4}
\lVert x^{p+1}-x_S \rVert_2
\leq  b  \lVert x^p-x_S \rVert_2+\eta\beta \lVert  x^{p-1}-x_S
 \rVert_2+\eta\alpha\sqrt{1+\delta_{2k}} \lVert  \nu' \rVert_2,
\end{equation}
where $b$ is given by \eqref{theorem-1-2}. The recursive inequality (\ref{th-1-pf-4}) is of the form (\ref{lem-tlgeom-1}) in Lemma \ref{lem-two-level-geometric}. We now point out that the coefficients of the right-hand side of (\ref{th-1-pf-4}) satisfy the condition of Lemma \ref{lem-two-level-geometric}. In fact, suppose that  $\delta_{3k}< =(\sqrt{5}-1)/2=\eta-1$, which implies that $0<\frac{\eta}{1+\delta_{3k}}-1$. Thus the range for $ \beta $ in  \eqref{theorem-1-0}  is well defined, and hence $\frac{2(1+\beta)-\eta}{1-\delta_{3k}}< 1+\beta<\frac{\eta}{1+\delta_{3k}}$. This implies that the range for $ \alpha$ in  \eqref{theorem-1-0}  is also well defined.
Merging \eqref{theorem-1-2} and  \eqref{theorem-1-0} leads to
\begin{align*}\label{th-1-pf-7}
b=&\eta(|1-\alpha+\beta|+\alpha\delta_{3k}) \nonumber \\
=&\left\{
\begin{array}{ll}
\dis\eta[1+\beta-\alpha(1-\delta_{3k})], &\textrm{ if } \frac{2(1+\beta)-\eta}{1-\delta_{3k}}<\alpha\leq 1+\beta, \\
\dis\eta[-1-\beta+\alpha(1+\delta_{3k})],& \textrm{ if }1+\beta<\alpha<\frac{\eta}{1+\delta_{3k}},
\end{array}
\right.  \nonumber \\
<&\left\{
\begin{array}{ll}
\dis\eta\left[1+\beta-\left(\frac{2(1+\beta)-\eta}{1-\delta_{3k}}\right)(1-\delta_{3k})\right], &\textrm{ if } \frac{2(1+\beta)-\eta}{1-\delta_{3k}}<\alpha\leq 1+\beta, \\
\dis\eta\left[-1-\beta+\left(\frac{\eta}{1+\delta_{3k}}\right)(1+\delta_{3k})\right],& \textrm{ if }1+\beta<\alpha<\frac{\eta}{1+\delta_{3k}},
\end{array}
\right. \nonumber \\
 =&\eta(\eta-1-\beta) \nonumber \\
 =& 1-\eta\beta,
\end{align*}
where the last equality follows from the fact $\eta$ is the root of the equation $t^2-t=1.$  The above inequality means $b+\eta \beta <1$ and hence the recursive formula \eqref{th-1-pf-4} satisfies the condition of Lemma  \ref{lem-two-level-geometric}.
Therefore, it follows from Lemma  \ref{lem-two-level-geometric} that
$$
\tau=\frac{b+\sqrt{b^2+4\eta\beta}}{2}<1
$$
 and the bound \eqref{theorem-1-1} holds, where $C_1, C_2$ are given by (\ref{theorem-1-1-1}).
\end{proof}

If the signal $x$ is $k$-sparse and the measurements are accurate, in which case $ x= x_S$ and $ \nu =0$, then the above result implies that
$$
 \| x-x^p\|_2 \leq\tau^{p-1}C_1 \to 0 \textrm{  as } p \to \infty,$$
which implies that the iterates generated by HBHT converges to the sparse signal.

We now establish the main performance result for HBHTP. We first recall a helpful lemma.

\begin{Lem}\emph{\cite{Foucart-2011}} \label{lem-3}
 Given  the measurements $y := Ax+ \nu$ of $x$ and the index set $ S^{p+1},$  the iterate $ x^{p+1}$ generated by the pursuit step (\ref{algorithm-2-3})
obeys
\begin{equation}\label{lem-3-1}
\left\lVert x^{p+1}-x_S\right\rVert_2
\leq \frac{1}{\sqrt{1-(\delta_{2k})^2}}\left\lVert (x^{p+1}-x_S)_{\overline{S^{p+1}}}\right\rVert_2+\frac{ \sqrt{1+\delta_{k}}}{1-\delta_{2k}}\left\lVert \nu' \right\rVert_2,
\end{equation}
where $S=\mathcal{L}_k(x)$ and $\nu'=\nu+Ax_{\overline{S}}. $
\end{Lem}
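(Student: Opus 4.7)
The plan is to exploit the first-order optimality conditions (normal equations) of the least-squares pursuit step (\ref{algorithm-2-3}), together with the RIC estimates in Lemma \ref{lem-1}. For convenience write $T := S^{p+1}$ and $u := x^{p+1} - x_S$, and use $y = Ax_S + \nu'$. Two elementary observations drive everything: (a) since $\mathrm{supp}(x^{p+1}) \subseteq T$, we have $u_{\overline{T}} = -(x_S)_{\overline{T}}$, a vector supported on $S \setminus T$ of size at most $k$; (b) the optimality condition for (\ref{algorithm-2-3}) reads $A_T^{T}(Ax^{p+1} - y) = 0$, which after substitution becomes $A_T^{T} A u = A_T^{T}\nu'$.

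The next step is to control $\|u_T\|_2$. I would split $Au = A_T u_T + A_{\overline{T}} u_{\overline{T}}$ and rearrange the normal equation as
\begin{equation*}
A_T^{T} A_T u_T \;=\; A_T^{T}\nu' \;-\; A_T^{T} A_{\overline{T}} u_{\overline{T}}.
\end{equation*}
The left-hand side is bounded below via RIP, $\|A_T^{T} A_T u_T\|_2 \geq (1-\delta_k)\|u_T\|_2$. For the right-hand side, Lemma \ref{lem-1}(ii) yields $\|A_T^{T}\nu'\|_2 \leq \sqrt{1+\delta_k}\|\nu'\|_2$ since $|T|\leq k$, while Lemma \ref{lem-1}(i) applied to the disjointly supported vector $u_{\overline{T}}$ (with $|T \cup \mathrm{supp}(u_{\overline{T}})|\leq 2k$) produces $\|A_T^{T} A_{\overline{T}} u_{\overline{T}}\|_2 \leq \delta_{2k}\|u_{\overline{T}}\|_2$. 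Combining these three estimates gives a linear bound of the form $(1-\delta_{2k})\|u_T\|_2 \leq \delta_{2k}\|u_{\overline{T}}\|_2 + \sqrt{1+\delta_k}\|\nu'\|_2$ (after using $\delta_k \leq \delta_{2k}$).

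Finally, I would assemble the pieces through the Pythagoras identity $\|u\|_2^2 = \|u_T\|_2^2 + \|u_{\overline{T}}\|_2^2$. The subtle point, and in my view the main obstacle, is recovering the sharp coefficient $1/\sqrt{1-\delta_{2k}^{2}}$ in front of $\|u_{\overline{T}}\|_2$ rather than the crude $1/(1-\delta_{2k})$ that a direct triangle inequality would deliver. To achieve this I would derive the key estimate in the quadratic form
\begin{equation*}
(1-\delta_{2k})\|u_T\|_2^{2} \;\leq\; \delta_{2k}\|u_T\|_2\|u_{\overline{T}}\|_2 \;+\; \sqrt{1+\delta_k}\,\|u_T\|_2\|\nu'\|_2,
\end{equation*}
obtained by expanding $\|A_T u_T\|_2^{2} = \langle A u_T,\, Au\rangle - \langle A u_T,\, A u_{\overline{T}}\rangle$ and using the normal equation to kill the first inner product after restriction. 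Substituting the resulting bound on $\|u_T\|_2$ into the Pythagoras identity and simplifying produces precisely the form of (\ref{lem-3-1}). The rest is routine algebra.
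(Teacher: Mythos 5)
The paper does not prove this lemma at all; it is quoted verbatim from \cite{Foucart-2011}, so the relevant comparison is with Foucart's original argument. Your setup is right: the normal equation $\bigl(A^{T}(y-Ax^{p+1})\bigr)_{T}=0$ with $T=S^{p+1}$, the observation $u_{\overline{T}}=-(x_S)_{\overline{T}}$, and the Pythagoras identity $\|u\|_2^2=\|u_T\|_2^2+\|u_{\overline{T}}\|_2^2$ are all the correct ingredients. You also correctly identify the crux, namely recovering the coefficient $1/\sqrt{1-\delta_{2k}^2}$.

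However, your proposed resolution of that crux does not work. Your ``quadratic form'' $(1-\delta_{2k})\|u_T\|_2^{2}\leq \delta_{2k}\|u_T\|_2\|u_{\overline{T}}\|_2+\sqrt{1+\delta_k}\,\|u_T\|_2\|\nu'\|_2$ is the linear bound $(1-\delta_{2k})\|u_T\|_2\leq \delta_{2k}\|u_{\overline{T}}\|_2+\sqrt{1+\delta_k}\|\nu'\|_2$ multiplied through by $\|u_T\|_2$; it carries no extra information. Feeding it into Pythagoras in the noiseless case gives $\|u\|_2\leq \frac{\sqrt{1-2\delta_{2k}+2\delta_{2k}^{2}}}{1-\delta_{2k}}\|u_{\overline{T}}\|_2$, and one checks that $\frac{1-2\delta+2\delta^{2}}{(1-\delta)^{2}}-\frac{1}{1-\delta^{2}}=\frac{2\delta^{3}}{(1-\delta)^{2}(1+\delta)}>0$ for $\delta\in(0,1)$, so your constant is strictly worse than the claimed $1/\sqrt{1-\delta_{2k}^{2}}$ (e.g.\ $\sqrt{2}$ versus $\approx 1.155$ at $\delta_{2k}=0.5$). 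The defect is that you split $Au=A_Tu_T+A_{\overline{T}}u_{\overline{T}}$ too early. Foucart instead writes $\|u_T\|_2^{2}=\langle u,u_T\rangle=\langle (I-A^{T}A)u,u_T\rangle+\langle \nu',Au_T\rangle$ (the normal equation kills $\langle A^{T}Au-A^{T}\nu',u_T\rangle$), keeps the \emph{whole} vector $u$ in the first term, and bounds it by $\delta_{2k}\|u\|_2\|u_T\|_2$ using $|\mathrm{supp}(u)\cup T|\leq 2k$. This yields $\|u_T\|_2\leq \delta_{2k}\|u\|_2+\sqrt{1+\delta_k}\|\nu'\|_2$ with $\|u\|_2$ rather than $\|u_{\overline{T}}\|_2$ on the right; substituting into Pythagoras then gives a genuine quadratic inequality in $\|u\|_2$, $(1-\delta_{2k}^{2})\|u\|_2^{2}-2\delta_{2k}b\|u\|_2-(b^{2}+\|u_{\overline{T}}\|_2^{2})\leq 0$ with $b=\sqrt{1+\delta_k}\|\nu'\|_2$, whose root bound simplifies exactly to \eqref{lem-3-1}. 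You would need to replace your key estimate by this one for the argument to close.
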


The main result concerning the guaranteed success of HBHTP is stated as follows.

\begin{Thm}\label{theorem-2}
Suppose that the RIC, $\delta_{3k},$ of the matrix $A$ and the parameters $\alpha$ and $ \beta$ obey
\begin{equation}\label{theorem-2-0}
\delta_{3k}<\frac{1}{\sqrt{3}}\approx 0.577, \ \  0\leq\beta<\frac{\frac{1}{\hat{\eta}}+1}{1+\delta_{3k}}-1,\ \ \frac{1+2\beta-\frac{1}{\hat{\eta}}}{1-\delta_{3k}}<\alpha<\frac{\frac{1}{\hat{\eta}}+1}{1+\delta_{3k}},
\end{equation}
where $\hat{\eta}=\frac{\sqrt{2}}{\sqrt{1-(\delta_{2k})^2}}$. Let $y:=Ax+\nu$ be the measurements of $ x$ with errors $ \nu . $
Then the iterates $\{ x^p\}$ generated by HBHTP  satisfies
\begin{equation}\label{theorem-2-1}
 {\left\lVert x_S-x^p\right\rVert}_2\leq C_3\hat{\tau}^{p-1} +
C_4\left\lVert \nu' \right\rVert_2,
\end{equation}
where $S=\mathcal{L}_k(x)$, $\nu'=\nu+Ax_{\overline{S}}$, and $C_3,C_4$ are given  as
\begin{equation}\label{theorem-2-1-1}
C_3={\left\lVert  x_S-x^1\right\rVert}_2+(\hat{\tau}-\hat{b}){\left\lVert  x_S-x^0\right\rVert}_2,\ \
C_4=\frac{1}{1-\hat{\tau}}\left(\hat{\eta}\alpha \sqrt{1+\delta_{2k}}+\frac{ \sqrt{1+\delta_{k}}}{1-\delta_{2k}}\right)
\end{equation}
with constants $\hat{b}$, $\hat{\tau}$  being given by
\begin{equation}\label{theorem-2-2}
\hat{b}=\hat{\eta}(|1-\alpha+\beta|+\alpha\delta_{3k}),\ \
\hat{\tau}=\frac{\hat{b}+\sqrt{\hat{b}^2+4\hat{\eta}\beta}}{2}
\end{equation}
and $ \hat{\tau}< 1 $ is guaranteed under the condition (\ref{theorem-2-0}).
\end{Thm}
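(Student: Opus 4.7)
The plan is to mirror the proof of Theorem \ref{theorem-1} and again reduce the problem to Lemma \ref{lem-two-level-geometric}. The one new ingredient is that $x^{p+1}$ is now produced by the pursuit step \eqref{algorithm-2-3} rather than by $\mathcal{H}_k(u^p)$, so Lemma \ref{lem-optimal} is no longer directly applicable; instead I would combine Lemma \ref{lem-3} with the $\mathcal{L}_k$ property of $S^{p+1}$ to produce the HBHT-style estimate
\[
\|x^{p+1}-x_S\|_2 \leq \hat{\eta}\,\|(u^p-x_S)_{S\cup S^{p+1}}\|_2 + \frac{\sqrt{1+\delta_k}}{1-\delta_{2k}}\,\|\nu'\|_2.
\]

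To derive this intermediate inequality, I would use $\mathrm{supp}(x^{p+1})\subseteq S^{p+1}$ to rewrite Lemma \ref{lem-3} as
\[
\|x^{p+1}-x_S\|_2 \leq \frac{1}{\sqrt{1-\delta_{2k}^2}}\|(x_S)_{\overline{S^{p+1}}}\|_2 + \frac{\sqrt{1+\delta_k}}{1-\delta_{2k}}\|\nu'\|_2,
\]
and then exploit the defining property of $S^{p+1}=\mathcal{L}_k(u^p)$, namely $\|(u^p)_{S\setminus S^{p+1}}\|_2\leq\|(u^p)_{S^{p+1}\setminus S}\|_2$, together with two triangle inequalities, to obtain the standard HTP-type bound $\|(x_S)_{\overline{S^{p+1}}}\|_2 \leq \sqrt{2}\,\|(u^p-x_S)_{S\cup S^{p+1}}\|_2$. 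The combined factor $\sqrt{2}/\sqrt{1-\delta_{2k}^2}$ is precisely $\hat{\eta}$, which makes the subsequent computation a carbon copy of the HBHT argument with $\eta$ replaced by $\hat{\eta}$.

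From here the reasoning parallels \eqref{th-1-pf-1}--\eqref{th-1-pf-4}: I would expand $u^p-x_S$ via \eqref{algorithm-1-1}, apply Lemma \ref{lem-1}(i) on the index set $S\cup S^{p+1}$ (whose union with $\mathrm{supp}(x^p-x_S)$ has cardinality at most $3k$) to control the $(I-A^TA)(x^p-x_S)$ term, and apply Lemma \ref{lem-1}(ii) on the same set (of size at most $2k$) to control the $A^T\nu'$ term. This produces the two-step recursion
\[
\|x^{p+1}-x_S\|_2 \leq \hat{b}\,\|x^p-x_S\|_2 + \hat{\eta}\beta\,\|x^{p-1}-x_S\|_2 + \Bigl(\hat{\eta}\alpha\sqrt{1+\delta_{2k}} + \tfrac{\sqrt{1+\delta_k}}{1-\delta_{2k}}\Bigr)\|\nu'\|_2,
\]
which matches the hypothesis of Lemma \ref{lem-two-level-geometric}; that lemma then yields \eqref{theorem-2-1} with the advertised $C_3$ and $C_4$, provided $\hat{b}+\hat{\eta}\beta<1$.

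Verifying $\hat{b}+\hat{\eta}\beta<1$ under \eqref{theorem-2-0} is the main technical step and the place where the three bounds on $\delta_{3k}$, $\beta$ and $\alpha$ must all interlock. I would split into the cases $\alpha\leq 1+\beta$ and $\alpha>1+\beta$, remove the absolute value in $\hat{b}=\hat{\eta}(|1-\alpha+\beta|+\alpha\delta_{3k})$ accordingly, and observe that the two resulting inequalities collapse respectively to the lower and upper bounds on $\alpha$ in \eqref{theorem-2-0} (using the identity $\hat{\eta}\cdot(1/\hat{\eta}) = 1$, which plays the role the relation $\eta(\eta-1)=1$ played in the HBHT proof). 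Non-emptiness of the $\alpha$-range in turn amounts to the stated bound on $\beta$, which is itself non-empty precisely when $\delta_{3k}<1/\hat{\eta}=\sqrt{1-\delta_{2k}^2}/\sqrt{2}$, equivalently $2\delta_{3k}^2+\delta_{2k}^2<1$. Since $\delta_{2k}\leq\delta_{3k}$, the assumption $\delta_{3k}<1/\sqrt{3}$ is sufficient, so the entire parameter regime \eqref{theorem-2-0} is self-consistent. Once this strict inequality is in hand, Lemma \ref{lem-two-level-geometric} yields $\hat{\tau}<1$ and the bound \eqref{theorem-2-1}, completing the proof; the only real obstacle is organizing the case analysis cleanly, as no technique beyond those used for Theorem \ref{theorem-1} is required.
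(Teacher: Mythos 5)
Your proposal is correct and follows essentially the same route as the paper: Lemma \ref{lem-3} combined with the $\mathcal{L}_k$-property of $S^{p+1}$ and two triangle inequalities to get the $\sqrt{2}$-factor, then the expansion \eqref{th-1-pf-1} with Lemma \ref{lem-1} to obtain the two-step recursion, and finally the case analysis on $|1-\alpha+\beta|$ feeding into Lemma \ref{lem-two-level-geometric}. The only cosmetic difference is that you work on the index set $S\cup S^{p+1}$ where the paper uses the symmetric difference $S\,\bigtriangleup\, S^{p+1}$; both have cardinality at most $2k$, so the resulting constants are identical.
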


\begin{proof}  Since $S^{p+1}=\mathcal{L}_k(u^p)$ in HBHTP and $S=\mathcal{L}_k(x)$, we have
$$
\|(u^p)_{S^{p+1}}\|_2^2\geq\|(u^p)_S\|_2^2.
$$
Eliminating the entries indexed by $S\cap S^{p+1}$ from the above inequality and taking square root yields
$$
\|(u^p)_{S^{p+1}\setminus S}\|_2\geq\|(u^p)_{S\setminus S^{p+1}}\|_2.
$$
Note that $ (x_S)_{S^{p+1}\setminus S}=0$ and $ (x^{p+1})_{S\setminus S^{p+1}}=0. $ From the inequality above, we have
\begin{align*}
\|(u^p-x_S)_{S^{p+1}\setminus S }\|_2\geq & \|(x_S-x^{p+1}+u^p-x_S)_{S\setminus S^{p+1}}\|_2 \nonumber \\
\geq &    \|(x_S-x^{p+1})_{\overline{ S^{p+1}}}\|_2-\|(u^p-x_S)_{S\setminus S^{p+1}}\|_2,
\end{align*}
where the second inequality follows from the triangular inequality  and  the fact $ (x_S-x^{p+1})_{S\setminus S^{p+1}}= (x_S-x^{p+1})_{\overline{S^{p+1}}} . $
It follows that
\begin{equation}\label{th-2-pf-4}
\begin{array}{rl}
\|(x_S-x^{p+1})_{\overline{ S^{p+1}}}\|_2\leq   &   \|(u^p-x_S)_{S\setminus S^{p+1}}\|_2+\|(u^p-x_S)_{S^{p+1}\setminus S }\|_2\\
\leq   & \sqrt{2\left (\|(u^p-x_S)_{S\setminus S^{p+1}}\|_2^2+\|(u^p-x_S)_{S^{p+1}\setminus S }\|_2^2\right)}\\
=&  \sqrt{2}\|(u^p-x_S)_{S^{p+1}\bigtriangleup S}\|_2,
\end{array}
\end{equation}
where  $S^{p+1}\bigtriangleup S:=(S^{p+1}\setminus S)\cup(S\setminus S^{p+1})$ is the symmetric difference of $S^{p+1}$ and $S.$ The last equality above follows from $(S^{p+1}\setminus S)\cap(S\setminus S^{p+1})=\emptyset$. Note that \eqref{th-1-pf-1}  remains valid for HBHTP.
Merging  \eqref{th-1-pf-1} and \eqref{th-2-pf-4} leads to
\begin{align}\label{th-2-pf-6}
\|(x_S-x^{p+1})_{\overline{ S^{p+1}}}\|_2
\leq & \sqrt{2}\{|1-\alpha+\beta|\cdot\lVert (x^p-x_S)_{S^{p+1}\bigtriangleup S } \rVert_2+\alpha \lVert ( A^T \nu' )_{S^{p+1}\bigtriangleup S } \rVert_2   \nonumber \\
& \dis +\alpha\lVert  [(I-A^ TA)(x^p-x_S)]_{S^{p+1}\bigtriangleup S }\rVert_2+\beta \lVert  (x^{p-1}-x_S)_{S^{p+1}\bigtriangleup S }\rVert_2\}.
\end{align}
  Since $|S^{p+1}\bigtriangleup S|\leq 2k$ and $|(S^{p+1}\bigtriangleup S)\cup supp(x^p-x_S)|\leq 3k$, by using  Lemma \ref{lem-1}, one has
 \begin{equation}\label{th-2-pf-6-1}
 \lVert [(I-A^ TA)(x^p-x_S)]_{
 S^{p+1}\bigtriangleup  S}\rVert_2\leq\delta_{3k}\left\lVert x^p-x_S\right\rVert_2
 \end{equation}
 and
 \begin{equation}\label{th-2-pf-6-2}
 \lVert( A^T \nu')_{
 S^{p+1}\bigtriangleup S}\rVert_2\leq \sqrt{1+\delta_{2k}} \lVert \nu' \rVert_2.
 \end{equation}
 Combining (\ref {th-2-pf-6})-(\ref{th-2-pf-6-2}) leads to
\begin{align*}
\|(x_S-x^{p+1})_{\overline{ S^{p+1}}}\|_2
\leq &  \sqrt{2}\{(|1-\alpha+\beta|+\alpha\delta_{3k})\lVert x^p-x_S\rVert_2+\alpha\sqrt{1+\delta_{2k}} \lVert\nu' \rVert_2 \nonumber \\
& \dis +\beta \lVert  x^{p-1}-x_S\rVert_2\}.
\end{align*}
Merging the inequality above and  \eqref{lem-3-1} in Lemma \ref{lem-3}, we obtain
\begin{equation}\label{th-2-pf-8}
\left\lVert x^{p+1}-x_S \right\rVert_2
\leq\hat{b}\left\lVert x^p-x_S\right\rVert_2+\hat{\eta}\beta\left\lVert  x^{p-1}-x_S
\right\rVert_2+\left(1-\hat{\tau}\right)C_4\left\lVert \nu' \right\rVert_2,
\end{equation}
where $\hat{\eta},\hat{b},\hat{\tau},C_4$ are given exactly as in Theorem \ref{theorem-2}.

Since $\delta_{2k}\leq\delta_{3k}<\frac{1}{\sqrt{3}}$, we have
 $\sqrt{2}\delta_{3k}<\sqrt{1-(\delta_{3k})^2}\leq\sqrt{1-(\delta_{2k})^2}=\sqrt{2}/\hat{\eta},$ which implies that $0<\frac{\frac{1}{\hat{\eta}}+1}{1+\delta_{3k}}-1.$ Therefore,  the range for $ \beta $  in \eqref{theorem-2-0} is well defined, which also implies that $$\frac{1+2\beta-\frac{1}{\hat{\eta}}}{1-\delta_{3k}}<1+\beta
 <\frac{\frac{1}{\hat{\eta}}+1}{1+\delta_{3k}},$$
and hence the range for $ \alpha$ in \eqref{theorem-2-0}  is also well defined.
Thus it follows from  \eqref{theorem-2-0} and \eqref{theorem-2-2} that
\begin{align*}
\hat{b}= &\hat{\eta}(|1-\alpha+\beta|+\alpha\delta_{3k})\\
= &\left\{
\begin{array}{ll}
\dis\hat{\eta}[1+\beta-\alpha(1-\delta_{3k})],\ \ &\frac{1+2\beta-\frac{1}{\hat{\eta}}}{1-\delta_{3k}}<\alpha\leq 1+\beta, \\
\dis\hat{\eta}[-1-\beta+\alpha(1+\delta_{3k})],\ \ &1+\beta< \alpha <\frac{\frac{1}{\hat{\eta}}+1}{1+\delta_{3k}},
\end{array}
\right.\\
<& 1-\hat{\eta}\beta,
\end{align*}
i.e., $ b+\hat{\eta} \beta <1. $  Therefore, applying Lemma \ref{lem-two-level-geometric} to the recursive relation \eqref{th-2-pf-8}, we immediately conclude that $\hat{\tau} <1 $  and the desired estimation \eqref{theorem-2-1} holds.
\end{proof}

When the measurements are accurate and the signal is $k$-sparse, Theorem \ref{theorem-2} implies that the sequence $ \{x^p\}$  produced by the HBHTP must converge to the signal as $ p \to \infty.$ That is,  the algorithm exactly recovers the signal in this case. It is also worth pointing out that computing the RIC of a matrix is generally difficult. Thus in practical applications, we do not require that the parameters $\alpha $ and $\beta $ be chosen to strictly meet the condition  \eqref{theorem-1-0} or  \eqref{theorem-2-0}. These parameters can be set to roughly satisfy these conditions, for instance,
\begin{equation}\label{rem-parameter-1}
0\leq\beta<\eta-1,\ \
2+2\beta-\eta<\alpha<\eta,
\end{equation}
where $\eta=(\sqrt{5}+1)/2$.
As examples, we may simply set $\alpha\in[0.4+2\beta,1.6]$ and $  \beta\in(0,0.6]$ in  HBHT for simplicity, and set $\alpha\in[0.3+2\beta,1.7] $ and $ \beta\in(0,0.7]$ in HBHTP.

\subsection{Guaranteed performance under RIP of order $2k$}

Motivated by the idea of decomposition method in \cite{Zhao-2020-Analysis}, we first establish a helpful inequality, based on which the guaranteed performance of the proposed algorithms can be characterized immediately in terms of RIP of order $ 2k. $

\begin{Lem}\label{lem-2}
Let $x$ and $z$ be two $k$-sparse vectors, $S=supp(x)$ and $S^*\subseteq N$ be an index set. If $|S\cup S^*|\leq 2k$, then
\begin{equation}\label{lem-2-1}
\left\lVert \left [(I-A^ T A)(x-z)\right]_{S\cup S^*}\right\rVert_2\leq \sqrt{3}  \delta_{2k} {\left\lVert x-z \right\rVert}_2.
\end{equation}
\end{Lem}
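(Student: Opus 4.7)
The plan is to apply Lemma \ref{lem-1}(i) with parameter $t=2k$, whose hypothesis requires that the combined support of the input vector and the index set stays within $2k$. The naive support $\mathrm{supp}(x-z)\cup(S\cup S^*)$ can have cardinality up to $3k$, so I need to decompose $x-z$ into two pieces tailored to $S\cup S^*$ and, for the troublesome piece, also split $S\cup S^*$ itself into two halves of size at most $k$.

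Concretely, set $u_1:=(x-z)_{S\cup S^*}$ and $u_2:=(x-z)_{\overline{S\cup S^*}}$. Then $u_1+u_2=x-z$ with disjoint supports, so $\|u_1\|_2^2+\|u_2\|_2^2=\|x-z\|_2^2$. The key observation is that since $\mathrm{supp}(x)=S\subseteq S\cup S^*$, the vector $u_2$ is in fact supported in $\mathrm{supp}(z)\setminus(S\cup S^*)$, which gives $|\mathrm{supp}(u_2)|\le k$ (not just $\le 2k$ as the sparsity of $x-z$ would suggest). This refined sparsity estimate on $u_2$ is the crucial point that makes the $2k$-level RIC sufficient, and it is the only genuinely clever step in the argument.

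For $u_1$, since $\mathrm{supp}(u_1)\cup(S\cup S^*)=S\cup S^*$ has cardinality at most $2k$, Lemma \ref{lem-1}(i) immediately yields
$\|[(I-A^{T}A)u_1]_{S\cup S^*}\|_2\le\delta_{2k}\|u_1\|_2.$
For $u_2$, I would partition $S\cup S^*=\Omega_1\cup\Omega_2$ with $\Omega_1\cap\Omega_2=\emptyset$ and $|\Omega_i|\le k$, which is possible precisely because $|S\cup S^*|\le 2k$. Then $|\mathrm{supp}(u_2)\cup\Omega_i|\le k+k=2k$, so Lemma \ref{lem-1}(i) gives $\|[(I-A^{T}A)u_2]_{\Omega_i}\|_2\le\delta_{2k}\|u_2\|_2$ for $i=1,2$. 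Summing the squared $\ell_2$-norms over the disjoint pieces $\Omega_1$ and $\Omega_2$ produces
$\|[(I-A^{T}A)u_2]_{S\cup S^*}\|_2\le\sqrt{2}\,\delta_{2k}\|u_2\|_2.$

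Finally, combining the two estimates via the triangle inequality and then Cauchy--Schwarz yields
$\|[(I-A^{T}A)(x-z)]_{S\cup S^*}\|_2\le\delta_{2k}\bigl(\|u_1\|_2+\sqrt{2}\,\|u_2\|_2\bigr)\le\delta_{2k}\sqrt{1+2}\,\sqrt{\|u_1\|_2^2+\|u_2\|_2^2}=\sqrt{3}\,\delta_{2k}\|x-z\|_2,$
which is exactly \eqref{lem-2-1}. The factor $\sqrt{3}$ arises naturally from the Cauchy--Schwarz step that balances the coefficients $1$ and $\sqrt{2}$ coming from the two pieces $u_1$ and $u_2$.
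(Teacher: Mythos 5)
Your proof is correct and follows essentially the same route as the paper: the paper decomposes $x-z$ via a Hadamard product with a $2k$-sparse indicator $\hat\omega$ padding $S\cup S^*$, observes that the outside piece is only $k$-sparse because $\mathrm{supp}(x)\subseteq S\cup S^*$, splits the index set into two $k$-sized halves to handle that piece, and finishes with the same triangle-inequality-plus-Cauchy--Schwarz step producing the $\sqrt{3}$. Your version, working directly with $S\cup S^*$ rather than a padded superset, is a purely cosmetic simplification of the same argument.
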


\begin{proof} In this proof, we denote by ${\bf e}=(1,1,\ldots,1)^T$ the $n$-dimensional vector of ones and we use the symbol $u\otimes v:=(u_1v_1,\ldots, u_nv_n)^T$ to denote the Hadamard product of two vectors $u, v\in \mathbb{R}^n. $  Let $ x, z, S, S^*$ be specified as in this lemma.
Let $\hat{\omega}\in\{0,1\}^n$ be a $2k$-sparse binary vector such that  $S\cup S^*\subseteq supp(\hat{\omega})$.
 We  partition $\hat{\omega}$  into two $k$-sparse binary vectors ${\omega}'$ and ${\omega}''$, i.e., $\hat{\omega}={\omega}'+{\omega}''$, where $supp({\omega}')\cap supp({\omega}'')=\emptyset$. The following relation holds for any  $u \in \mathbb{R}^n: $
\begin{equation}\label{lem-2-pf-1}
\left\lVert u\otimes\hat{\omega}\right\rVert_2^2=\left\lVert u\otimes{\omega}'\right\rVert_2^2+\left\lVert u\otimes{\omega}''\right\rVert_2^2.
\end{equation}
  Note that $x-z$ can be decomposed into two sparse vectors  $v^{(1)}$ and  $v^{(2)}$, i.e., $x-z=v^{(1)}+v^{(2)}$,
where $v^{(1)} = (x -z) \otimes\hat{\omega}$  is a $2k$-sparse vector and  $v^{(2)} = (x - z) \otimes(\bf e-\hat{\omega})$ is  a $k$-sparse
vector since $S\subseteq supp(\hat{\omega})$ and $ z$ is $k$-sparse. It is easy to see that
\begin{align}\label{lem-2-pf-2}
\left\lVert \left [(I-A^ T A)(x-z)\right]_{S\cup S^*}\right\rVert_2
\leq &\lVert  [(I-A^ T A)(x-z)]_{supp(\hat{\omega})}\rVert_2 \nonumber \\
=& \lVert  [(I-A^ T A)(v^{(1)}+v^{(2)}) ]\otimes\hat{\omega} \rVert_2 \nonumber \\
\leq &
\lVert  [(I-A^ T A)v^{(1)} ]\otimes\hat{\omega} \rVert_2
+ \lVert  [(I-A^ T A)v^{(2)} ]\otimes\hat{\omega} \rVert_2.
\end{align}
Since  $supp(v^{(1)})\subseteq supp(\hat{\omega})$, we have $|supp(v^{(1)})\cup supp(\hat{\omega})|\leq 2k$. It follows from Lemma \ref{lem-1} (i) that
\begin{equation}\label{lem-2-pf-3}
\lVert [(I-A^ T A)v^{(1)}]\otimes\hat{\omega}\rVert_2=\lVert  [(I-A^ T A)v^{(1)}]_{supp(\hat{\omega})}\rVert_2\leq  \delta_{2k} {\lVert v^{(1)} \rVert}_2.
\end{equation}
Since $|supp(v^{(2)})\cup supp({\omega}')|\leq 2k$ and $|supp(v^{(2)})\cup supp({\omega}'')|\leq 2k$, by using (\ref{lem-2-pf-1}) and Lemma \ref{lem-1} (i), we obtain
\begin{align*}\label{lem-2-pf-4}
\lVert  [(I-A^ T A)v^{(2)}]\otimes\hat{\omega}\rVert_2^2= & \lVert [(I-A^ T A)v^{(2)}]\otimes{\omega}'\rVert_2^2+\lVert  [(I-A^ T A)v^{(2)}]\otimes{\omega}''\rVert_2^2   \nonumber \\
\leq &  2(\delta_{2k} )^2{\lVert v^{(2)} \rVert}_2^2,
\end{align*}
i.e.,
\begin{equation}\label{lem-2-pf-5}
\lVert  [(I-A^ T A)v^{(2)} ]\otimes\hat{\omega} \rVert_2\leq  \sqrt{2}\delta_{2k} {\lVert v^{(2)} \rVert}_2.
\end{equation}
Combining \eqref{lem-2-pf-2}, \eqref{lem-2-pf-3} and  \eqref{lem-2-pf-5} yields
\begin{align}\label{lem-2-pf-6}
\left\lVert \left [(I-A^ T A)(x-z)\right]_{S\cup S^*}\right\rVert_2
\leq &   \delta_{2k}\left({\lVert v^{(1)} \rVert}_2+\sqrt{2} {\lVert v^{(2)}\rVert}_2\right)    \nonumber\\
\leq  & \sqrt{3} \delta_{2k}\sqrt{{\left\lVert v^{(1)} \right\rVert}_2^2+{\left\lVert v^{(2)} \right\rVert}_2^2}
= \sqrt{3} \delta_{2k}{\left\lVert x-z \right\rVert}_2,
\end{align}
where the second inequality follows from the fact $a+\sqrt{2}c\leq\sqrt{3(a^2+c^2)}$ for any $a,c\geq 0$.
\end{proof}

  According to Lemma \ref{lem-2}, the term $ \delta_{3k}$ in bounds \eqref{th-1-pf-3} and \eqref{th-2-pf-6-1} can be  replaced with $\sqrt{3}  \delta_{2k}. $ Thus we immediately obtain the theoretical performance results for the proposed algorithms in terms of RIP of order $2k. $ 	

\begin{Cor}\label{remark-1} Let $ y:= Ax + \nu$ be the inaccurate measurements of $ x.$ If the RIC, $\delta_{2k}, $  of the matrix $ A$ and the parameters $ \alpha$ and  $ \beta$ in HBHT satisfy the conditions:
\begin{equation}\label{rem-1-0}
\delta_{2k}<\frac{\sqrt{5}-1}{2\sqrt{3}} \approx 0.356,\ \ 0\leq\beta<\frac{\eta}{1+\sqrt{3}  \delta_{2k}}-1,\ \  \frac{2(1+\beta)-\eta}{1-\sqrt{3}  \delta_{2k}}<\alpha< \frac{\eta}{1+\sqrt{3}  \delta_{2k}},
\end{equation}
Then the conclusion of Theorem \ref{theorem-1} remains valid, with constants $\tau, C_1, C_2$ being defined the same way therein except
$b=\eta(|1-\alpha+\beta|+\alpha\sqrt{3}  \delta_{2k}). $
\end{Cor}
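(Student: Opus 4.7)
The plan is to rerun the proof of Theorem \ref{theorem-1} almost verbatim, making just one substitution at the single place where the bound on the order-$3k$ RIC enters. Tracing through that proof, the constant $\delta_{3k}$ appears exactly once: at inequality \eqref{th-1-pf-3}, where Lemma \ref{lem-1}(i) was used to estimate
$$\lVert [(I-A^TA)(x^p-x_S)]_{S\cup V^p}\rVert_2 \le \delta_{3k}\lVert x^p-x_S\rVert_2.$$
Every other bound in the proof — the application of Lemma \ref{lem-optimal} to control the hard-thresholding error and the application of Lemma \ref{lem-1}(ii) to the noise term $(A^T\nu')_{S\cup V^p}$ — already depends only on the RIC of orders $\le 2k$, and so carries over unchanged.

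The key observation is that Lemma \ref{lem-2} is designed exactly for this situation. Since $x_S$ and $x^p$ are both $k$-sparse and $|S\cup V^p|\le |S|+|V^p|\le 2k$, applying Lemma \ref{lem-2} with the identifications $x\leftrightarrow x_S$, $z\leftrightarrow x^p$, and $S^*\leftrightarrow V^p$ gives
$$\lVert [(I-A^TA)(x^p-x_S)]_{S\cup V^p}\rVert_2 \le \sqrt{3}\,\delta_{2k}\lVert x^p-x_S\rVert_2.$$
Inserting this into the same derivation produces the recursion
$$\lVert x^{p+1}-x_S\rVert_2 \le b\lVert x^p-x_S\rVert_2 + \eta\beta\lVert x^{p-1}-x_S\rVert_2 + \eta\alpha\sqrt{1+\delta_{2k}}\lVert\nu'\rVert_2$$
with the new constant $b=\eta(|1-\alpha+\beta|+\alpha\sqrt{3}\,\delta_{2k})$, which has the format required by Lemma \ref{lem-two-level-geometric}.

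It then remains to check that the coefficient condition $b+\eta\beta<1$ of Lemma \ref{lem-two-level-geometric} still holds under the new hypotheses \eqref{rem-1-0}. This is verified by the same two-case analysis as in Theorem \ref{theorem-1}: split on whether $\alpha\le 1+\beta$ or $\alpha>1+\beta$ to remove the absolute value in $|1-\alpha+\beta|$, then bound $\alpha$ by the appropriate endpoint of the interval in \eqref{rem-1-0}, and finally collapse the resulting expression using the identity $\eta^2-\eta=1$. The only substantive change relative to Theorem \ref{theorem-1} is the mechanical replacement of $\delta_{3k}$ by $\sqrt{3}\,\delta_{2k}$ throughout, so the conclusion $b<1-\eta\beta$ goes through identically. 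Once this is in hand, Lemma \ref{lem-two-level-geometric} immediately yields $\tau<1$ and the error bound \eqref{theorem-1-1} with $C_1,C_2$ unchanged in form.

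The main obstacle — and it is a mild one — is confirming that the admissible ranges for $\alpha$ and $\beta$ in \eqref{rem-1-0} are non-empty, i.e.\ that $\frac{\eta}{1+\sqrt{3}\,\delta_{2k}}-1>0$ and the corresponding $\alpha$-interval is nondegenerate. Both reduce to the inequality $\sqrt{3}\,\delta_{2k}<\eta-1=(\sqrt{5}-1)/2$, which is exactly the hypothesis $\delta_{2k}<(\sqrt{5}-1)/(2\sqrt{3})$; beyond this verification the argument is purely a transcription of the earlier proof with the substitution just described.
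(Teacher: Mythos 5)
Your proposal is correct and follows exactly the paper's route: the paper likewise obtains this corollary by invoking Lemma \ref{lem-2} to replace the single occurrence of $\delta_{3k}$ in \eqref{th-1-pf-3} with $\sqrt{3}\,\delta_{2k}$ and then rerunning the proof of Theorem \ref{theorem-1} with $b=\eta(|1-\alpha+\beta|+\alpha\sqrt{3}\,\delta_{2k})$. In fact you supply slightly more detail than the paper does, since you explicitly recheck the nonemptiness of the parameter ranges and the condition $b+\eta\beta<1$.
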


\begin{Cor}\label{remark-1} Let $ y:= Ax + \nu$ be the inaccurate measurements of $ x.$ If the RIC, $\delta_{2k}, $ of the matrix $ A$ and the parameters $ \alpha$ and  $ \beta$ in HBHTP satisfy the conditions:
 \begin{equation}\label{rem-2-1}
\delta_{2k}<\frac{1}{\sqrt{7} }\approx 0.377,\ \ 0\leq\beta<\frac{\frac{1}{\hat{\eta}}+1}{1+\sqrt{3}  \delta_{2k}}-1,\ \   \frac{1+2\beta-\frac{1}{\hat{\eta}}}{1-\sqrt{3}\delta_{2k}}<\alpha<\frac{\frac{1}{\hat{\eta}}+1}{1+\sqrt{3}\delta_{2k}},
\end{equation}
Then the conclusion of Theorem \ref{theorem-2} remains valid, with constants $\hat{\tau}, C_3, C_4$ being defined the same way therein except
$\hat{b}=\hat{\eta}(|1-\alpha+\beta|+\alpha\sqrt{3}  \delta_{2k}).
$
\end{Cor}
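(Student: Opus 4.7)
The plan is to repeat the proof of Theorem \ref{theorem-2} verbatim up to the point where the restricted isometry constant enters, and at that single point replace the use of Lemma \ref{lem-1}(i) by Lemma \ref{lem-2}. Concretely, I would first reproduce the symmetric-difference estimate
$$\|(x_S-x^{p+1})_{\overline{S^{p+1}}}\|_2\le\sqrt{2}\,\|(u^p-x_S)_{S^{p+1}\triangle S}\|_2,$$
derived from the definition $S^{p+1}=\mathcal{L}_k(u^p)$. Substituting the expansion (\ref{th-1-pf-1}) for $u^p-x_S$ and applying the triangle inequality splits the right-hand side into four pieces, three of which (the $|1-\alpha+\beta|$ piece, the $\beta$ piece, and the $\alpha A^T\nu'$ piece) are bounded exactly as in the proof of Theorem \ref{theorem-2} via Lemma \ref{lem-1}(ii) and the $2k$-cardinality of $S^{p+1}\triangle S$.

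The only step that changes is the bound for $\|[(I-A^TA)(x^p-x_S)]_{S^{p+1}\triangle S}\|_2$. Since $x_S$ is $k$-sparse, $x^p$ is $k$-sparse, and $|S^{p+1}\triangle S|\le 2k$, Lemma \ref{lem-2} applies with $x\leftarrow x_S$, $z\leftarrow x^p$, $S^*\leftarrow S^{p+1}\triangle S$, and yields
$$\|[(I-A^TA)(x^p-x_S)]_{S^{p+1}\triangle S}\|_2\le\sqrt{3}\,\delta_{2k}\,\|x^p-x_S\|_2.$$
Plugging this in place of (\ref{th-2-pf-6-1}) and then invoking Lemma \ref{lem-3} for the pursuit step produces the recursive inequality
$$\|x^{p+1}-x_S\|_2\le\hat{b}\,\|x^p-x_S\|_2+\hat{\eta}\beta\,\|x^{p-1}-x_S\|_2+(1-\hat{\tau})C_4\,\|\nu'\|_2,$$
with $\hat{b}=\hat{\eta}(|1-\alpha+\beta|+\alpha\sqrt{3}\delta_{2k})$, which is exactly the recursion claimed in the corollary.

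The remaining task is to verify that Lemma \ref{lem-two-level-geometric} applies, i.e. that $\hat{b}+\hat{\eta}\beta<1$, and that the ranges in (\ref{rem-2-1}) are nonempty. The hypothesis $\delta_{2k}<1/\sqrt{7}$ gives $\hat{\eta}=\sqrt{2}/\sqrt{1-\delta_{2k}^2}<\sqrt{7/3}$, whence $\sqrt{3}\delta_{2k}<\sqrt{3/7}<1/\hat{\eta}$; this in turn yields $\frac{1/\hat{\eta}+1}{1+\sqrt{3}\delta_{2k}}>1$, making the window for $\beta$ nonempty, and then the chain $\frac{1+2\beta-1/\hat{\eta}}{1-\sqrt{3}\delta_{2k}}<1+\beta<\frac{1/\hat{\eta}+1}{1+\sqrt{3}\delta_{2k}}$ makes the window for $\alpha$ nonempty as well. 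A case analysis on whether $\alpha\le 1+\beta$ or $\alpha>1+\beta$ (identical to the one in the proof of Theorem \ref{theorem-2}, but with $\delta_{3k}$ replaced by $\sqrt{3}\delta_{2k}$) then shows $\hat{b}<1-\hat{\eta}\beta$. Lemma \ref{lem-two-level-geometric} finally yields $\hat{\tau}<1$ and the bound (\ref{theorem-2-1}).

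The main obstacle is purely bookkeeping: one must keep track of which $\delta_{3k}$ in the original proof of Theorem \ref{theorem-2} is the one arising from the $(I-A^TA)$-term (the one that gets upgraded via Lemma \ref{lem-2}) versus those $\delta_{2k}$ and $\delta_k$ terms that are already sharp and remain unchanged. There is no new analytic content beyond Lemma \ref{lem-2} itself, so no substantive difficulty is expected; the calculation essentially reduces to substituting $\delta_{3k}\mapsto\sqrt{3}\delta_{2k}$ everywhere it appears in the critical contraction estimate and checking the arithmetic $\sqrt{3}\delta_{2k}<1/\hat{\eta}$ afforded by $\delta_{2k}<1/\sqrt{7}$.
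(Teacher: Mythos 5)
Your proposal is correct and follows essentially the same route as the paper, which derives this corollary simply by noting that Lemma \ref{lem-2} lets one replace the $\delta_{3k}$ arising from the $(I-A^TA)$ term in \eqref{th-2-pf-6-1} with $\sqrt{3}\delta_{2k}$ and then rerunning the argument of Theorem \ref{theorem-2}; your arithmetic check that $\delta_{2k}<1/\sqrt{7}$ is exactly what makes $\sqrt{3}\delta_{2k}<1/\hat{\eta}$ and hence the parameter windows nonempty is the right verification.
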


 \begin{rem}\label{remark-3}
 \emph{According to Proposition 6.6 in \cite {Foucart-2013}, one has the relation $\delta_{3k}\leq 3\delta_{2k}.$ If we use this relation to derive an upper bound for the left-hand side of \eqref{lem-2-1}, then the resulting bound would be too loose. The bound \eqref{lem-2-1} established here is much tighter, and thus it leads to a desired strong  result.
We summarize the best known conditions for guaranteed performance of several compressed sensing algorithms in Table  \ref{table-RIP}.
\begin{table}[htbp]
 \centering
 \caption{RIP-based bounds}\label{table-RIP}
 \vspace{0.2cm}
 \begin{tabular}{|c|c|c|c|c|c|}
\hline
Algorithms & {IHT}\cite{Zhao-2020-Improved} & IHT$^\mu$/GDS\cite{Foucart-2013,garg2009gradient}& HBHT & HTP\cite{Foucart-2011} & HBHTP \\
\hline
$\delta_{3k}<\delta_*$&{$0.618$}& &$0.618$&$0.577$&$0.577$\\
\hline
$\delta_{2k}<\delta_*$& &$0.333$&$0.356$& &$0.377$\\
\hline
 \end{tabular}
 \end{table}
Our analysis indicates that the RIP-based bounds for the performance guarantee of HBHT and HBHTP  can be the same as the best known  bounds for IHT and HTP, respectively.  Similar to the sufficient condition $\delta_{2k} < 0.333 $ for the performance guarantee of $IHT^\mu$ in \cite{Foucart-2013,garg2009gradient}, the more relaxed condition $\delta_{2k} < 0.356 $ is obtained for the algorithm HBHT in this paper.  It is also interesting to observe that the sufficient condition $\delta_{2k}< 0.377 $ for HBHTP is less restrictive than that of HBHT, while the conditions in terms of $\delta_{3k}$ for the two algorithms go other way round.}
\end{rem}

\subsection{Finite convergence}

From accurate measurements, the HBHTP can exactly recover a $k$-sparse signal in a finite number of iterations. The iteration complexity is given in the next result.

\begin{Thm}\label{corollary-1}
Suppose that the RIC, $\delta_{3k},$ of the measurement matrix $A$ and the algorithmic parameters $\alpha$ and $\beta$ in HBHTP satisfy  the condition \eqref{theorem-2-0}.
Then any $k$-sparse signal $x$ with $ \|x\|_0=k$ can be exactly  recovered by HBHTP from accurate measurements $y := Ax$ in at most
 \begin{equation}\label{corollary-1-1}
p^*=\left\lceil \frac{\log \left(\frac{\sqrt{2}C_3}{\hat{\eta}\mu}\right)}{\log \left(1/\hat{\tau}\right)}\right\rceil+1
\end{equation}
 iterations, where $
 \mu=\min_{x_i\not=0}|x_i|,
$ and $\hat{\eta},\hat{\tau},C_3$ are given in Theorem \ref{theorem-2}.

\end{Thm}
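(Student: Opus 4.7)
The plan is to combine the geometric convergence of Theorem \ref{theorem-2} (specialised to noiseless, exactly $k$-sparse signals) with a support-identification threshold that forces the pursuit step (\ref{algorithm-2-3}) to recover $x$ exactly once $\|x - x^p\|_2$ is sufficiently small relative to $\mu$.

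First I would specialise Theorem \ref{theorem-2}. Since $\nu = 0$ and $\|x\|_0 = k$, we have $x = x_S$ with $S = \mathcal{L}_k(x)$, so $\nu' = \nu + Ax_{\overline S} = 0$ and the conclusion reduces to
$$\|x - x^p\|_2 \leq C_3\hat{\tau}^{p-1}, \qquad p \geq 1.$$
Under $\delta_{3k} < 1/\sqrt{3}$ the matrix $A_S$ has full column rank, so the pursuit step (\ref{algorithm-2-3}) with $y = Ax$ produces $x^p = x$ as soon as $S^p = S$; since $|S| = |S^p| = k$, this equality is equivalent to $S \subseteq S^p$.

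Next I would derive a quantitative threshold on $\|x - x^p\|_2$ guaranteeing $S^p = S$. If $S^p \neq S$, pick $i \in S \setminus S^p$ and $j \in S^p \setminus S$; since $j \in \mathcal{L}_k(u^{p-1})$ while $i \notin \mathcal{L}_k(u^{p-1})$, one has $|u^{p-1}_j| \geq |u^{p-1}_i|$, and combined with $x_j = 0$ this yields
$$\mu \leq |x_i| \leq |(u^{p-1}-x)_i| + |(u^{p-1}-x)_j| \leq \sqrt{2}\,\|(u^{p-1}-x)_{S \bigtriangleup S^p}\|_2.$$
On the other hand, the chain obtained from (\ref{th-2-pf-4}) together with Lemma \ref{lem-3} in its noiseless form gives
$$\|x - x^p\|_2 \leq \frac{1}{\sqrt{1-\delta_{2k}^2}}\,\|(x - x^p)_{\overline{S^p}}\|_2 \leq \frac{\sqrt{2}}{\sqrt{1-\delta_{2k}^2}}\,\|(u^{p-1}-x)_{S \bigtriangleup S^p}\|_2 = \hat{\eta}\,\|(u^{p-1}-x)_{S \bigtriangleup S^p}\|_2.$$
Combining these two displays produces the sufficient condition $\|x - x^p\|_2 < \hat{\eta}\mu/\sqrt{2}$ for $S^p = S$, and hence for $x^p = x$.

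Finally, plugging the threshold into the linear rate, the condition $C_3 \hat{\tau}^{p-1} < \hat{\eta}\mu/\sqrt{2}$ is sufficient for exact recovery at iteration $p$; rearranging gives $p - 1 > \log(\sqrt{2}C_3/(\hat{\eta}\mu))/\log(1/\hat{\tau})$, and the smallest such integer $p$ yields the claimed bound (\ref{corollary-1-1}). The main obstacle is the third step—the quantitative threshold constant. The easy support-mismatch bound $\|x-x^p\|_2 \geq \mu$ (from $x^p_i = 0$ while $|x_i| \geq \mu$ for any $i \in S\setminus S^p$) already yields a looser bound $p^* = \lceil \log(C_3/\mu)/\log(1/\hat{\tau})\rceil + 1$; sharpening to the factor $\hat{\eta}\mu/\sqrt{2}$ in (\ref{corollary-1-1}) requires careful interplay between the structural estimate (\ref{th-2-pf-4}) and the improvement supplied by Lemma \ref{lem-3}.
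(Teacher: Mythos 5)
Your overall skeleton (geometric decay from Theorem \ref{theorem-2} specialised to $\nu'=0$, a support-identification threshold, and exactness of the pursuit step once $S^{p+1}=S$) is the right shape, and your fallback observation that $S^p\neq S$ forces $\|x-x^p\|_2\geq\mu$ does give a correct finite-convergence statement. But that fallback only proves the weaker bound $\lceil\log(C_3/\mu)/\log(1/\hat\tau)\rceil+1$ (weaker because $\sqrt{2}/\hat\eta=\sqrt{1-\delta_{2k}^2}\leq 1$), and your proposed route to the sharper constant is logically invalid. You derive $\mu\leq\sqrt{2}\,\|(u^{p-1}-x)_{S\triangle S^p}\|_2$ (a lower bound on $\|(u^{p-1}-x)_{S\triangle S^p}\|_2$) and $\|x-x^p\|_2\leq\hat\eta\,\|(u^{p-1}-x)_{S\triangle S^p}\|_2$ (an upper bound on $\|x-x^p\|_2$ by that same quantity). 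From ``$A\leq B$ and $B\geq C$'' you cannot conclude ``$A\geq C$'': these two displays do not combine to give the needed implication ``$S^p\neq S\Rightarrow\|x-x^p\|_2\geq\hat\eta\mu/\sqrt{2}$''. So the threshold that produces the factor $\sqrt{2}/\hat\eta$ in \eqref{corollary-1-1} is never established.

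The paper closes this gap by never routing the argument through $\|x-x^p\|_2$ at all. It compares the entries of $u^p$ directly: for $s\in S$ and $t\in\overline{S}$ it expands $u^p-x$ via \eqref{th-1-pf-1} and bounds
$|(u^p)_t|-|(u^p)_s|+\mu$ by $\sqrt{2}$ times norms restricted to the two-element set $\{s,t\}$ (the $\sqrt{2}$ coming from $|a|+|c|\leq\sqrt{2}\sqrt{a^2+c^2}$), applies Lemma \ref{lem-1}(i) on $\{s,t\}\cup supp(x^p-x)$, and then rewrites the coefficients using $\hat b=\hat\eta(|1-\alpha+\beta|+\alpha\delta_{3k})$ and $\hat\eta\beta=\hat\tau(\hat\tau-\hat b)$ to extract the factor $1/\hat\eta$. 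Telescoping the combined quantity $\|x^p-x\|_2+(\hat\tau-\hat b)\|x^{p-1}-x\|_2\leq C_3\hat\tau^{p-1}$ then gives $|(u^{p})_t|-|(u^{p})_s|+\mu\leq\frac{\sqrt{2}}{\hat\eta}C_3\hat\tau^{p}$, which is $<\mu$ at $p=p^*$, forcing $\mathcal{L}_k(u^{p^*})=S$ and hence exact recovery by the pursuit step. You already have the key entrywise ingredient ($\mu\leq|(u^{p-1}-x)_i|+|(u^{p-1}-x)_j|$); the fix is to bound $\|(u^{p}-x)_{\{s,t\}}\|_2$ from above directly in terms of $\|x^{p}-x\|_2$ and $\|x^{p-1}-x\|_2$ via Lemma \ref{lem-1}(i), rather than trying to pass back through $\|x-x^{p}\|_2$ and Lemma \ref{lem-3}.
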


\begin{proof} Denote by $ S={\cal L}_k(x).$
 For any $t\in \bar{S}$, by using the definition of $ u^p$ in HBHTP, which is defined as \eqref{algorithm-1-1}, and noting that  $ x_t=0,$  we have
  \begin{align*}
 |(u^p)_t|=&   \left|x_t+(1-\alpha+\beta)(x^p-x)_t+\alpha \left[(I-A^ TA)(x^p-x)\right]_t-\beta (x^{p-1}-x)_t\right|\\
 \leq   & |1-\alpha+\beta|\cdot|(x^p-x)_t|+\alpha \left|\left[(I-A^ TA)(x^p-x)\right]_t\right|+\beta|(x^{p-1}-x)_t|,
 \end{align*}
and for any $ s\in S, $ we have
\begin{align*}
 \begin{array}{rl}
|(u^p)_s|=& \left|x_s+(1-\alpha+\beta)(x^p-x)_s+\alpha \left[(I-A^ TA)(x^p-x)\right]_s-\beta (x^{p-1}-x)_s\right|\\
\geq& \mu-|1-\alpha+\beta|\cdot|(x^p-x)_s|-\alpha \left|\left[(I-A^ TA)(x^p-x)\right]_s\right|-\beta|(x^{p-1}-x)_s|, \\
\end{array}
\end{align*}
where  $\mu=\min_{x_i\not=0}|x_i|.$  Combining the above two inequalities leads to
 \begin{align}\label{cor-1-pf-3}
|(u^p)_t|    - &|(u^p)_s|+\mu  \nonumber \\
  \leq  &   |1-\alpha+\beta|\cdot [|(x^p-x)_t|+|(x^p-x)_s| ]+\beta [|(x^{p-1}-x)_t|+|(x^{p-1}-x)_s|]   \nonumber \\
&  \dis+\alpha \{|[(I-A^ TA)(x^p-x)] _t|+|[(I-A^ TA)(x^p-x)] _s|\}  \nonumber \\
 \leq  & \sqrt{2}(|1-\alpha+\beta|\cdot\|(x^p-x)_{\{s,t\}}\|_2+\alpha \|[(I-A^ TA)(x^p-x)]_{\{s,t\}}\|_2 \nonumber \\
 & \dis +\beta\|(x^{p-1}-x)_{\{s,t\}}\|_2).
\end{align}
Since  $s\in S=supp(x)$, we have  $|supp(x^p-x)\cup \{s,t\} |\leq 2k+1\leq 3k$. Using \eqref{theorem-2-2} and Lemma \ref{lem-1} (i), we  obtain
 \begin{align}\label{cor-1-pf-4}
|(u^p)_t|-|(u^p)_s|+\mu
\leq & \sqrt{2}\left((|1-\alpha+\beta|+\alpha \delta_{3k})\|x^p-x\|_2+\beta\|x^{p-1}-x\|_2\right) \nonumber \\
=&\frac{\sqrt{2}}{\hat{\eta}}\left(\hat{b}\|x^p-x\|_2+\hat{\tau}
(\hat{\tau}-\hat{b})\|x^{p-1}-x\|_2\right) \nonumber \\
\leq & \frac{\sqrt{2}}{\hat{\eta}}\hat{\tau}\left(\|x^p-x\|_2
+(\hat{\tau}-\hat{b})\|x^{p-1}-x\|_2\right),
\end{align}
where $\hat{\eta},\hat{\tau}, \hat{b}$ are given  in Theorem \ref{theorem-2}, and the last inequality above follows from the fact $\hat{b} < \hat{\tau}. $
Since $x$ is a $k$-sparse vector  and $\nu=0$, then  $\nu'=\nu+Ax_{\overline{S}}=0$. Hence, \eqref{th-2-pf-8} becomes
$$
\left\lVert x^{p+1}-x\right\rVert_2
\leq\hat{b}\left\lVert x^p-x\right\rVert_2+\hat{\eta}\beta\left\lVert  x^{p-1}-x
\right\rVert_2.
$$
With the aid  of \eqref{theorem-2-2} and note that $ \hat{\tau}(\hat{\tau}-\hat{b})= \hat{\eta} \hat{\beta},$  the inequality above can be further rewritten as
$$
\lVert x^{p+1}-x \rVert_2+(\hat{\tau}-\hat{b})\lVert x^{p}-x \rVert_2
\leq\hat{\tau}(\|x^p-x\|_2+(\hat{\tau}-\hat{b})\|x^{p-1}-x\|_2).
$$
Thus \eqref{cor-1-pf-4} reduces to
 $$
|(u^p)_t|-|(u^p)_s|+\mu
\leq\dis\frac{\sqrt{2}}{\hat{\eta}}C_3(\hat{\tau})^{p},
$$
where $C_3$ is given by \eqref{theorem-2-1-1}. After $p^*$ iterations, where $p^*$ is given by \eqref{corollary-1-1}, one must have that
 $$
|(u^{p^*})_t|-|(u^{p^*})_s|+\mu
\leq\dis\frac{\sqrt{2}}{\hat{\eta}}C_3(\hat{\tau})^{{p^*}}< \mu,
$$
where the second inequality holds due to the definition of $ p^*$ in  \eqref{corollary-1-1}.
It  implies that $ |(u^{p^*})_t|<|(u^{p^*})_s|$ for any  $s\in S$ and $ t\in \overline{S}$. This means $ S= {\cal L}_k (u^{p^*}).$  Note that $S^{p^*+1}=\mathcal{L}_k(u^{p^*})$ at the $p^*$-th iteration of HBHTP. Thus  at the $ p^*$-th iteration, one has $ S= S^{p^*+1}.$ Under the RIP condition which implies that any $k$ columns of $A$ are linearly independent, the system $ y= Az$ has at most one $k$-sparse solution. Therefore,  $x^{p^*+1}=x, $ i.e.,  the HBHTP successfully recover the $k$-sparse signal $x$ after finite number of iterations.
\end{proof}

\section{Stability of HBHT and HBHTP}\label{stability}
An efficient compressed sensing algorithm should be able to recover  signals in a stable manner in the sense that when the problem data (e.g., signal, measurement,  noise level) admits a slight change, the quality of signal recovery can  still  be guaranteed and the recovery error is still under control.
In this section, we establish a stability result for HBHT and HBHTP, respectively.
Recall that for given two integer numbers $s$ and $q$, the symbol $ \sigma_s(x)_q $ denotes the error (in terms of $ \ell _q$-norm) of the best $s$-term approximation of the vector $ x, $ i.e.,
$
\sigma_s(x)_q:=\inf\{\|x-z\|_q:  \|z\|_0\leq s\}.
 $
 We first give the following inequalities taken from \cite{Foucart-2013} (see, Theorem 2.5 and Lemma 6.10 therein).

 \begin{Lem} \label{4242} (i) For any $ z\in \mathbb{R}^n, $  $ \sigma_s(z)_2\leq \frac{1}{2\sqrt{s}}\| z\|_1. $
  (ii) For any $u,v \in \mathbb{R}^n$ satisfying $
   \underset{1\leq i\leq n}{\max}|u_i|\leq \underset{1\leq i\leq n}{\min}|v_i|,$
  one has $\| u\|_2 \leq \frac{1}{\sqrt{n}} \|v\|_1. $
 \end{Lem}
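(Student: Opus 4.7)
The plan is to handle the two inequalities independently, since they are essentially unrelated classical estimates drawn from approximation theory.

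For part (i), I would first reduce to the case where the entries of $z$ are sorted in non-increasing order of magnitude, $|z_{(1)}|\geq |z_{(2)}|\geq\cdots\geq |z_{(n)}|$, and write $S$ for the index set of the $s$ largest entries. The best $s$-term approximation in $\ell_2$ is $z_S$, so $\sigma_s(z)_2^2=\|z_{\overline S}\|_2^2=\sum_{k>s}z_{(k)}^2$. The key step is to bound each squared tail entry by its absolute value times the largest tail magnitude, producing
\[
\sigma_s(z)_2^2\leq |z_{(s+1)}|\cdot\|z_{\overline S}\|_1\leq |z_{(s)}|\cdot\|z_{\overline S}\|_1.
\]
Next I would invoke the fact that the $s$-th largest magnitude is bounded by the average of the top-$s$ magnitudes, namely $|z_{(s)}|\leq \|z_S\|_1/s$, to obtain $\sigma_s(z)_2^2\leq \|z_S\|_1\|z_{\overline S}\|_1/s$. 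Finally, AM--GM gives $\|z_S\|_1\cdot\|z_{\overline S}\|_1\leq \|z\|_1^2/4$, which delivers the claimed bound $\sigma_s(z)_2\leq \|z\|_1/(2\sqrt{s})$.

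For part (ii), the argument reduces to a one-line counting estimate. Setting $m:=\min_{1\leq i\leq n}|v_i|$, the hypothesis forces $|u_i|\leq m$ for every $i$, hence $\|u\|_2\leq\sqrt{n}\,m$. At the same time $\|v\|_1=\sum_i|v_i|\geq n m$, so $m\leq \|v\|_1/n$. Combining the two gives $\|u\|_2\leq \sqrt{n}\cdot\|v\|_1/n=\|v\|_1/\sqrt{n}$.

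The only delicate point is the constant $1/2$ in part (i): the naive pointwise estimate $|z_{(k)}|\leq \|z\|_1/k$ combined with $\sum_{k>s}1/k^2\leq 1/s$ only yields $\sigma_s(z)_2\leq \|z\|_1/\sqrt{s}$, losing a factor of $2$. Recovering the sharper constant hinges on the coupling $\|z_{\overline S}\|_2^2\leq |z_{(s)}|\|z_{\overline S}\|_1$ together with the AM--GM split of $\|z\|_1$ into its head and tail masses, and this is the only non-routine ingredient in the whole lemma; part (ii) presents no genuine obstacle.
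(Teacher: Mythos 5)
Your proof is correct. Note that the paper does not prove this lemma at all: it is quoted verbatim from Foucart--Rauhut (Theorem 2.5 and Lemma 6.10 of \cite{Foucart-2013}), so there is no in-paper argument to compare against. Your derivation of (i) --- bounding the tail energy by $|z_{(s)}|\,\|z_{\overline S}\|_1$, then using $|z_{(s)}|\le \|z_S\|_1/s$ and the AM--GM split $\|z_S\|_1\,\|z_{\overline S}\|_1\le \|z\|_1^2/4$ --- is exactly the standard route to the sharp constant $1/2$ (and matches the cited source's specialization to $p=1$, $q=2$), and your one-line counting argument for (ii) is the standard proof as well.
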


We now  establish a lemma which is an modification of Lemma 6.23 in \cite{Foucart-2013}  with $\ell_2$-norm.

\begin{Lem}\label{lem-4}
Given $x,x'\in\mathbb{R}^n$, $A\in\mathbb{R}^{m\times n}$, $\nu\in\mathbb{R}^m$  and the scalars  $\phi>0$ and $ \xi\geq 0.$  Let  $x'$ be a  $k$-sparse vector ($k\geq 2$) and $T:=\mathcal{L}_k(x)$. If
\begin{equation}\label{lem-4-2}
\|x_T-x'\|_2\leq \phi\|Ax_{\overline{T}}+\nu\|_2+\xi,
\end{equation}
then
\begin{equation}\label{lem-4-3}
\|x-x'\|_2\leq \frac{1+2\phi\sqrt{1+\delta_j}}{2\sqrt{j}}\sigma_j(x)_1+\phi\|\nu\|_2+\xi,
\end{equation}
where $j=\lfloor \frac{k}{2} \rfloor .$
\end{Lem}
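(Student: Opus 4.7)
The plan is to split $\|x-x'\|_2$ by the triangle inequality $\|x-x'\|_2 \leq \|x_T - x'\|_2 + \|x_{\overline{T}}\|_2$, absorb the first term via the hypothesis \eqref{lem-4-2} together with $\|Ax_{\overline{T}} + \nu\|_2 \leq \|Ax_{\overline{T}}\|_2 + \|\nu\|_2$, and then estimate the two ``tail'' quantities $\|x_{\overline{T}}\|_2$ and $\|Ax_{\overline{T}}\|_2$ separately in terms of $\sigma_j(x)_1$. Since $T=\mathcal{L}_k(x)$ collects the $k$ largest-magnitude entries of $x$ and $j=\lfloor k/2\rfloor$, I will repeatedly use that $k\geq 2j$.

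For $\|x_{\overline{T}}\|_2$, I plan to apply Lemma \ref{4242}(i) to $z:=x_{\overline{T_j}}$, where $T_j$ denotes the support of the top $j$ entries of $x$. Because the top $k-j$ entries of $z$ are precisely the entries of $x$ indexed by $T\setminus T_j$, the best $(k-j)$-term approximation error of $z$ equals $\|x_{\overline T}\|_2$. Thus Lemma \ref{4242}(i) yields $\|x_{\overline T}\|_2 = \sigma_{k-j}(z)_2 \leq \|z\|_1/(2\sqrt{k-j}) = \sigma_j(x)_1/(2\sqrt{k-j}) \leq \sigma_j(x)_1/(2\sqrt{j})$, which contributes the leading $1/(2\sqrt{j})$ term.

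For $\|Ax_{\overline{T}}\|_2$, I will use the standard block decomposition, partitioning $\overline{T}$ into index sets $T_1,T_2,\ldots$ of size $j$ in decreasing order of $|x_i|$, and I will also introduce a ``zeroth'' block $T_0\subseteq T$ consisting of the $j$ smallest-magnitude indices of $T$. Then $\|Ax_{\overline T}\|_2 \leq \sqrt{1+\delta_j}\sum_{l\geq 1}\|x_{T_l}\|_2$ by RIP. Applying Lemma \ref{4242}(ii) across consecutive blocks (where the $\ell_\infty$-domination $\max_{T_l}|x_i|\leq \min_{T_{l-1}}|x_i|$ holds by construction) gives $\|x_{T_l}\|_2 \leq \|x_{T_{l-1}}\|_1/\sqrt{j}$ for each $l\geq 1$. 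Summing produces $\sum_{l\geq 1}\|x_{T_l}\|_2 \leq \frac{1}{\sqrt{j}}\sum_{l\geq 0}\|x_{T_l}\|_1$, and the right-hand sum is $\sigma_{k-j}(x)_1$ because $T_0\cup\overline{T}$ is exactly the complement of the top $k-j$ indices of $x$; using $k-j\geq j$ once more, $\sigma_{k-j}(x)_1\leq \sigma_j(x)_1$, so $\|Ax_{\overline T}\|_2 \leq \sqrt{1+\delta_j}\,\sigma_j(x)_1/\sqrt{j}$. Combining all the pieces yields precisely \eqref{lem-4-3}.

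The main obstacle is the block-decomposition bookkeeping: the naive splitting that estimates $\|x_{T_1}\|_2$ against all of $\sigma_j(x)_1$ and then adds the $l\geq 2$ contribution produces an unwanted factor of $2$ on the RIP term. The trick of introducing $T_0$ from inside $T$ (the bottom $j$ indices of $T$) is essential: it lets the chain of Lemma \ref{4242}(ii) bounds telescope cleanly so that the total $\ell_1$ mass reassembles into $\sigma_{k-j}(x)_1$, not $\sigma_j(x)_1 + \|x_{\overline T}\|_1$. Apart from this, the argument is a careful application of Lemma \ref{4242} and the definition of RIP.
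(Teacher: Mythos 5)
Your proposal is correct and follows essentially the same route as the paper: the same triangle-inequality split, the same use of Lemma \ref{4242}(i) for $\|x_{\overline{T}}\|_2$, and the same size-$j$ block decomposition of $\overline{T}$ seeded by the bottom-$j$ block of $T$ (your $T_0$ is exactly the paper's $S_1$) so that the Lemma \ref{4242}(ii) chain telescopes to $\sigma_{k-j}(x)_1\leq\sigma_j(x)_1$. The only difference is cosmetic: by invoking the lemmas with parameter $k-j$ and then using $k-j\geq j$, you handle odd and even $k$ uniformly, whereas the paper splits into the two parity cases.
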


\begin{proof} There are only two cases.

Case I: $k\geq 2 $ is an odd integer number.  In this case, $|T|=k=2j+1$.
Denote $S_0:={\cal L}_{j+1}(x)\subset T$ and $S_1:=T\setminus S_0$.  It is not difficult to see that
\begin{equation}\label{lem-4-pf-1}
\|x_{\overline{T}}\|_2=\sigma_{j}(x_{\overline{S_0}})_2\leq \frac{1}{2\sqrt{j}}\|x_{\overline{S_0}}\|_1= \frac{1}{2\sqrt{j}}\sigma_{j+1}(x)_1,
\end{equation}
where  the first and final equalities follow from the definition of $ S_0, T$ and $ \sigma_s(\cdot)_q$, and the inequality in between follows from Lemma \ref{4242}(i).
There exists an integer $r\geq 2$ such that  $\overline{T}$ can be partitioned as  $\overline{T}=\bigcup\limits_{l=2}^r S_l$, where
$$S_2=L_j(x_{\overline{T}}),\ S_3=L_j(x_{\overline{T\cup S_2}}),\ldots,S_{r-1}=L_j(x_{\overline{T\cup S_2\cup\ldots \cup  S_{r-2}}}),\ S_r=\overline{T\cup S_2\cup\ldots \cup S_{r-1}}$$ with cardinalities $|S_l|=j$ for $j=2,\dots , r-1$ and $|S_r|\leq j.$ Using the triangular inequality together with  \eqref{def-RIC-1}, we obtain
\begin{equation}\label{lem-4-pf-2}
\|Ax_{\overline{T}}+\nu\|_2\leq \sum_{l= 2}^r\|Ax_{S_l}\|_2+\|\nu\|_2\leq \sqrt{1+\delta_j}\sum_{l= 2}^r\|x_{S_l}\|_2+\|\nu\|_2.
\end{equation}
Based on Lemma \ref{4242} (ii),  we observe that
$$
\|x_{S_l}\|_2\leq \frac{1}{\sqrt{j}} \|x_{S_{l-1}}\|_1,\ \ 2\leq l \leq r.
$$
This together with \eqref{lem-4-pf-2} implies that
\begin{equation}\label{lem-4-pf-4}
\|Ax_{\overline{T}}+\nu\|_2\leq \sqrt{\frac{1+\delta_j}{j} }\sum_{l= 1}^{r-1}\|x_{S_l}\|_1+\|\nu\|_2\leq\sqrt{\frac{1+\delta_j}{j} }\|x_{\overline{S_0}}\|_1+\|\nu\|_2=\sqrt{\frac{1+\delta_j}{j} }\sigma_{j+1}(x)_1+\|\nu\|_2.
\end{equation}
Combining \eqref{lem-4-pf-1}, \eqref{lem-4-2} with \eqref{lem-4-pf-4} leads to
\begin{equation}\label{lem-4-pf-5}
\|x-x'\|_2\leq\|x_{\overline{T}}\|_2+ \|x_{T}-x'\|_2\leq
\frac{1}{2\sqrt{j}}\sigma_{j+1}(x)_1+\phi\sqrt{\frac{1+\delta_j}{j} }\sigma_{j+1}(x)_1+\phi\|\nu\|_2+\xi.
\end{equation}

Case II: $k\geq 2$ is an even integer number. In this case, $|T|=k=2j$.
Denote $S_0:={\cal L}_{j}(x)\subset T$.  Repeating the argument in Case I and using the relation $\|x_{\overline{S_0}}\|_1= \sigma_{j}(x)_1$ for this case,  we obtain the following relation:
\begin{equation}\label{lem-4-pf-6}
\|x-x'\|_2\leq
\frac{1}{2\sqrt{j}}\sigma_{j}(x)_1+\phi\sqrt{\frac{1+\delta_j}{j} }\sigma_{j}(x)_1+\phi\|\nu\|_2+\xi.
\end{equation}
Note that $\sigma_{j+1}(x)_1\leq\sigma_{j}(x)_1$. Both \eqref{lem-4-pf-5}  and  \eqref{lem-4-pf-6} imply the desired relation \eqref{lem-4-3}  for any positive integer $k\geq 2$.
\end{proof}

By using  Lemma \ref{lem-4}, the main result on the stability of HBHT  can be stated as follows.

\begin{Thm}\label{theorem-3}
Suppose that the RIC, $\delta_{3k} $ ($k \geq 2),$ of the matrix $A$ and the parameters $\alpha$ and $\beta$ satisfy the conditions in \eqref{theorem-1-0}. Let $y:=Ax+\nu$ be the measurements of  $x $ with measurement errors $ \nu.$
Then the sequence  $\{ x^p\},$ generated by HBHT with initial points $ x^1=x^0= 0,$ satisfies
\begin{equation}\label{theorem-3-1}
 {\left\lVert x-x^p\right\rVert}_2\leq \frac{1+2C_2\sqrt{1+\delta_j}}{2\sqrt{j}}\sigma_j(x)_1+C_2\|\nu\|_2+({\tau}-{b}+1)\tau^{p-1}\|x\|_2,
\end{equation}
where $j=\lfloor \frac{k}{2}\rfloor $ and $C_2,\tau, b$ are given in Theorem \ref{theorem-1}.
\end{Thm}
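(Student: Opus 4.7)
The plan is to combine the recovery guarantee in Theorem \ref{theorem-1} with the sparse-to-compressible conversion provided by Lemma \ref{lem-4}. Theorem \ref{theorem-1} controls the truncated error $\|x_S - x^p\|_2$ in terms of the effective noise $\|\nu'\|_2 = \|Ax_{\overline{S}} + \nu\|_2$ together with a geometric term, while Lemma \ref{lem-4} is precisely designed to turn such a truncated-error estimate into a full error bound on $\|x - x^p\|_2$ expressed through $\sigma_j(x)_1$ and $\|\nu\|_2$. So the proof is essentially a one-step chaining of these two results.

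First I would apply Theorem \ref{theorem-1} to get
$$\|x_S - x^p\|_2 \leq C_1\tau^{p-1} + C_2\|\nu'\|_2,$$
where $S = \mathcal{L}_k(x)$ and $\nu' = \nu + Ax_{\overline{S}}$. Under the initialization $x^0 = x^1 = 0$, the constant $C_1$ from \eqref{theorem-1-1-1} simplifies to
$$C_1 = \|x_S\|_2 + (\tau-b)\|x_S\|_2 = (\tau - b + 1)\|x_S\|_2 \leq (\tau - b + 1)\|x\|_2,$$
since $\|x_S\|_2 \leq \|x\|_2$. The displayed estimate can thus be rewritten as
$$\|x_S - x^p\|_2 \leq C_2\|Ax_{\overline{S}} + \nu\|_2 + (\tau - b + 1)\tau^{p-1}\|x\|_2,$$
which is exactly the hypothesis \eqref{lem-4-2} of Lemma \ref{lem-4} with the identifications $x' = x^p$, $T = S = \mathcal{L}_k(x)$, $\phi = C_2$, and $\xi = (\tau - b + 1)\tau^{p-1}\|x\|_2$. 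To invoke Lemma \ref{lem-4}, I need $x^p$ to be $k$-sparse, which is immediate from Step S2 of HBHT for every $p \geq 1$ (and trivial for $p = 0$ since $x^0 = 0$).

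Plugging these choices into Lemma \ref{lem-4} with $j = \lfloor k/2 \rfloor$ yields
$$\|x - x^p\|_2 \leq \frac{1 + 2C_2\sqrt{1+\delta_j}}{2\sqrt{j}}\sigma_j(x)_1 + C_2\|\nu\|_2 + (\tau - b + 1)\tau^{p-1}\|x\|_2,$$
which is the claimed bound \eqref{theorem-3-1}. Because the hard analytical work is carried inside Theorem \ref{theorem-1} and Lemma \ref{lem-4}, I do not anticipate a real obstacle here; the only care required is the algebraic collapse of $C_1$ under the zero initialization and the identification of the effective noise $\|\nu'\|_2$ from Theorem \ref{theorem-1} with the quantity $\|Ax_{\overline{T}} + \nu\|_2$ that Lemma \ref{lem-4} consumes.
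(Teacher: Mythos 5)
Your proposal is correct and follows essentially the same route as the paper: apply Theorem \ref{theorem-1} to bound $\|x_S-x^p\|_2$, recognize it as hypothesis \eqref{lem-4-2} of Lemma \ref{lem-4} with $x'=x^p$, $T=S$, $\phi=C_2$, and then collapse $C_1$ to $(\tau-b+1)\|x\|_2$ using the zero initialization. The only (immaterial) difference is that the paper simplifies $C_1$ after invoking Lemma \ref{lem-4} rather than before.
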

\begin{proof}According to \eqref{theorem-1-1}, we know
$$
{ \lVert x_S-x^p \rVert}_2\leq C_1\tau^{p-1} +
C_2 \lVert  \nu' \rVert_2=C_1\tau^{p-1} +
C_2 \lVert  \nu+Ax_{\overline{S}} \rVert_2,
$$
where $S=\mathcal{L}_k(x)$. This is the form of \eqref{lem-4-2}  in Lemma \ref{lem-4} with  $x'=x^p, \phi=C_2, \xi=C_1\tau^{p-1}$ and $T=S$. Hence, it follows from \eqref{lem-4-3} that
\begin{equation}\label{th-3-pf-1}
\|x-x^p\|_2\leq \frac{1+2C_2\sqrt{1+\delta_j}}{2\sqrt{j}}\sigma_j(x)_1+C_2\|\nu\|_2+C_1\tau^{p-1}.
\end{equation}
Substituting $x^1=x^0=0$ into \eqref{theorem-1-1-1} yields
\begin{equation}\label{th-3-pf-2}
C_1={\lVert  x_S \rVert}_2+({\tau}-{b}){\lVert  x_S \rVert}_2\leq ({\tau}-{b}+1){\lVert  x \rVert}_2.
\end{equation}
Combining \eqref{th-3-pf-1} and \eqref{th-3-pf-2} leads to the desired estimation \eqref{theorem-3-1}.
\end{proof}

By a similar  proof to the above, we obtain the stability result for HBHTP.

\begin{Thm}\label{theorem-4}
Suppose that  the RIC, $\delta_{3k}(k \geq 2),$ of the matrix $A$ and the parameters $\alpha$ and $\beta$ satisfy the conditions in \eqref{theorem-2-0}. Let $y:=Ax+\nu$ be the measurements of $ x.$
Then the iterates $\{ x^p\},$ generated by HBHTP with initial points $x^1=x^0=0,$ satisfies
\begin{equation}\label{theorem-4-1}
 {\lVert x-x^p\rVert}_2\leq \frac{1+2C_4\sqrt{1+\delta_j}}{2\sqrt{j}}\sigma_j(x)_1+C_4\|\nu\|_2+(\hat{\tau}-\hat{b}+1){\hat{\tau}}^{p-1}\|x\|_2,
\end{equation}
where $j=\lfloor \frac{k}{2}\rfloor $ and $C_4,\hat{\tau},\hat{b}$ are given in Theorem \ref{theorem-2}.
\end{Thm}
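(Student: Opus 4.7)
The plan is to mirror the proof of Theorem \ref{theorem-3} (the HBHT stability result), replacing the HBHT-specific constants $\tau, b, C_1, C_2$ with their HBHTP counterparts $\hat{\tau}, \hat{b}, C_3, C_4$ from Theorem \ref{theorem-2}. Since Lemma \ref{lem-4} is algorithm-agnostic---it only requires a bound of the form $\|x_T - x'\|_2 \le \phi \|Ax_{\overline{T}} + \nu\|_2 + \xi$ on the restricted error---and since Theorem \ref{theorem-2} already provides exactly such a bound for HBHTP (with $T = S = \mathcal{L}_k(x)$, $\nu' = \nu + Ax_{\overline{S}}$, $\phi = C_4$, and $\xi = C_3 \hat{\tau}^{p-1}$), the work reduces to two short steps.

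First, I would invoke Theorem \ref{theorem-2} under the stated RIP and parameter conditions to obtain
\[
\|x_S - x^p\|_2 \le C_3 \hat{\tau}^{p-1} + C_4 \|\nu + Ax_{\overline{S}}\|_2,
\]
with $S = \mathcal{L}_k(x)$. This is precisely the hypothesis \eqref{lem-4-2} of Lemma \ref{lem-4} with $x' = x^p$, $T = S$, $\phi = C_4$, and $\xi = C_3 \hat{\tau}^{p-1}$. Applying \eqref{lem-4-3} then yields
\[
\|x - x^p\|_2 \le \frac{1 + 2 C_4 \sqrt{1 + \delta_j}}{2 \sqrt{j}} \sigma_j(x)_1 + C_4 \|\nu\|_2 + C_3 \hat{\tau}^{p-1},
\]
where $j = \lfloor k/2 \rfloor$.

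Second, I would bound $C_3$ using the initialization $x^0 = x^1 = 0$. Plugging these into the definition \eqref{theorem-2-1-1} gives
\[
C_3 = \|x_S\|_2 + (\hat{\tau} - \hat{b}) \|x_S\|_2 = (\hat{\tau} - \hat{b} + 1) \|x_S\|_2 \le (\hat{\tau} - \hat{b} + 1) \|x\|_2,
\]
where the last inequality uses $\|x_S\|_2 \le \|x\|_2$. Substituting this bound into the previous display produces exactly \eqref{theorem-4-1}.

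There is no genuine obstacle here: the hard analytical work was carried out in the proofs of Theorem \ref{theorem-2} (the recursive RIP-based estimate) and Lemma \ref{lem-4} (the $\ell_1$-to-$\ell_2$ conversion from restricted error to full error). The only thing to double-check is that the hypotheses of Lemma \ref{lem-4} are met, namely that $x^p$ is $k$-sparse (which it is, since the pursuit step \eqref{algorithm-2-3} restricts $\operatorname{supp}(x^{p+1}) \subseteq S^{p+1}$ with $|S^{p+1}| = k$) and that $k \ge 2$ (assumed). The stated RIP bound $\delta_{3k} < 1/\sqrt{3}$ trivially implies $\delta_j < 1$ for the $j = \lfloor k/2 \rfloor$ appearing in the Lemma \ref{lem-4} bound, so the factor $\sqrt{1+\delta_j}$ is well-defined.
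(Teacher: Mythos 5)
Your proposal is correct and follows essentially the same route the paper intends: the paper proves Theorem \ref{theorem-4} "by a similar proof" to Theorem \ref{theorem-3}, i.e., by feeding the error bound of Theorem \ref{theorem-2} (with $T=S$, $\phi=C_4$, $\xi=C_3\hat{\tau}^{p-1}$) into Lemma \ref{lem-4} and then bounding $C_3\le(\hat{\tau}-\hat{b}+1)\|x\|_2$ via the zero initialization. Your additional checks (that $x^p$ is $k$-sparse by the pursuit step and that $\hat{\tau}\ge\hat{b}$ so the coefficient is nonnegative) are exactly the right things to verify.
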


From \eqref{theorem-3-1} and \eqref{theorem-4-1}, we see that the recovery error $ \|x-x^p\|_2$  can be controlled and can be measured in terms of $\sigma_j(x)_1,$ measurement errors, and the number of iterations performed.  These results claims that a slight variance of these factors will not significantly affect the recovery error, and that if the signal is $j=\lfloor k/2\rfloor$-compressible (i.e.,  $\sigma_j(x)_1$ is small) and if the measurements are accurate enough, then the signal will be recovered by the proposed algorithms provided that enough number of iterations are performed.

\section{Numerical experiments}\label{simulation}

All mentioned experiments in this section were performed on a PC with the processor Intel(R) Core(TM) i7-10700 CPU @ 2.90GHz and 16GB memory. In these experiments, the measurement matrices $A\in\mathbb{R}^{m\times n}$ are Gaussian random matrices whose entries are independent and identically distributed (iid) and follow the standard normal distribution ${\mathcal N}(0,1)$ in Section \ref{recovery} and  $\mathcal N(0,m^{-1})$  in Section \ref{NE-phase-transition}, respectively. All sparse vectors  $x^*\in\mathbb{R}^{ n}$ are also randomly generated, whose nonzero entries  are iid and follow ${\mathcal N}(0,1)$ and the position of nonzero entries  follows the uniform distribution.

\subsection{Comparison of performance}\label{recovery}

We first demonstrate some numerical results on recovery success rates of HBHT and HBHTP and the average number of iterations and CPU time required by these algorithms  to achieve the recovery success of sparse signals. We compare their performances with the iterative algorithms OMP, SP, CoSaMP, HTP and IHT. We let HBHT and HBHTP start from $x^0=x^1= 0$ and other iterative algorithms start from  $x^0= 0.$
 The size of the matrices  in this experiment is $400\times 800 .$ All iterative algorithms  are allowed to perform up to 50 iterations (which is set as the maximum number of iterations in our experiments), except for OMP which, by its structure, is performed exactly $k$ iterations, equal to the sparsity level of the target signal $x^*.$  For every given sparsity level $k$, 100 random examples of $(A,x^*)$ are generated to estimate the success rates of algorithms. An individual recovery is called success if the solution produced by an algorithm satisfies the criterion
\begin{equation}\label{recov-criter}
\|x^p-x^*\|_2/\|x^*\|_2\leq 10^{-3}.
\end{equation}
Let us first compare the algorithms in the case of $ A$ being un-normalized.

 \begin{figure}[htbp]
  \centering
  \subfigure[Accurate measurements]{
  \begin{minipage}[t]{0.45\linewidth}
  \centering
  \includegraphics[width=\textwidth,height=0.8\textwidth]{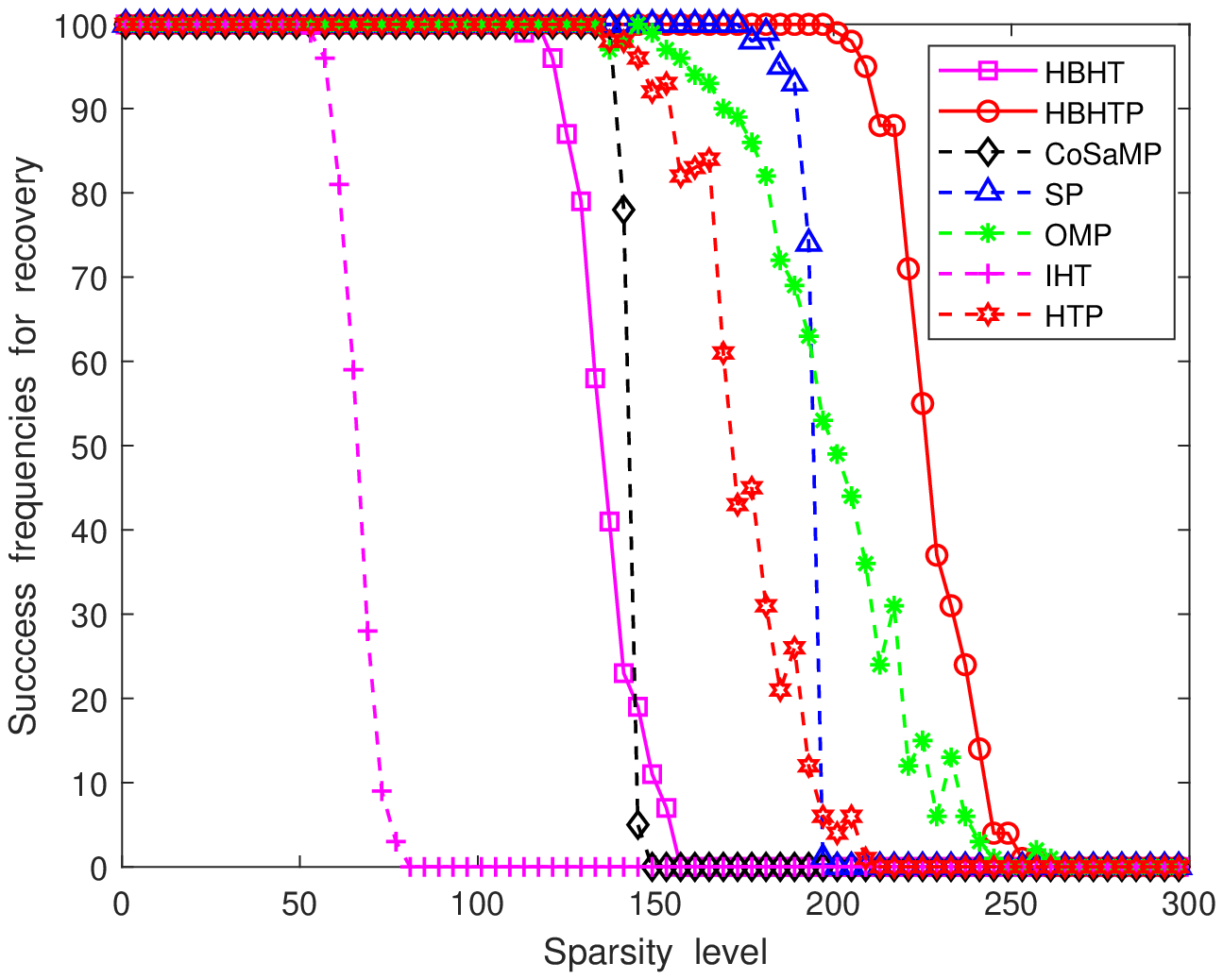}
  \end{minipage}
  }
  \subfigure[ Inaccurate measurements.]{
  \begin{minipage}[t]{0.45\linewidth}
  \centering
  \includegraphics[width=\textwidth,height=0.8\textwidth]{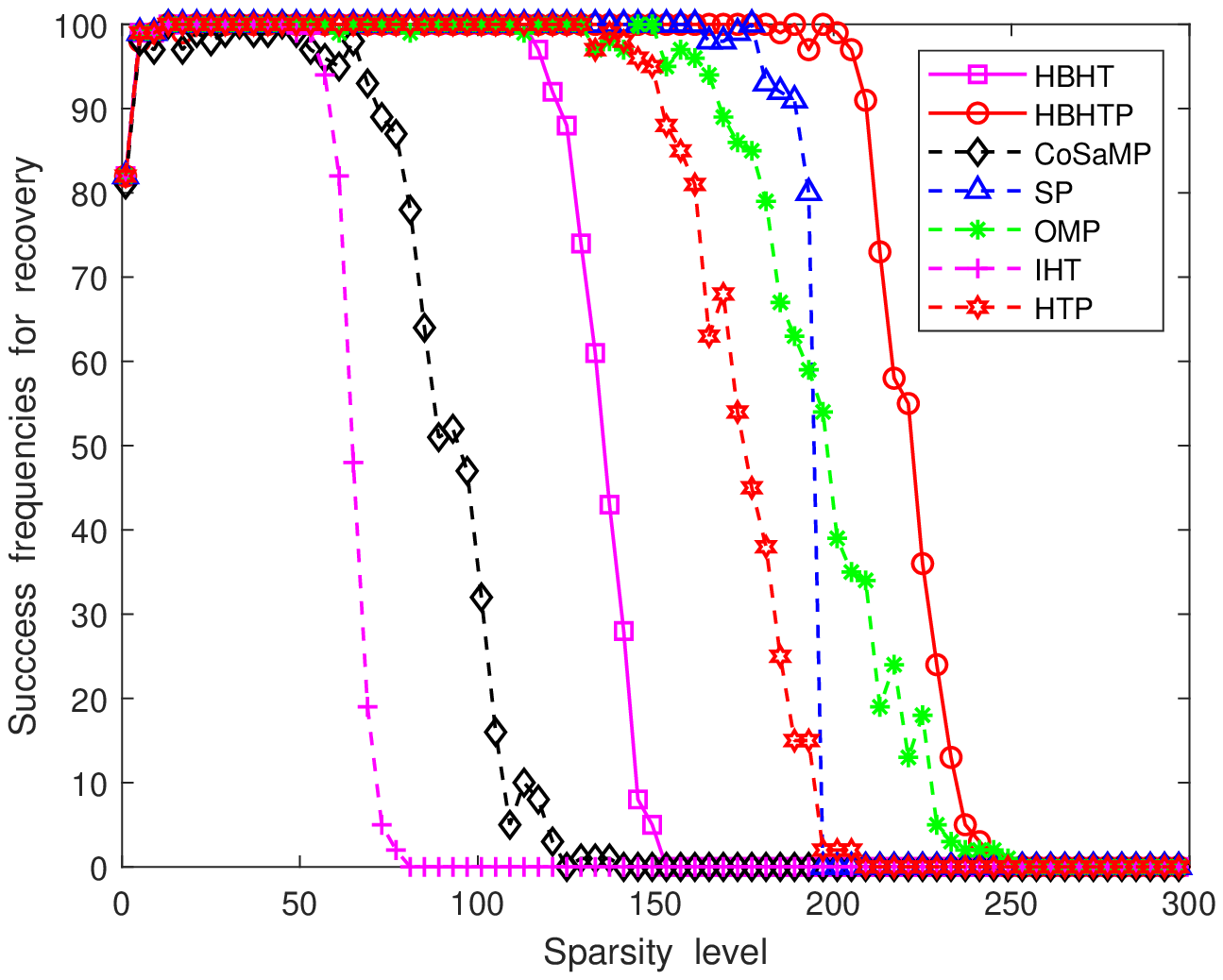}
  \end{minipage}
  }
  \caption{Comparison of success frequencies (rates) of  algorithms for signal recovery  with accurate  and inaccurate measurements, respectively. The parameters $\alpha=1.5\times 10^{-3}$ and $ \beta=0.6$ are set for HBHT and $\alpha=7\times10^{-3}$ and $ \beta=0.7$ for HBHTP. }\label{fig-comparison-OSCIH}
  \end{figure}

\subsubsection{Performance with unnormalized matrices}\label{subsect1}

The performance of iterative-type thresholding methods is closely related to the  choice of stepsize in each step.  When $A$ is un-normalized/unscaled, initial simulations indicate that $\alpha=10^{-3}$ is a proper choice for IHT and HTP,  $\alpha\in[10^{-3},2\times 10^{-3}]$ and $  \beta\in(0,0.6]$ are proper choices for HBHT, and $\alpha\in[10^{-3}, 8\times10^{-3}]$ and $ \beta\in(0,0.7]$ are suitable for HBHTP, where the range for $\beta$ is implied from (\ref{rem-parameter-1}). We now start to compare algorithms using both accurate and inaccurate measurements. Given a random pair of $(A,x^*),$  the accurate and inaccurate measurements are given respectively by $y:=Ax^*$ and $y:=Ax^*+\epsilon h$,  where $\epsilon=0.008$ and $h$ is a standard Gaussian random noise vector. We use the fixed parameters $\alpha=1.5\times 10^{-3}$ and $ \beta = 0.6$ in HBHT  and  $\alpha=7\times 10^{-3}$ and $ \beta =0.7$ in HBHTP.  The estimated success rates of the algorithms  are shown  in Fig. \ref{fig-comparison-OSCIH} in which the sparsity level $k$ is ranged from 1 to 297 with stepsize 4.  It shows that the HBHTP generally outperforms  the OMP, SP and HTP, and it might perform clearly better than  CoSaMP, HBHT and IHT. We also observe from the experiments that the success rate of HBHT is slightly worse than that of CoSaMP in noiseless settings but it might be better than CoSaMP in noisy settings.

   \begin{figure}[htbp]
   \centering
   \subfigure[Accurate measurements]{
   \begin{minipage}[t]{0.45\linewidth}
   \centering
   \includegraphics[width=\textwidth,height=0.8\textwidth]{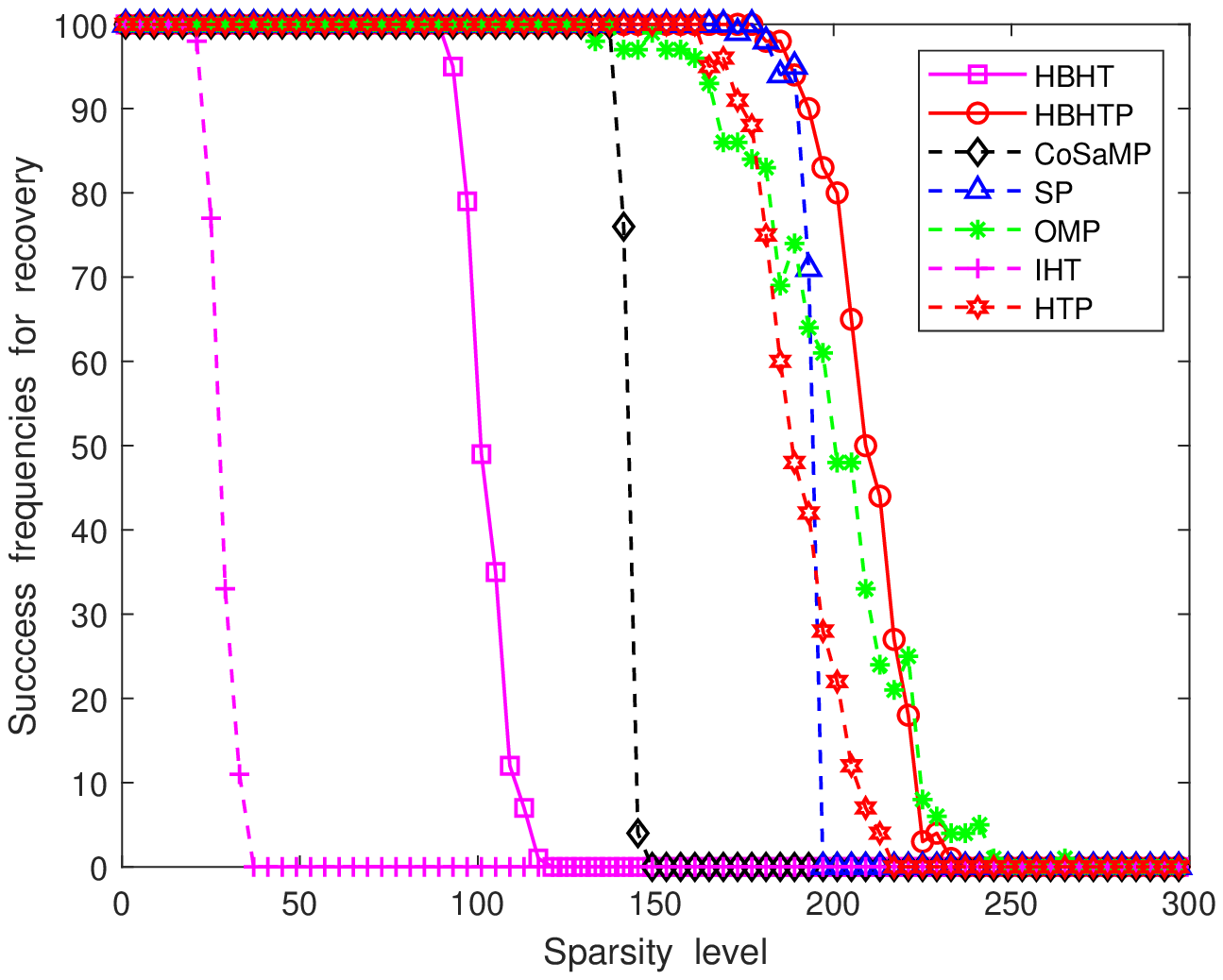}
   \end{minipage}
   }
   \subfigure[Inaccurate measurements.]{
   \begin{minipage}[t]{0.45\linewidth}
   \centering
   \includegraphics[width=\textwidth,height=0.8\textwidth]{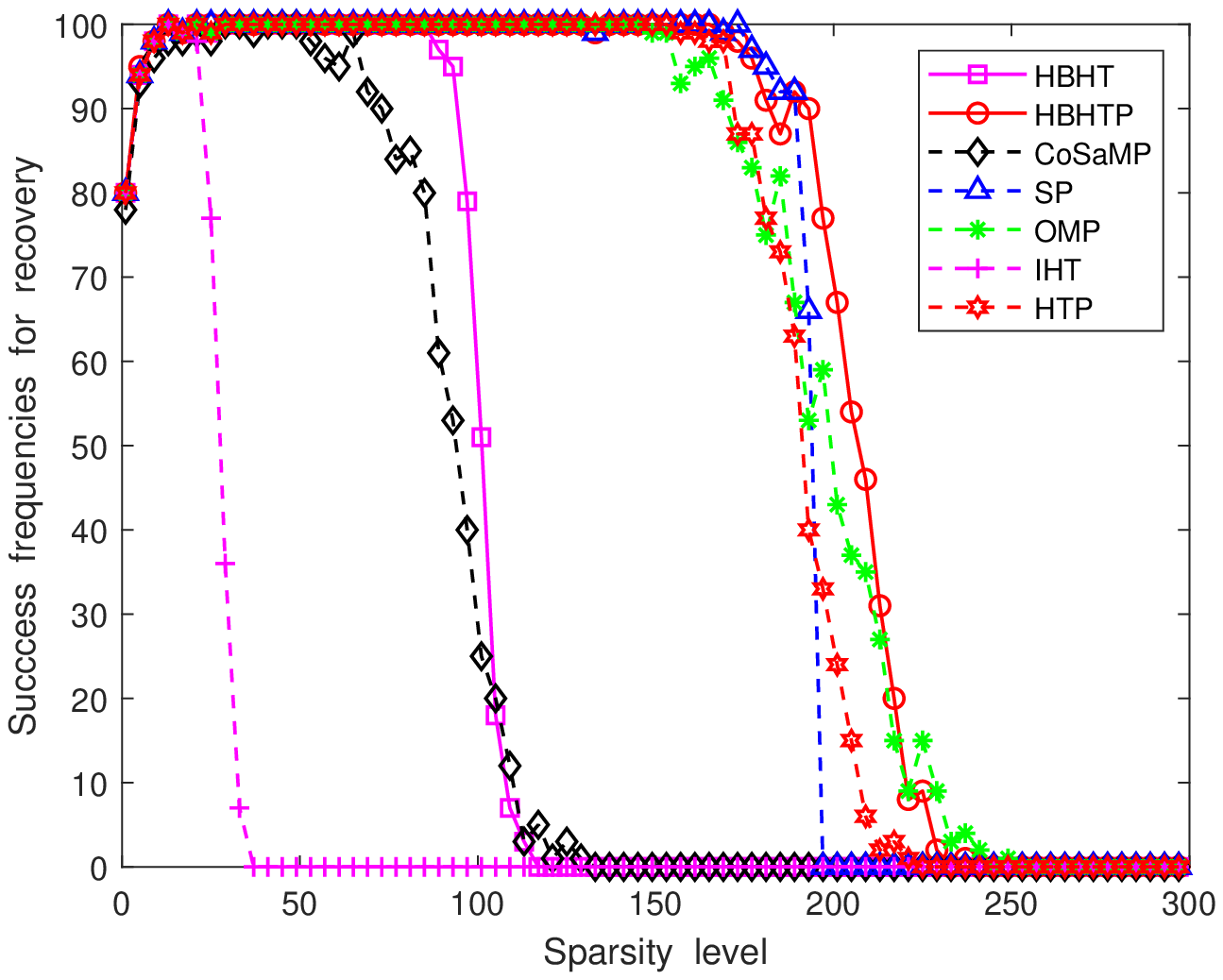}
   \end{minipage}
   }
   \caption{Comparison of success frequencies of algorithms with accurate and inaccurate measurements, respectively. The parameters $\alpha=0.6$ and $ \beta=0.1$  are set in HBHT, and $\alpha=1.7$ and $\beta=0.7$ are set in HBHTP. }\label{fig-precond-comparison-OSCIH}
   \end{figure}

\subsubsection{Performance with normalized matrices}\label{NE-precondition}

 The existing theory claims that the IHT and HTP with a larger stepsize  such as $ \alpha =1$ remains convergent if the matrix satisfies the RIP, and it is well known  that the normalized Gaussian matrix $\bar{A}:=\frac{1}{\sqrt{m}}A$ may satisfy the RIP in high probability (see, e.g., Chapter 9 in \cite{Foucart-2013} for details).
In terms of a normalized matrix, the problem (\ref{main-optimal-eq}) is equivalent to
$
\underset{z}{\textrm{argmin}}\{ {\left\lVert \bar{y}-\bar{A}z\right\rVert}_2^2: \left\lVert z\right\rVert_0\leq k\},
$
where $\bar{y}=\frac{1}{\sqrt{m}}y$. The entries of such a normalized Gaussian matrix follow the distribution $\mathcal N(0,m^{-1})$. By taking into account the theoretical results in previous sections and testing for the values of parameters  $ (\alpha, \beta),$ we found the choices $\alpha\in[0.4+2\beta,1.6] $ and $ \beta\in(0,0.6]$ are suitable for HBHT and
$\alpha\in[0.3+2\beta,1.7]$ and $ \beta\in(0,0.7]$ are suitable for HBHTP to achieve a good performance.  We repeated the experiments in  Section \ref{subsect1} by setting the stepsize $ \alpha=1 $ for IHT and HTP, the specific values $\alpha=0.6$ and $\beta=0.1$ for HBHT and $\alpha=1.7$ and $ \beta=0.7$  for HBHTP. The results are demonstrated in Fig. \ref{fig-precond-comparison-OSCIH} which appear to be similar to that of Fig. \ref{fig-comparison-OSCIH}.  However, one can observe that the normalization of the matrix, accordingly enlarged stepsize, and the choices of parameters do affect the recovery ability of HBHT, HBHTP, IHT and HTP to a certain degree.  Again, it seems that the HBHTP performs generally better than other algorithms in noiseless and noisy settings, and the HBHT  may perform better than CoSaMP and IHT in some noisy situations. Compared to IHT and HTP, the heavy-ball-based technique does play a vital role in speeding up and enhancing the performance of the traditional thresholding algorithms for sparse signal recovery.

\subsubsection{Average number of iterations and time}

We now compare the average number of iterations and CPU time required by several algorithms to meet the criterion (\ref{recov-criter}) with accurate measurements. The testing environment is the same as Section \ref{subsect1}.  Within 50 iterations, if $x^p$ satisfies criterion  \eqref{recov-criter}, the algorithm
 terminates and the number of iterations $p$ is recorded. Otherwise, the number of iterations is recorded as 50. For OMP, the number of iterations is equal to the sparsity level of the input signal.   Fig. \ref{fig-precond-time-numiter-accu}(a) indicates that the average number of iterations required by HBHTP  is lower than that of  SP and HTP, and might be  much lower than that of OMP, CoSaMP, HBHT  and IHT especially when the sparsity level $k$ is high.

 \begin{figure}[htbp]
  \centering
  \subfigure[Average number of iterations]{
  \begin{minipage}[t]{0.45\linewidth}
  \centering
  \includegraphics[width=\textwidth,height=0.8\textwidth]{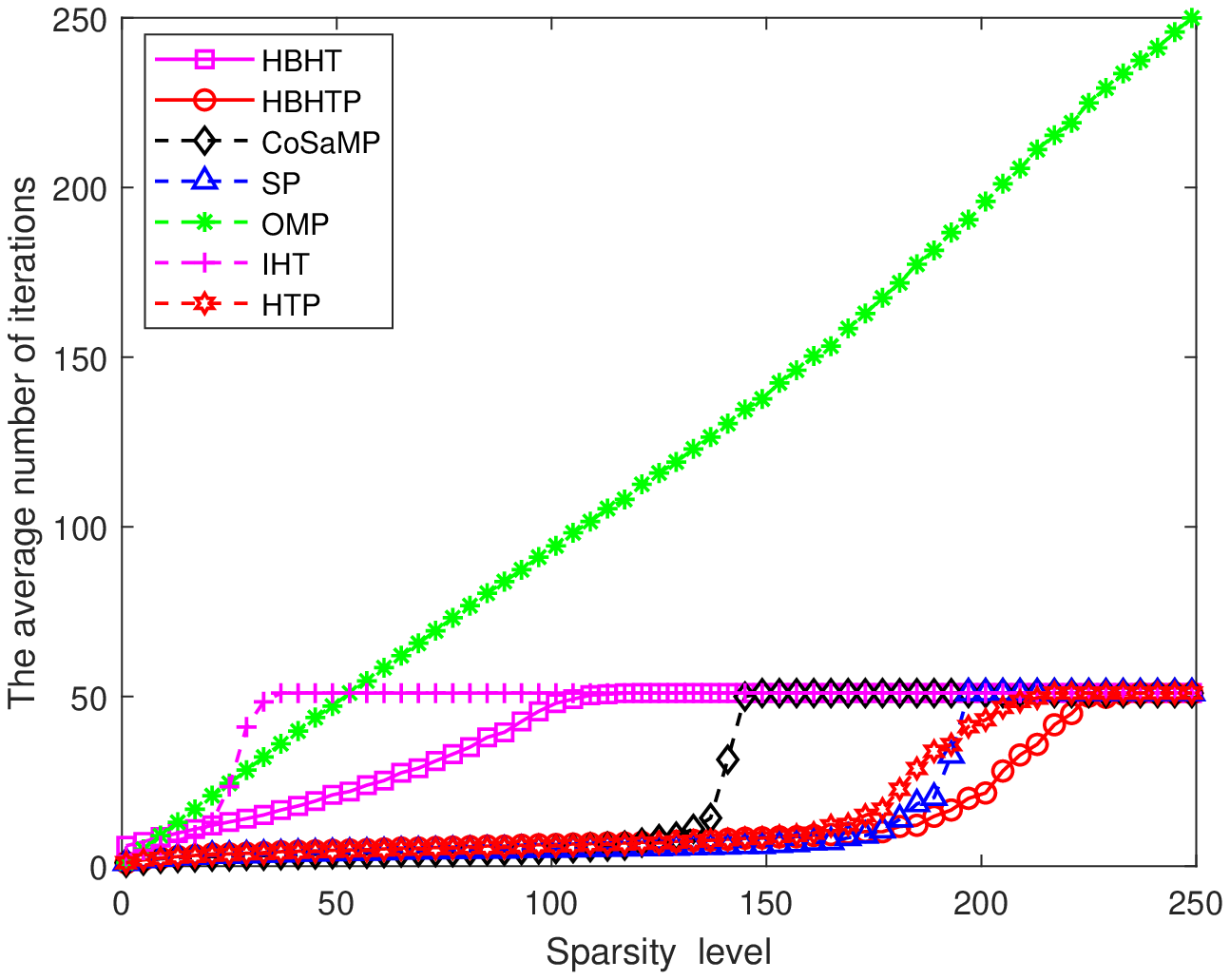}
  \end{minipage}
  }
  \subfigure[Average CPU time]{
  \begin{minipage}[t]{0.45\linewidth}
  \centering
  \includegraphics[width=\textwidth,height=0.8\textwidth]{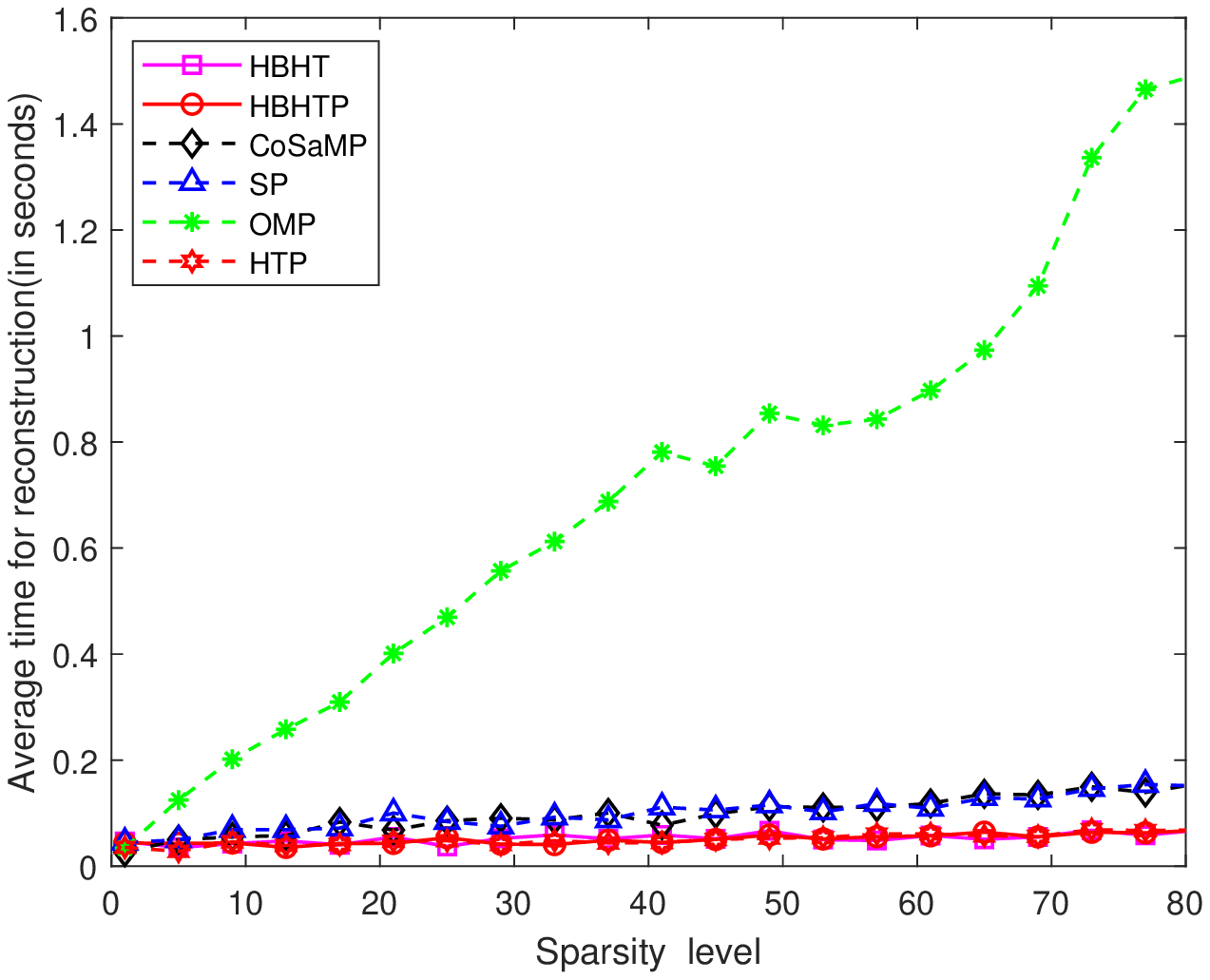}
  \end{minipage}
  }
  \caption{Comparison of  average number of iterations and time taken by algorithms to meet the recovery  criterion (\ref{recov-criter}) with accurate measurements. The parameters  $\alpha=0.6 $ and $ \beta=0.1$ are set in HBHT and $\alpha=1.7$  and $ \beta=0.7$ in HBHTP. }\label{fig-precond-time-numiter-accu}
  \end{figure}

 Fig. \ref{fig-precond-comparison-OSCIH}(a) indicates that   all $k$-sparse signals with  $k\leq 80$ can be recovered by all mentioned algorithms except IHT. Thus we focus on the signals with sparsity levels $k\leq 80$ to compare the average time consumed by algorithms  except IHT to meet the criterion \eqref{recov-criter}. The results are demonstrated in Fig. \ref{fig-precond-time-numiter-accu}(b), from which one can see that OMP takes more time  than other algorithms to recovery the signal, and that  the average time taken by SP, CoSaMP, HBHTP, HBHT and HTP increases slowly in a linear manner with respect to the sparsity level $k,$  and the average time consumed by SP and CoSaMP is approximately twice of HBHT, HBHTP  and HTP.  This indicates that the proposed algorithms have some advantage in time saving for signal recovery.

  \subsection{Phase transition}\label{NE-phase-transition}

We further investigate and compare the performances of algorithms through the empirical phase transition curves (PTC) and algorithm selection maps (ASM) introduced in  \cite{blanchard2015,blanchard2015cgiht}.  All $m\times n$ matrices in this subsection are Gaussian random matrices with fixed $n=2^{12}$, whose  entries are iid and follow the distribution $\mathcal N(0,m^{-1})$. The parameters $(\alpha,\beta)$ in HBHT and HBHTP are set exactly the same as in Section \ref{NE-precondition}.

 \subsubsection{Phase transition curves}\label{phase-transition-curves}
Denote by $\delta=m/n$ and $\rho=k/m.$  The phase transition curve of an algorithm separates the $(\delta, \rho)$ space into \emph{success} and \emph{failure} regions. The region below the curve, called recovery region, represents the problem instances with $(\delta, \rho)$ that can be exactly or approximately solved by the algorithm, while the region above  the curve indicates the problem instances with $ (\delta, \rho)
$ to which the algorithm does not appear to find their correct solutions. The empirical phase transition curves demonstrated in this section are logistic regress curves identifying the 50\% success rate for the given algorithm applying to a given  problem class. This method was first introduced in \cite{blanchard2015, blanchard2015cgiht}.

\begin{figure}[htbp]
   \centering
  \subfigure[Accurate  measurements]{
   \begin{minipage}[t]{0.3\linewidth}
   \centering
  \includegraphics[width=\textwidth,height=0.9\textwidth]{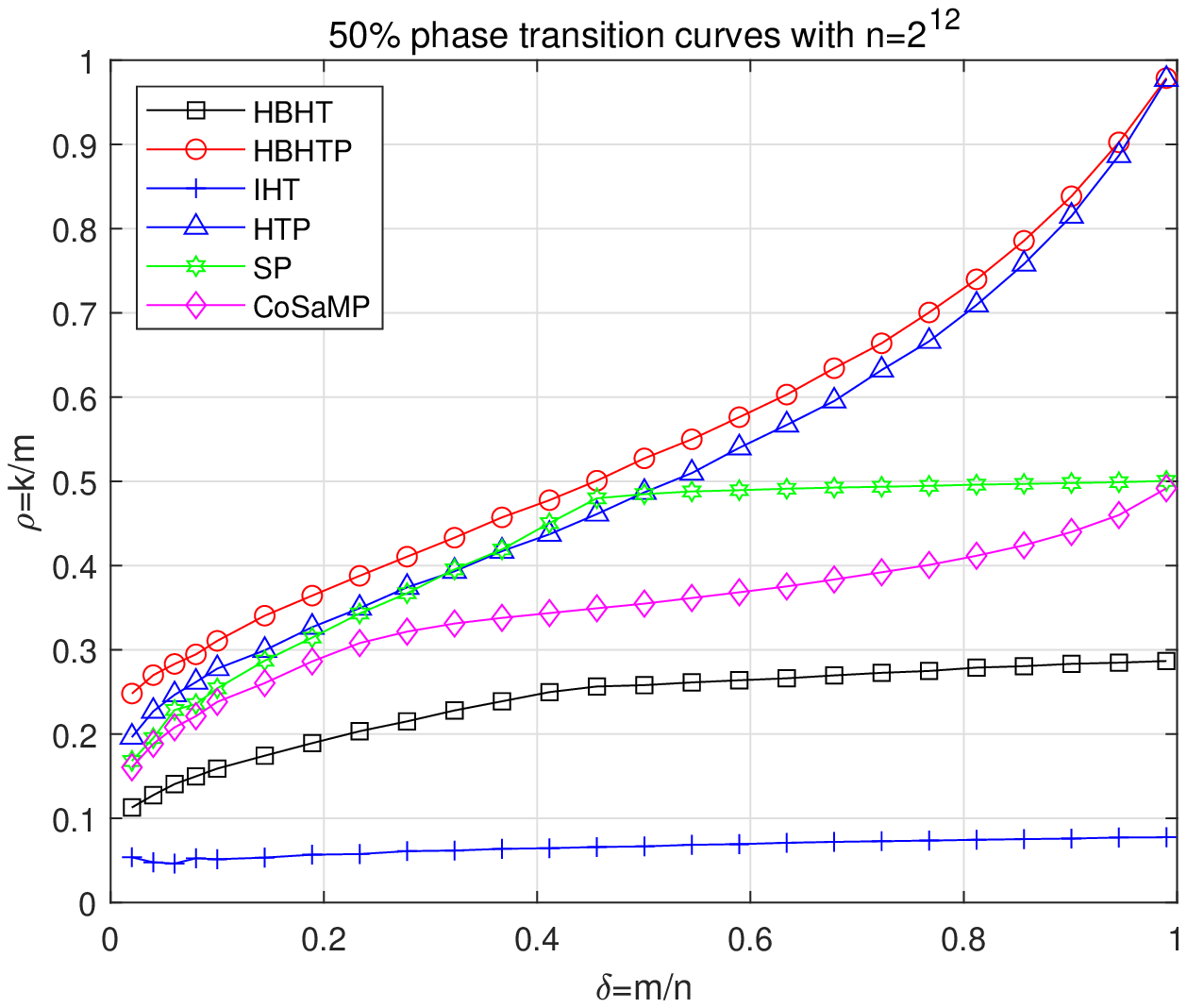}
  \end{minipage}
    }
   \subfigure[Inaccurate  measurements]{
   \begin{minipage}[t]{0.3\linewidth}
  \centering
 \includegraphics[width=\textwidth,height=0.9\textwidth]{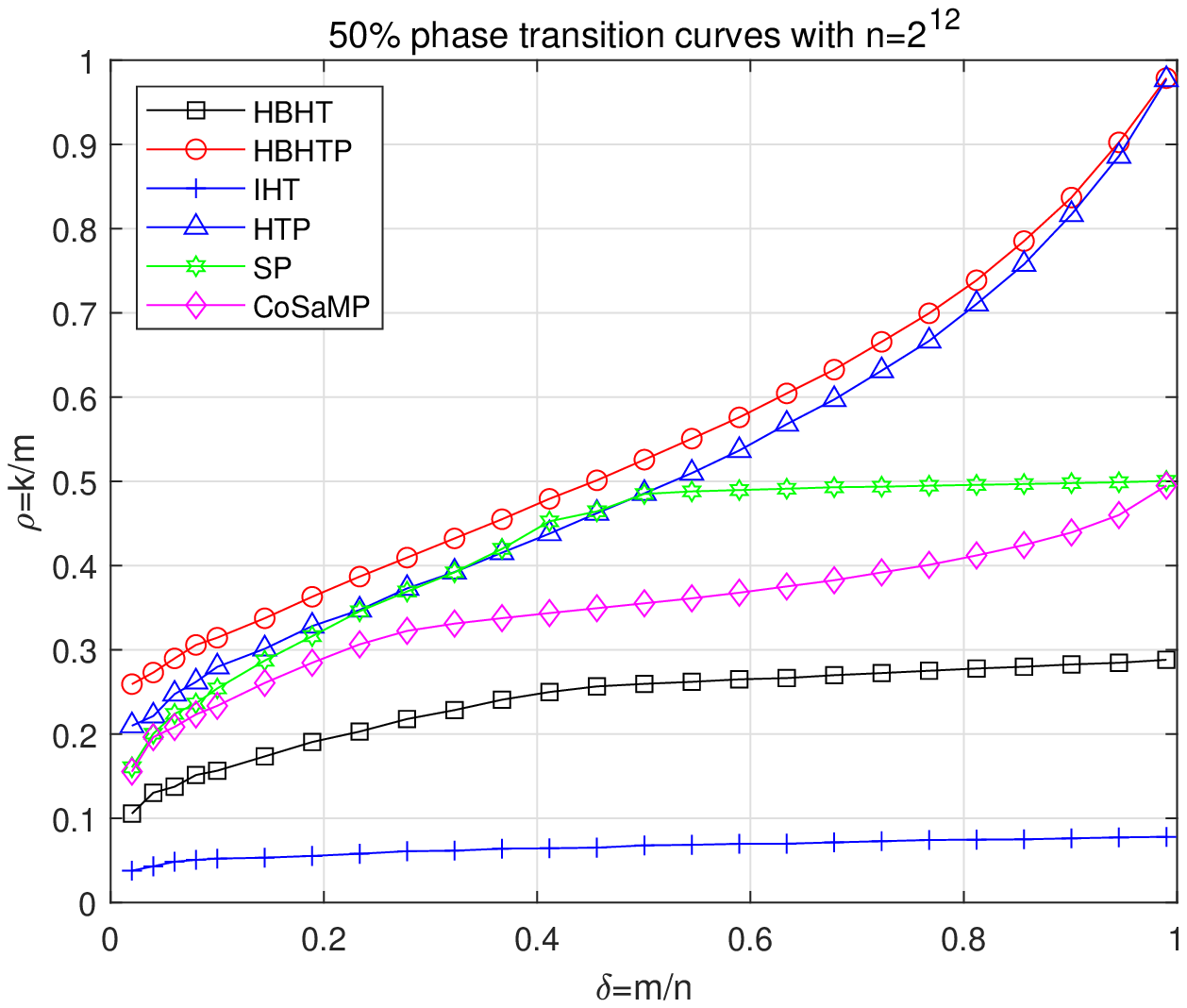}
    \end{minipage}
    }
    \subfigure[Different noises]{
      \begin{minipage}[t]{0.3\linewidth}
      \centering
     \includegraphics[width=\textwidth,height=0.9\textwidth]{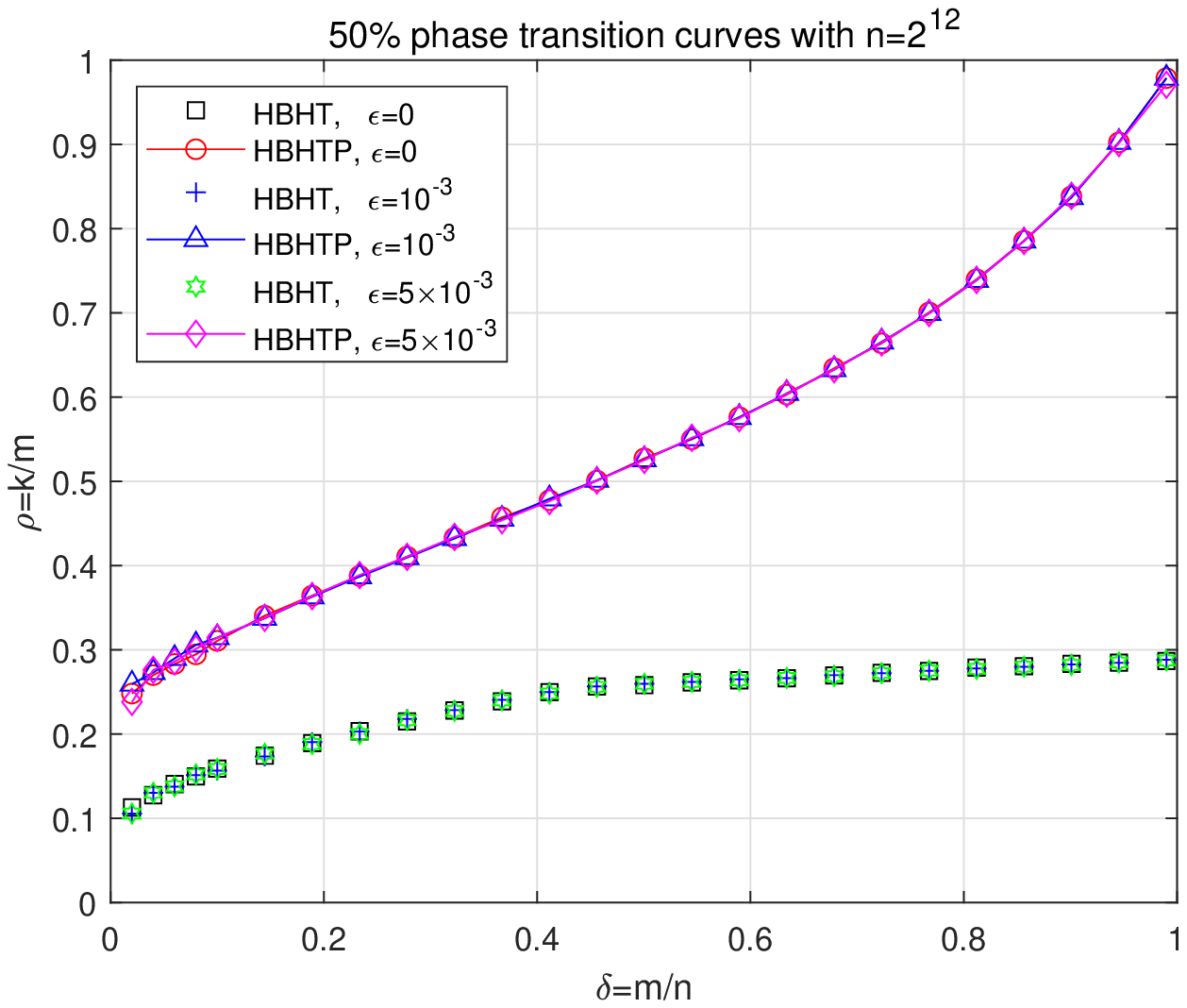}
     \end{minipage}
       }
   \caption{The $50 \%$ success rate phase transition curves  for six algorithms.  }\label{fig-RPTC-six-alg}
   \end{figure}
 We now briefly introduce the mechanism for generating such a curve. The interested readers may find more detailed information about this from the references \cite{blanchard2015, blanchard2015cgiht}. To generate the PTC and ASM, we consider 25 different values of  $m=\lceil \delta\cdot n\rceil$ where
  \begin{equation}\label{delta-value}
\delta\in\{0.02,0.04,0.06,0.08\}\cup\{0.1,0.1445,\ldots,0.99\},
  \end{equation}
where the interval $[0.1,0.99]$ was equally divided into 20 parts. For every value of  $m$, we collect 50 groups of sparsity levels $k=\lceil\rho\cdot m \rceil$ where $\rho$  is ranged from 0.02 to 1 with stepsize 0.02. For a fixed $m$, the recovery phase transition region for each algorithm is estimated by the interval $[k_{\min},k_{\max}]$, where $k_{\min}$ and $k_{\max}$ can be determined by a bisection method. They are  the critical values to ensure that the recovery success rate is at least 90\%   for any $k<k_{\min}$ and at most 10\% for any $k>k_{\max}$. For simplicity, we introduce the notations
$k_j \stackrel{\Delta}{=} k_{\min}+\lceil j\cdot \Delta k\rceil(j=0,1,\ldots,J)$, where $ \Delta k=(k_{\max}-k_{\min})/J$ and $J=k_{\max}-k_{\min} $ if $k_{\max}-k_{\min}<50;$ otherwise $ J=50. $
When estimating the success rate of an algorithm, $Nb=10$ problem instances are tested for each given $(k,m,n)$, where $k=k_j, j=0,1,\ldots,J$. Based on the success rates, the phase transition curves can be obtained from the following logistic regression model \cite{blanchard2015,blanchard2015cgiht}:
$$
\min_{(\gamma_0,\gamma_1)}\sum_{j=0} ^{J}\left|g(k_j/m)-\frac{suc(k_j,m,n)}{Nb}\right|,
$$
where
$$
g(\rho)=\frac{1}{1+exp[-\gamma_0(1-\gamma_1\rho)]},
$$
 and $suc(k_j,m,n)$ is the number of recovery success among $Nb$ problem instances for each $(k_j,m,n), j=0,1,\ldots,J.$  The 50\% success recovery phase transition curves are defined by $g(\rho)=0.5.$

  The curves  for the algorithms HBHT, HBHTP, IHT, HTP, CoSaMP and SP are summarized in Fig. \ref{fig-RPTC-six-alg}. In this comparison, the parameters  $\alpha=0.6$ and $\beta=0.1$ are used in HBHT   and $\alpha=1.7$ and $\beta
  =0.7$  in HBHTP. The accurate and inaccurate measurements are given by $y=Ax^*$ and $y=Ax^*+\epsilon h$, respectively, where $h$ is a standard Gaussian random vector  and $\epsilon=0.001$. From Fig. \ref{fig-RPTC-six-alg} (a) and (b), we see that HBHTP  has the highest  phase transition curves. This indicates that HBHTP may outperform the other five algorithms for sparse signal recovery in both noiseless and noisy environments. One can also see that  the phase transition curves  of SP, CoSaMP, HBHT and IHT are below the line  $\rho=0.5$ as $\delta\geq 0.5.$  This implies that the recovery performance of these algorithms would not remarkably be improved even when the number of measurements  is increased. By contrast, the phase transition curves of HBHTP and HTP  are twice as high as those of SP and CoSaMP as $\delta\rightarrow 1$. To see the influence of noise levels on the performance of algorithms, the phase transition curves for HBHT and HBHTP with three different noise levels $\epsilon\in\{0,10^{-3},5\times 10^{-3}\}$ are demonstrated in Fig. \ref{fig-RPTC-six-alg}(c), from which one can observe that the curves of HBHT and HBHTP do not significantly change with respect to the noise level when the noise level is relatively low. This sheds light on the stability of the two algorithms in signal recovery.

 \begin{figure}[htbp]
            \centering
            \begin{minipage}[t]{0.45\linewidth}
            \centering
            \includegraphics[width=\textwidth,height=0.8\textwidth]{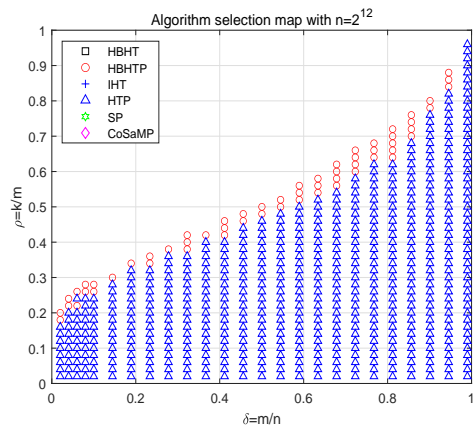}
            \end{minipage}
            \caption{Selection map with accurate  measurements.  }\label{fig-Selection-map}
            \end{figure}

    \subsubsection{Algorithm selection map}\label{Algorithm-selcetions}

   The intersection of the recovery regions below the phase transition curves indicates that multiple algorithms are capable of signal recovery. To choose an algorithm,
   one might also consider the computational time for recovery. As a result, the so-called algorithm selection map was introduced in \cite{blanchard2015,blanchard2015cgiht}, which demonstrates the least average recovery time of the algorithms with accurate measurements. To draw an algorithm selection map,  for each $\delta$ taking the values in \eqref{delta-value}, 10 problem instances are tested for every algorithm on the sampled phase space with the mesh  $(\delta,\rho)$ with
   $\rho=\{j/50, j=1,2,\ldots,50\}$ until the success rate is lower than 90\%. The algorithm with least computational time  will be identified on the map. The map is shown in Fig. \ref{fig-Selection-map}, which clearly depicts two regions in the phase plane,  wherein HBHTP is the fastest algorithm for solving problem instances with relatively large $\rho,$ while the HTP reliably recovers the signal in least time in other cases.

\begin{figure}[htbp]
               \centering
                 \subfigure[Average time for the fastest algorithm]{
                        \begin{minipage}[t]{0.3\linewidth}
                        \centering
                        \includegraphics[width=\textwidth,height=0.8\textwidth]{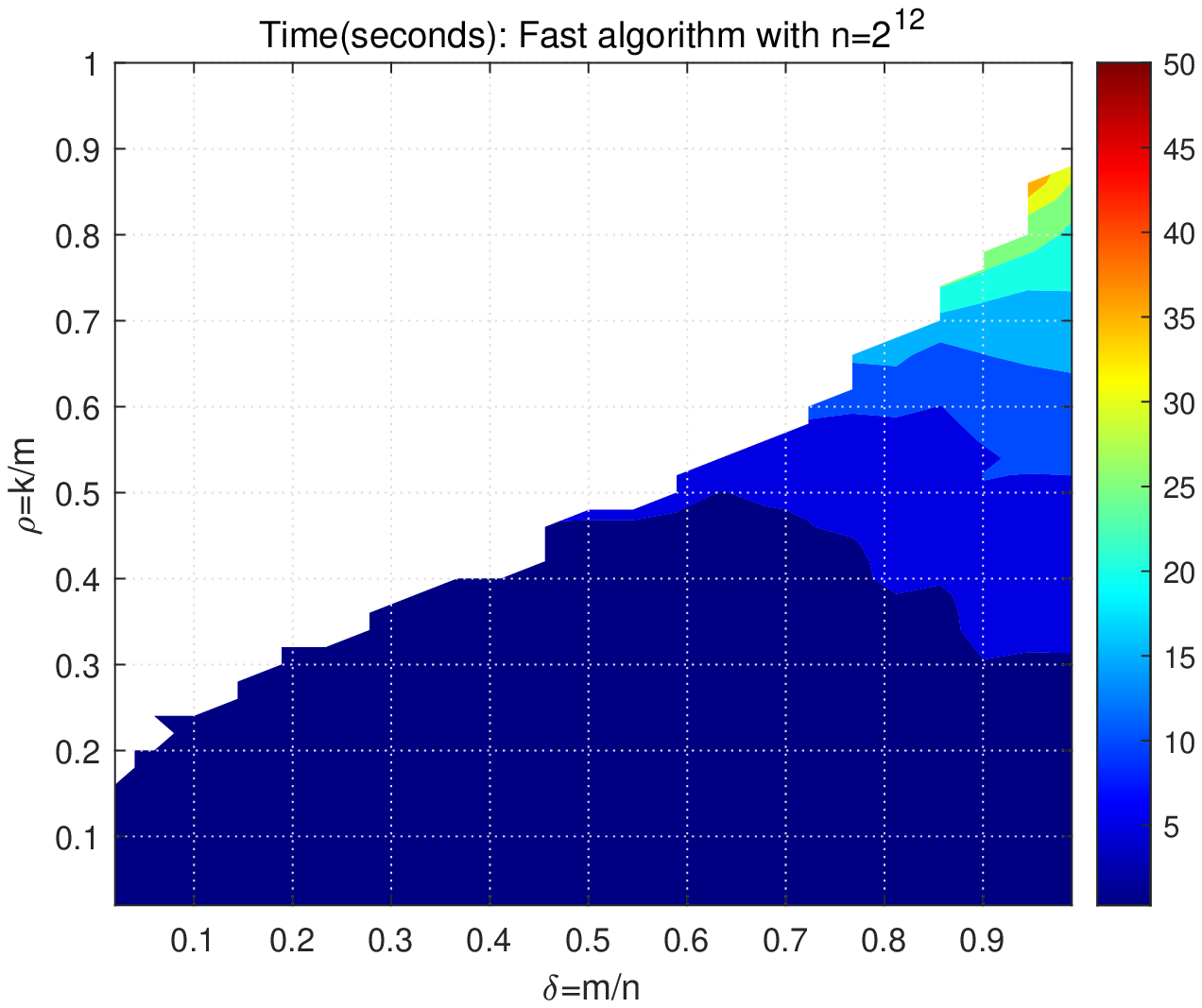}
                        \end{minipage}
                        }
               \subfigure[HBHTP]{
               \begin{minipage}[t]{0.3\linewidth}
               \centering
               \includegraphics[width=\textwidth,height=0.8\textwidth]{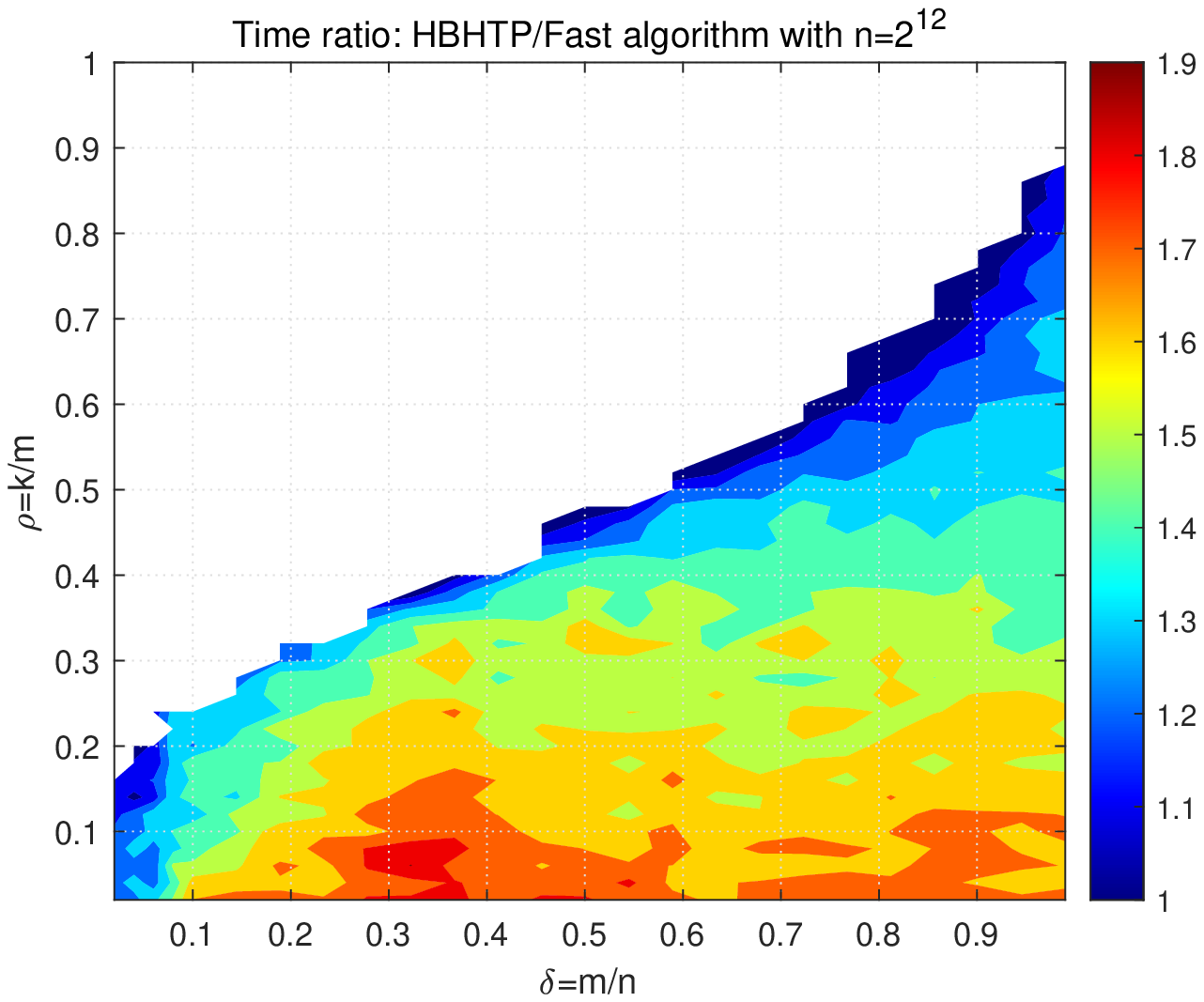}
               \end{minipage}
               }
                 \subfigure[HBHT]{
                              \begin{minipage}[t]{0.3\linewidth}
                              \centering
                              \includegraphics[width=\textwidth,height=0.8\textwidth]{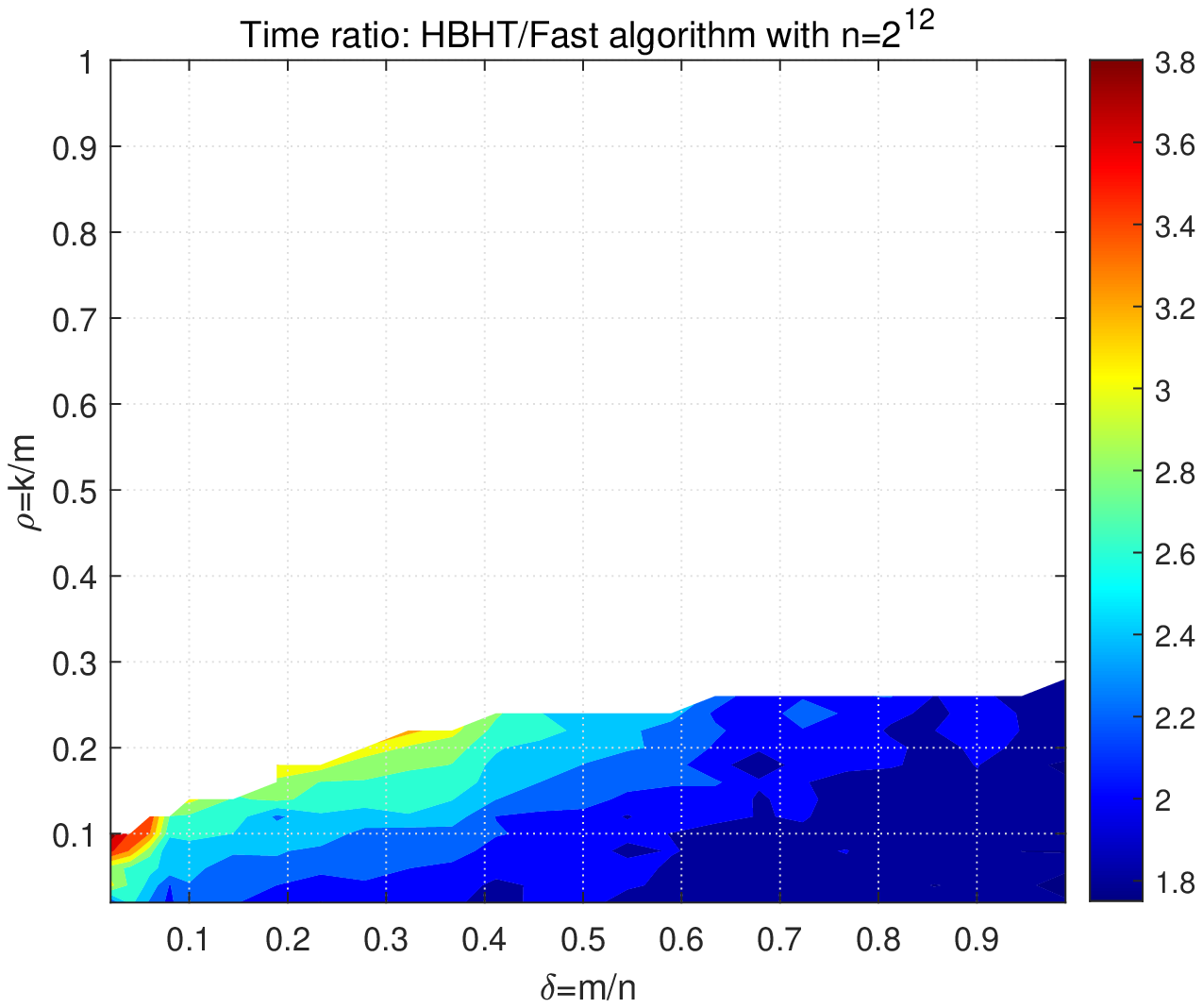}
                              \end{minipage}
                              }
              \subfigure[HTP]{
                \begin{minipage}[t]{0.3\linewidth}
                \centering
                \includegraphics[width=\textwidth,height=0.8\textwidth]{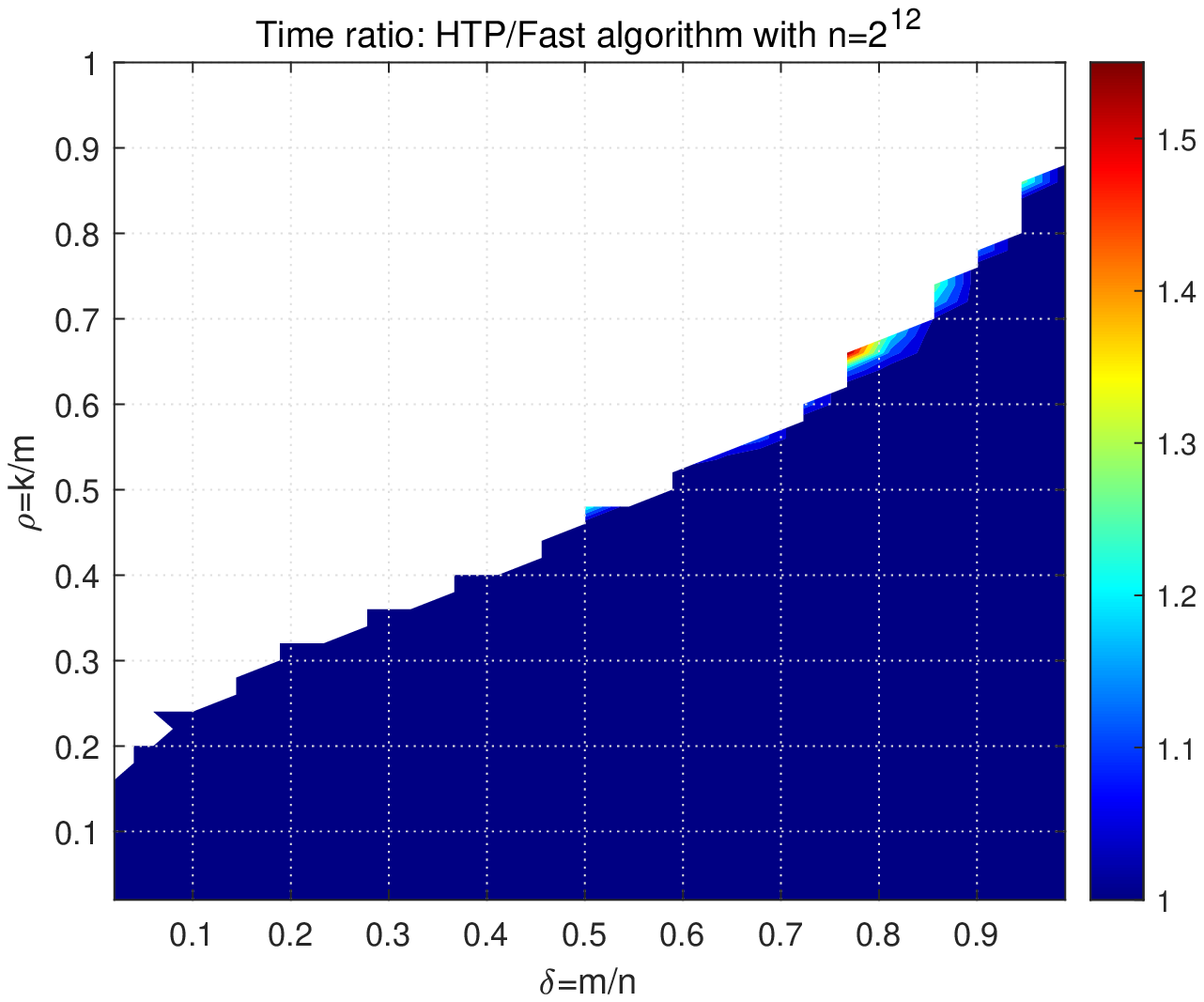}
                \end{minipage}
                }
             \subfigure[SP]{
                \begin{minipage}[t]{0.3\linewidth}
                \centering
                \includegraphics[width=\textwidth,height=0.8\textwidth]{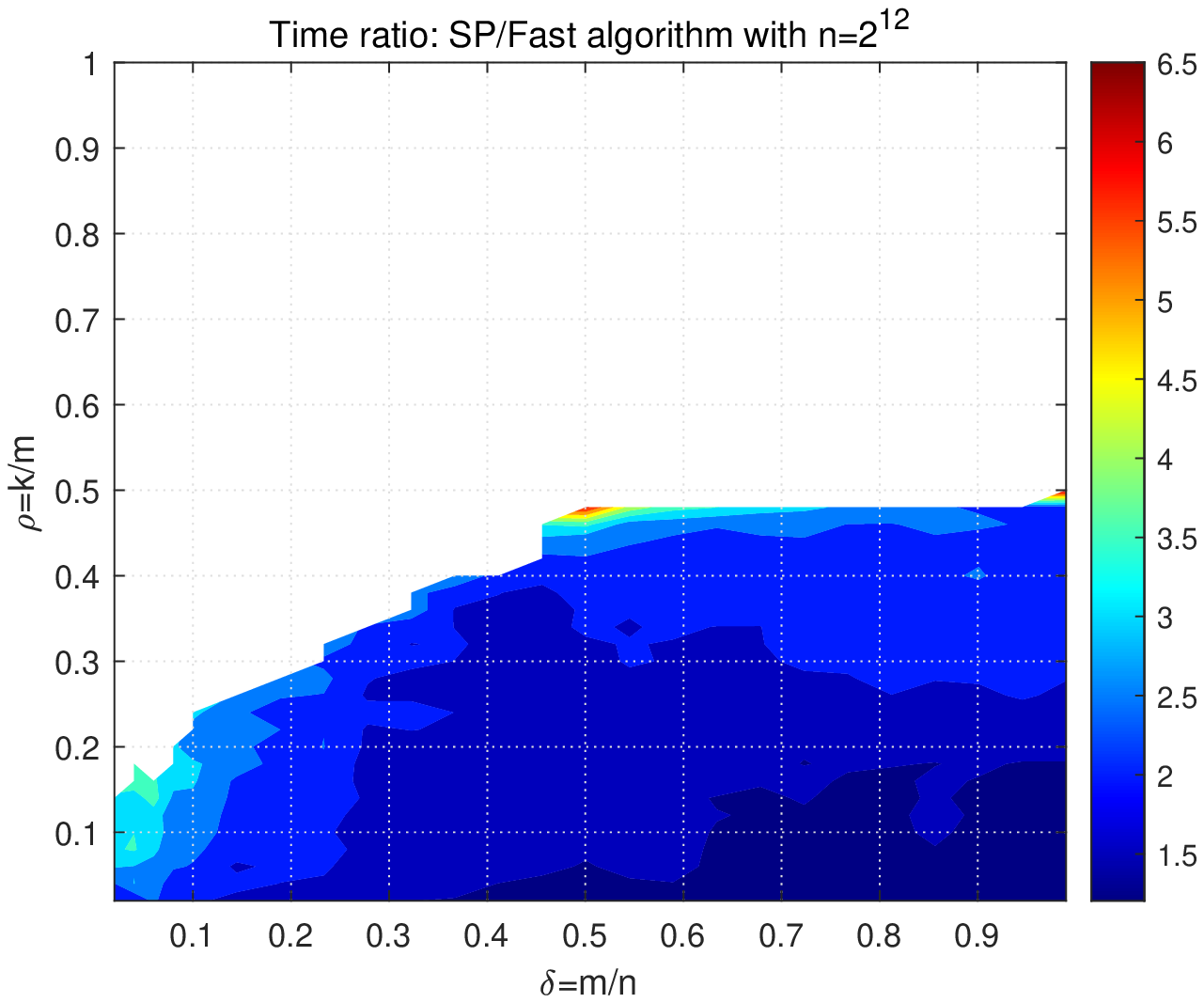}
                \end{minipage}
                }
              \subfigure[CoSaMP]{
                \begin{minipage}[t]{0.3\linewidth}
                \centering
                \includegraphics[width=\textwidth,height=0.8\textwidth]{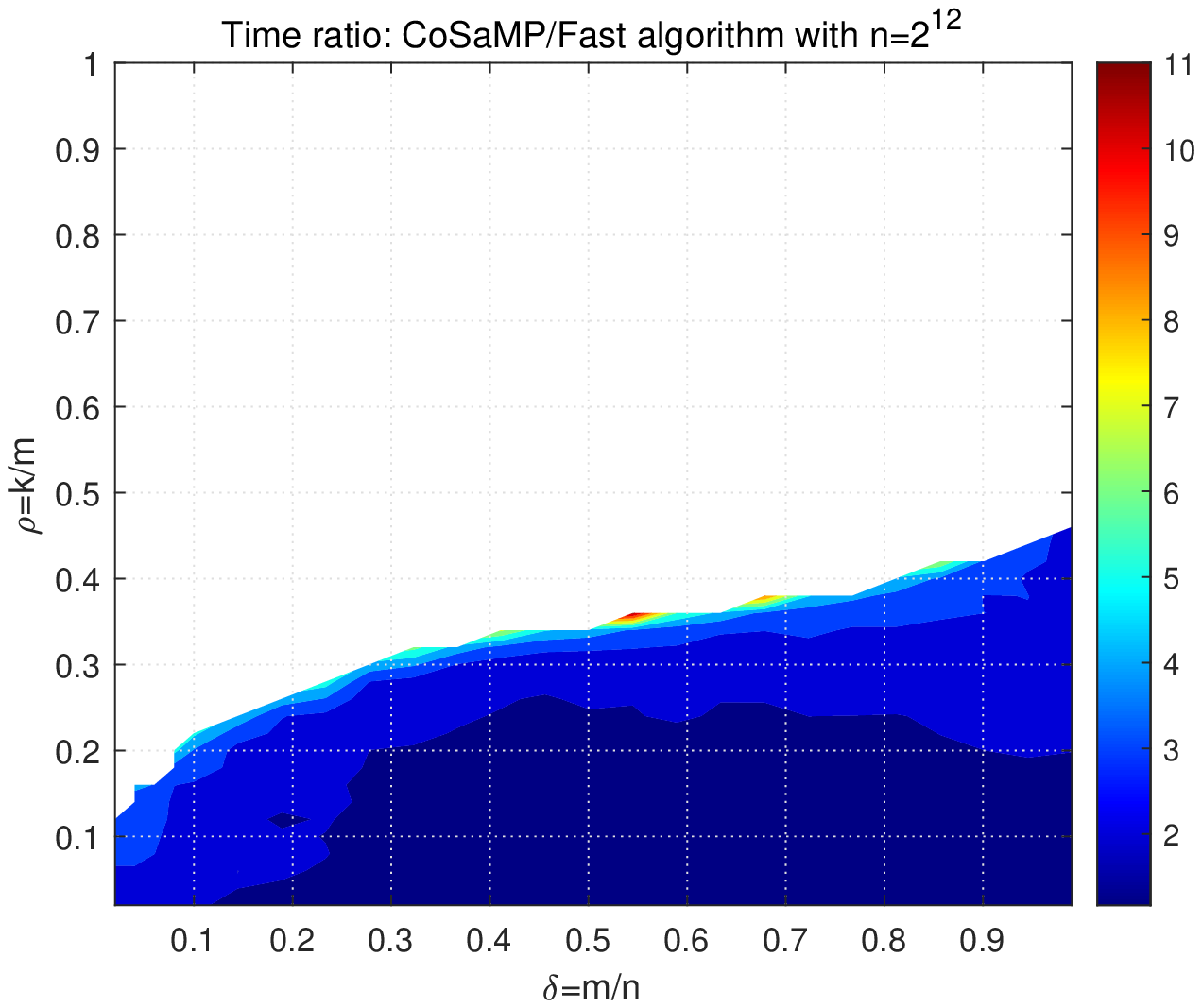}
                \end{minipage}
                }
               \caption{ (a) The minimum average time of algorithms; (b)-(f) The ratios of average time for several algorithms against the fastest one.  }\label{fig-time-ratio}
               \end{figure}
After identifying the fastest algorithm, further information on the  average recover time of algorithms are  given in Fig.  \ref{fig-time-ratio}. The minimum average recovery time  taken by an algorithm is displayed in Fig. \ref{fig-time-ratio}(a). When $\rho\leq0.3$, the minimal average run time  of algorithms is close to each other for any $\delta\in(0,1)$.  However, when $\rho> 0.3$, we see that the larger the value of $ \rho$, the more average run time is required by the algorithm when $\delta>0.6. $ The ratios of the average recovery time for the algorithms HBHTP, HBHT, HTP, SP  and CoSaMP against that of the fastest algorithm are displayed in Fig. \ref{fig-time-ratio}(b)-(f), respectively.   Fig. \ref{fig-time-ratio}(b) shows that the larger the value of $\rho$, the smaller the ratio for a fixed  $\delta$, and the  ratio for HBHTP is less than 1.5 when $ \rho\geq 0.25$ or $\delta\leq0.1.$ By contrast, Fig. \ref{fig-time-ratio}(c)-(f) show that the larger the value of $\rho,$ the larger the ratios for those four algorithms. This phenomenon indicates that HBHTP might work better than other algorithms when the sparsity level $k$ is relatively high. We also observe that HBHTP and HTP are comparable to each other, and that HBHT, SP and  CoSaMP often consume more than twice of the minimal average time. One can also observe that the ratios for SP and CoSaMP can be three and five times higher, respectively, when $\rho$ is large.

\begin{figure}[htbp]
          \centering
          \subfigure[$\delta=0.2780$]{
          \begin{minipage}[t]{0.3\linewidth}
          \centering
          \includegraphics[width=\textwidth,height=0.9\textwidth]{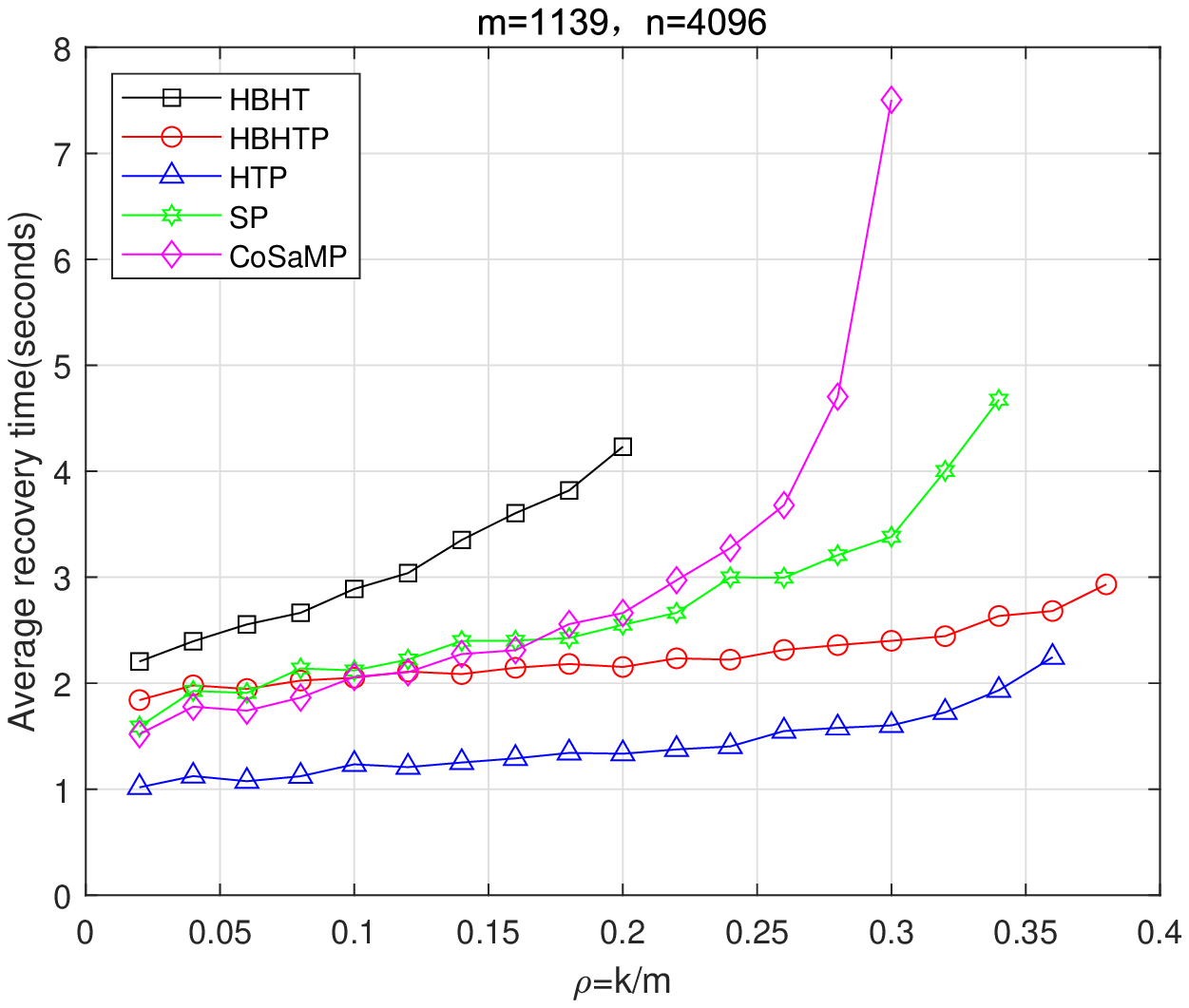}
          \end{minipage}
          }
          \subfigure[$\delta=0.5005$]{
          \begin{minipage}[t]{0.3\linewidth}
          \centering
          \includegraphics[width=\textwidth,height=0.9\textwidth]{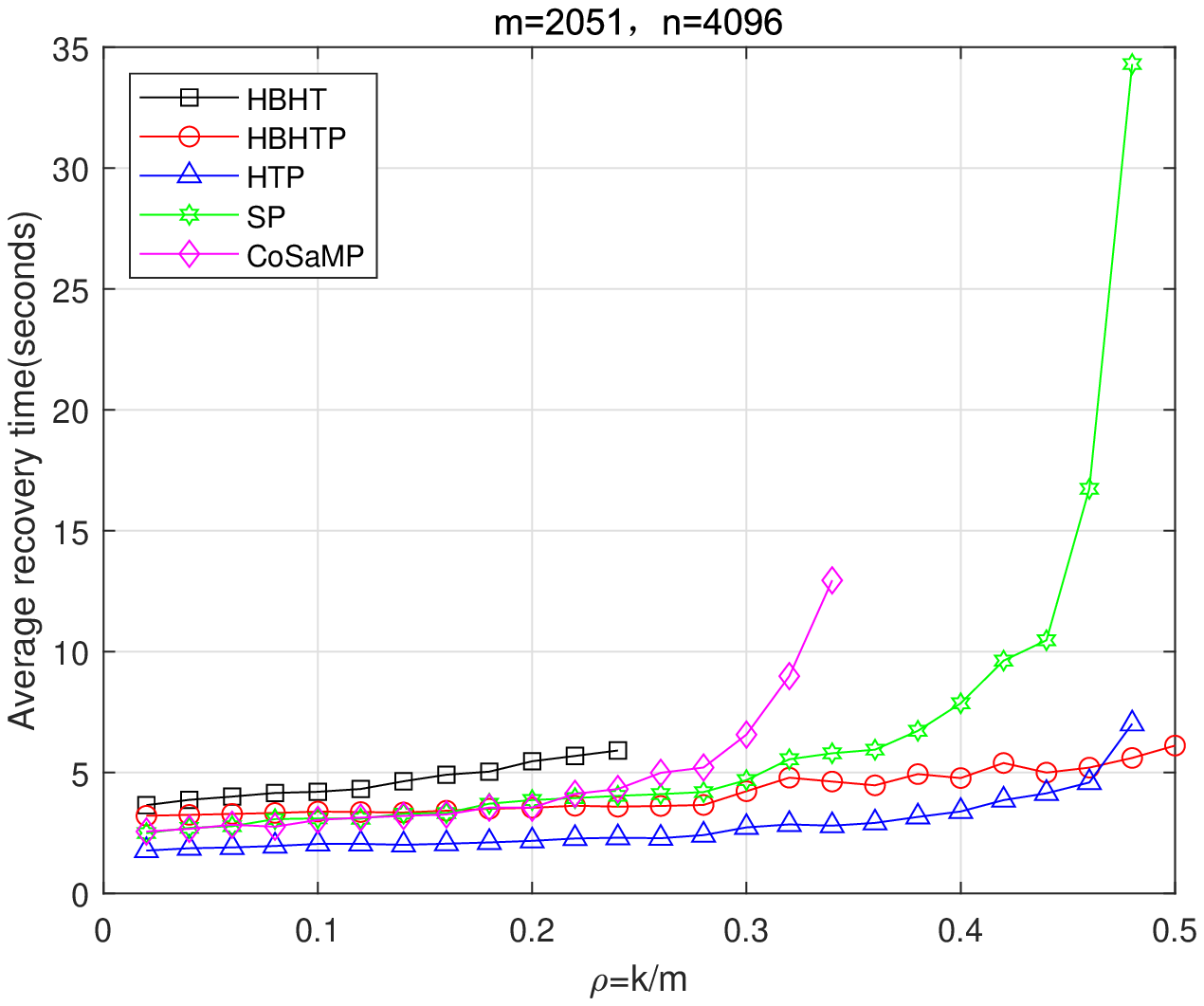}
          \end{minipage}
          }
           \subfigure[$\delta=0.7230$]{
                    \begin{minipage}[t]{0.3\linewidth}
                    \centering
                    \includegraphics[width=\textwidth,height=0.9\textwidth]{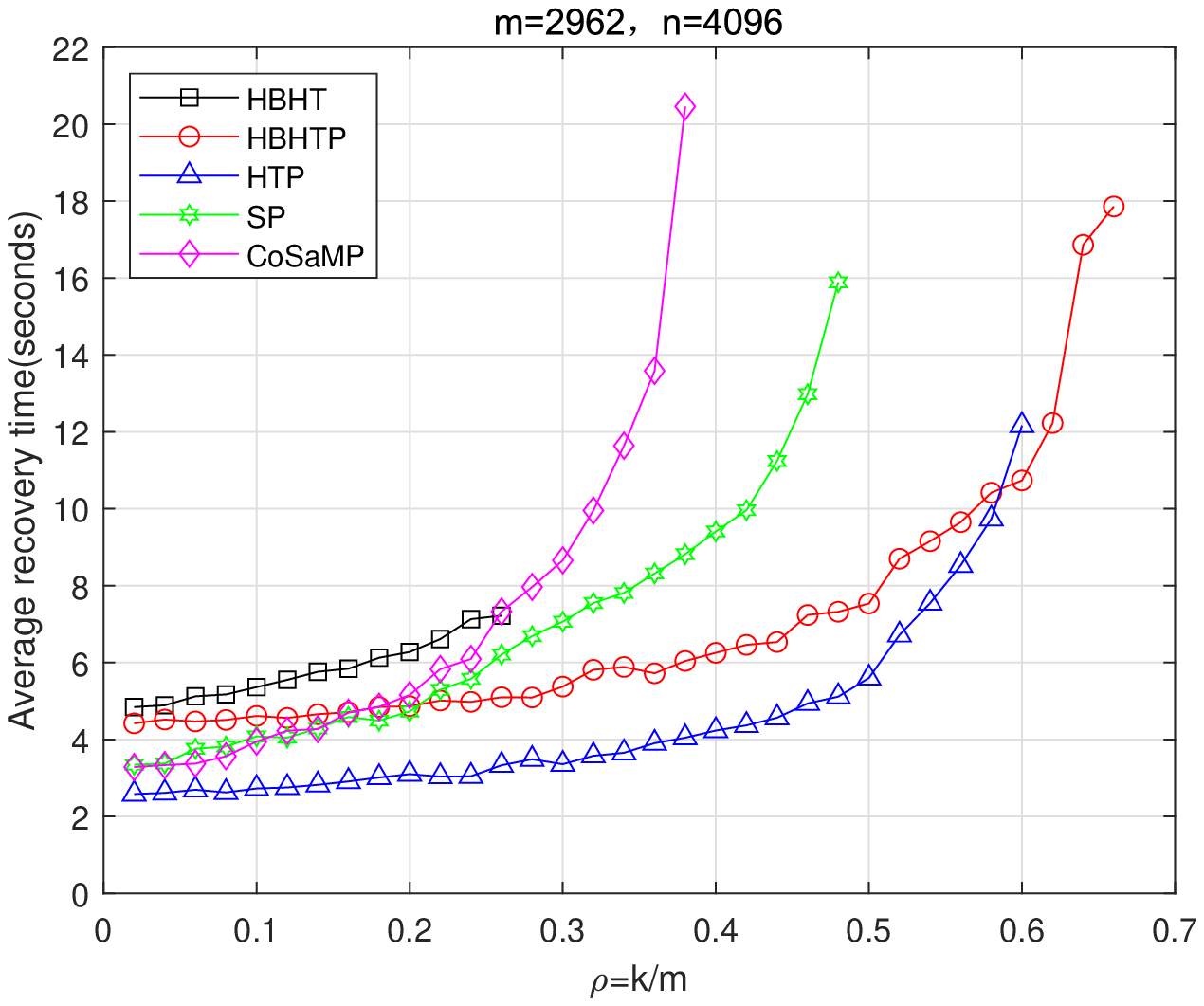}
                    \end{minipage}
                    }
          \caption{Average recovery time with respect to the change of  $\rho$ under three different fixed values of $\delta.$  }\label{fig-time-fixm}
          \end{figure}

   Finally, we demonstrate the change of average recovery time of algorithms against the factor $\rho.$ The results for three different parameters $\delta\in\{0.2780,0.5005,0.7230\}$ are given in Fig. \ref{fig-time-fixm}. For $\rho\leq0.2$, the average recover time of HBHTP, SP and CoSaMP are similar to each other.  For $\rho\in[0.2,0.5]$, the time consumed by HBHTP and HTP increases slowly compared to that of SP and CoSaMP as the sparsity level $k$ increases. Moreover, the computational time of HTP approaches and surpasses that of HBHTP for $\rho\geq0.4$ in Fig. \ref{fig-time-fixm}(b) and
    for $\rho\geq0.5$ in Fig. \ref{fig-time-fixm}(c), respectively.  Finally, we find that only HBHTP is typically able to recover the sparse signals fell into the region of the far right of Fig. \ref{fig-time-fixm}(a)-(c). This provides some evidence to show that the HBHTP might admit a certain advantage in sparse signal recovery over several existing  algorithms especially when  $\rho$ is relatively large.

\section{Conclusions}\label{conclusion}

Incorporating the heavy-ball acceleration technique into the  IHT and HTP methods leads to  the HBHT and HBHTP algorithms for sparse signal recovery. The guaranteed performance of these algorithms has been established  under the RIP assumption and certain conditions for the proper choice of the algorithmic parameters. The finite convergence and recovery stability of the algorithms were also shown in this paper.  The numerical performance of the algorithms has been investigated from several difference perspectives including the recovery success rate, average number of iterations and computational times. Comparison of the proposed algorithms with  a few existing ones is also made through the phase transition analysis including the phase transition cure and algorithm selection map. Simulations on random problem instances indicate that under proper choices of the algorithmic parameters, the algorithm HBHTP  is an efficient algorithm for sparse signal recovery and it may outperform several existing algorithms in many cases.

\end{document}